\RequirePackage{etex}
\documentclass[11pt,letterpaper]{article} % following FOCS instruction, added 11pt and letterpaper

%%% AMS %%%
\usepackage{amsmath,amsthm,amssymb,mathtools}

%%% amsmath -> algorithm -> hyperref -> algpseudocode -> cleveref -> autonum %%%
\usepackage{algorithm}

\usepackage[hidelinks,plainpages=false,colorlinks=true,citecolor=Navy,linkcolor=Maroon,urlcolor=Orchid]{hyperref} %hypertexnames=false
%[hidelinks,plainpages=false,colorlinks=true,citecolor=Navy,linkcolor=Maroon,urlcolor=Orchid]

\usepackage[noend]{algpseudocode}

\algnewcommand{\Break}{\textbf{break}}

% Bibliography
% Bibliography
\usepackage{etoolbox}
\cslet{blx@noerroretextools}\empty%
\usepackage[date=year,eprint=false,doi=false,isbn=false,maxcitenames=2,natbib,url=false,sorting=nyt,style=ieee-alphabetic,backref=true]{biblatex}

\DeclareSourcemap{
    \maps[datatype=bibtex]{
        \map{
            \step[fieldset=editor, null]
        }
    }
}
\DefineBibliographyStrings{english}{%
  backrefpage = {cited on page},% originally "cited on page"
  backrefpages = {cited on pages},% originally "cited on pages"
}
\addbibresource{bibfile.bib}

\usepackage{cleveref}
 % 3 つ以上連番で参照するときの表記を設定
%\newcommand{\crefmiddleconjunction}{,}
%\newcommand{\creflastconjunction}{, and}
\crefname{equation}{eq.}{eqs.}
\Crefname{equation}{Eq.}{Eqs.}
\crefname{algorithm}{Algorithm}{Algorithms}
\Crefname{algorithm}{Algorithm}{Algorithms}
\crefname{step}{Step}{Steps}
\Crefname{step}{Step}{Steps}
\crefname{theorem}{Theorem}{Theorems}
\Crefname{theorem}{Theorem}{Theorems}
\crefname{lemma}{Lemma}{Lemmas}
\Crefname{lemma}{Lemma}{Lemmas}
\crefname{corollary}{Corollary}{Corollaries}
\Crefname{corollary}{Corollary}{Corollaries}
\crefname{proposition}{Proposition}{Propositions}
\Crefname{proposition}{Proposition}{Propositions}
\crefname{assumption}{Assumption}{Assumptions}
\Crefname{assumption}{Assumption}{Assumptions}
\crefname{definition}{Definition}{Definitions}
\Crefname{definition}{Definition}{Definitions}
\crefname{figure}{Figure}{Figures}
\Crefname{figure}{Figure}{Figures}
\crefname{table}{Table}{Tables}
\Crefname{table}{Table}{Tables}
\crefname{section}{Section}{Sections}
\Crefname{section}{Section}{Sections}
\crefname{appendix}{Appendix}{Appendices}
\Crefname{appendix}{Appendix}{Appendices}

\usepackage{autonum}

%%% fonts %%%
\usepackage[utf8]{inputenc} % input from keybord is utf8
\usepackage[T1]{fontenc}    % use 8-bit T1 fonts
\usepackage{amsfonts}       % blackboard math symbols
\usepackage{nicefrac}       % compact symbols for 1/2, etc.
\usepackage{microtype}      % microtypography
\usepackage{bm}             % bold italic
\usepackage{bbm}

%%% figures %%%
%\usepackage[dvipdfmx]{graphicx}
\usepackage{graphicx}
\usepackage[svgnames]{xcolor}
\usepackage{subcaption}
\usepackage{wrapfig}
\usepackage{svg}

%%% tables %%%
\usepackage{tabularx}       % tabular with specified width
\usepackage{booktabs}       % professional-quality tables

% for icml that requires algorithmc
%\RequirePackage{letltxmacro}
%\RequirePackage{xstring}
%\LetLtxMacro\latexRequirePackage\RequirePackage
%\renewcommand\RequirePackage[2][]{\IfStrEq{#2}{algorithmic}{}{\latexRequirePackage[#1]{#2}}}
%\latexRequirePackage{algorithmicx}

%%% citation %%%
%\usepackage[numbers, compress]{natbib}
%\usepackage[numbers]{natbib}
%\usepackage{natbib}

%%% other packages %%%
\usepackage{thmtools}
\usepackage{url} % simple URL typesetting

\usepackage{braket}
\usepackage{mleftright}
\mleftright
\usepackage{multirow}
\usepackage{lipsum}
\allowdisplaybreaks
\usepackage{lscape}
\usepackage{tcolorbox}
\usepackage{xifthen}
\usepackage{xargs}
\usepackage{xspace}
\usepackage{enumerate}

%%% tikz %%%
\usepackage{tikz}
\usetikzlibrary{backgrounds}
\usetikzlibrary{arrows}
\usetikzlibrary{shapes,shapes.geometric,shapes.misc}

% this style is applied by default to any tikzpicture included via \tikzfig
\tikzstyle{tikzfig}=[baseline=-0.25em,scale=0.5]

% these are dummy properties used by TikZiT, but ignored by LaTeX
\pgfkeys{/tikz/tikzit fill/.initial=0}
\pgfkeys{/tikz/tikzit draw/.initial=0}
\pgfkeys{/tikz/tikzit shape/.initial=0}
\pgfkeys{/tikz/tikzit category/.initial=0}

% standard layers used in .tikz files
\pgfdeclarelayer{edgelayer}
\pgfdeclarelayer{nodelayer}
\pgfsetlayers{background,edgelayer,nodelayer,main}

% style for blank nodes
\tikzstyle{none}=[inner sep=0mm]

% Node styles
\tikzstyle{rectangle}=[fill=white, draw=black, shape=rectangle]
\tikzstyle{circle}=[fill=white, draw=black, shape=circle]
\tikzstyle{vertex}=[fill=black, draw=none, shape=circle]
\tikzstyle{textbox}=[fill=white, draw=none, shape=rectangle]

% Edge styles
\tikzstyle{line}=[-, fill=none]
\tikzstyle{rightarrow}=[->]
\tikzstyle{leftarrow}=[fill=none, <-]

%%% basic symbols %%%

\newcommand{\Ccal}{\mathcal{C}}
\newcommand{\Dcal}{\mathcal{D}}

\newcommand{\Etl}{\tilde{E}}
\newcommand{\Ftl}{\tilde{F}}
\newcommand{\Gtl}{\tilde{G}}
\newcommand{\Htl}{\tilde{H}}

\newcommand{\Otl}{\tilde{O}}

\newcommand{\wtl}{\tilde{w}}

\newcommand{\ztl}{\tilde{z}}

%\renewcommand{\hbar}{\overline{h}}

%%% ell %%%

%%% exceptions %%%
\newcommand{\R}{\mathbb{R}}

\newcommand{\Z}{\mathbb{Z}}
%\newcommand{\E}{\mathbb{E}}
%\renewcommand{\P}{\mathbb{P}}

%%% vectors and matrices %%%

%\renewcommand{\sb}{{\bm{s}}}

%%% special matrices and vectors %%%

\newcommand{\ones}{\bm{1}}

% %%% theorem %%%
\newtheorem{theorem}{Theorem}[section]% Theorem
\newtheorem{lemma}[theorem]{Lemma}% Lemma
\newtheorem{proposition}[theorem]{Proposition}% Proposition
% Assumption
\newtheorem{corollary}[theorem]{Corollary}% Corollary
% %\renewcommand{\thecor}[theorem]{\thethm.\alph{cor}}

%
\theoremstyle{definition}
\newtheorem{definition}[theorem]{Definition}% Definition
% Remark
% Example
% \newtheorem{note}{Note}% Note

%%% optimization %%%

\DeclareMathOperator*{\argmax}{argmax}
\DeclareMathOperator*{\argmin}{argmin}

%%% math operator %%%

\DeclareMathOperator{\poly}{poly}

\DeclareMathOperator{\E}{\mathbb{E}}

%%% paired operator %%%
\DeclarePairedDelimiter\abs{\lvert}{\rvert}

\DeclarePairedDelimiter\ceil{\lceil}{\rceil}

\let\set\relax
\DeclarePairedDelimiter\set{\{}{\}}
\let\Set\relax
\DeclarePairedDelimiterX\Set[2]{\{}{\}}{\mspace{2mu}{#1}\;\delimsize|\;{#2}\mspace{2mu}}

\DeclarePairedDelimiterX\Brc[2]{[}{]}{\mspace{2mu}{#1}\;\delimsize|\;{#2}\mspace{2mu}}
\DeclarePairedDelimiter\prn{(}{)}
\DeclarePairedDelimiterX\Prn[2]{(}{)}{\mspace{2mu}{#1}\;\delimsize|\;{#2}\mspace{2mu}}

%%% memo %%%

 % add "\\" to vscode todo-tree.regex

%%% iffigure %%%
\newif\iffigure
\figurefalse
%\figuretrue

\usepackage{libertine}
\usepackage{fullpage} % 1 inch all around by default
\usepackage{mydefs}

\newcommand{\DHSparsify}{{\sc DH-Sparsify}}
\newcommand{\DHOnestep}{{\sc DH-Onestep}}
\newcommand{\CoresetFinder}{{\sc CoresetFinder}}
\newcommand{\UHSparsify}{{\sc UH-Sparsify}}
\newcommand{\UHOnestep}{{\sc UH-Onestep}}

\newcommand{\Tend}{i_{\mathrm{end}}}
\newcommand{\sfn}{g} % submodular function. f is used as a hyperarc
\newcommand{\js}{s} % lower bound proof index
\newcommand{\GH}{G}

\title{Nearly Tight Spectral Sparsification of Directed Hypergraphs\\ by a Simple Iterative Sampling Algorithm}

\author{%
\makebox[1.0\linewidth]{Kazusato Oko}\\
The University of Tokyo \\ Center for Advanced Intelligence Project, RIKEN \\
\href{mailto:oko-kazusato@g.ecc.u-tokyo.ac.jp}{oko-kazusato@g.ecc.u-tokyo.ac.jp}
\and
Shinsaku Sakaue \\
The University of Tokyo \\
\href{mailto:sakaue@mist.i.u-tokyo.ac.jp}{sakaue@mist.i.u-tokyo.ac.jp}
\and
Shin-ichi Tanigawa \\
The University of Tokyo \\
\href{mailto:tanigawa@mist.i.u-tokyo.ac.jp}{tanigawa@mist.i.u-tokyo.ac.jp}
}

\date{}

\begin{document}

\maketitle

\begin{abstract}
  Spectral hypergraph sparsification, an attempt to extend well-known spectral graph sparsification to hypergraphs, has been extensively studied over the past few years. For undirected hypergraphs, Kapralov, Krauthgamer, Tardos, and Yoshida~(2022) have proved an $\varepsilon$-spectral sparsifier of the optimal $O^*(n)$ size, where $n$ is the number of vertices and $O^*$ suppresses the $\varepsilon^{-1}$ and $\log n$ factors. For directed hypergraphs, however, the optimal sparsifier size has not been known. Our main contribution is the first algorithm that constructs an $O^*(n^2)$-size $\varepsilon$-spectral sparsifier for a weighted directed hypergraph. Our result is optimal up to the $\varepsilon^{-1}$ and $\log n$ factors since there is a lower bound of $\Omega(n^2)$ even for directed graphs. We also show the first non-trivial lower bound of $\Omega(n^2/\varepsilon)$ for general directed hypergraphs. The basic idea of our algorithm is borrowed from the spanner-based sparsification for ordinary graphs by Koutis and Xu~(2016). Their iterative sampling approach is indeed useful for designing sparsification algorithms in various circumstances. To demonstrate this, we also present a similar iterative sampling algorithm for undirected hypergraphs that attains one of the best size bounds, enjoys parallel implementation, and can be transformed to be fault-tolerant.
\end{abstract}

\section{Introduction}\label{sec:introduction}

Graph sparsification is a fundamental idea for developing efficient algorithms and data structures.
One of the earliest developments in this context is a cut sparsifier due to Bencz{\'u}r~and~Karger~\citep{benczur1996approximating}, which approximately keeps the size of cuts (by adjusting edge weights).
Spielman~and~Teng~\citep{spielman2011spectral} introduced a generalized notion called a spectral sparsifier, which approximately preserves the spectrum of the Laplacian matrix of a given graph.
Since this seminal work, spectral sparsification of graphs has been extensively studied and used in many applications.
See, e.g., \citep{vishnoi2013Lxb,teng2016scalable,spielman2019spectral} for more details on spectral graph sparsification.

This paper studies spectral sparsification of undirected/directed hypergraphs.
A hypergraph is a standard tool for generalizing graph-theoretic arguments in a set-theoretic setting, and extending a theory for graphs to hypergraphs is a common theoretical interest.
Besides, many hypergraph-based methods \citep{hein2013total,yadati2019hypergcn,takai2020hypergraph,yadati2020nhp,zhang2020re} have recently been attracting much attention as extensions of graph-based methods, which also increases the demand for advancing the theory of spectral hypergraph sparsification.

An {\em undirected hypergraph} is defined by a tuple $H = (V, F, z)$, where $V$ is a finite vertex set, $F$ is a set of subsets of $V$, and $z\colon F\to\R_+$.
Each element in $F$ is called a hyperedge and $z_f:=z(f)$ is called the weight of $f\in F$ in $H$.
The Laplacian $L_H\colon \R^V\to\R^V$  of $H$  is defined as a nonlinear operator such that
\begin{align}
  x^\top L_H(x) = \sum_{f\in F} z_f \max_{u, v\in f}(x_u-x_v)^2
  \quad
  \text{for all $x\in \R^V$.}
\end{align}
If $x$ is restricted to $\{0,1\}^V$, $x^\top L_H(x)$ represents the cut function of $H$. In this sense, the above definition gives a proper extension of the ordinary graph Laplacian.
(Here, $x^\top L_H(x)$ is an abuse of notation since $L_H(x)$ is not defined uniquely; nevertheless, this notation is widely used in analogy to the case of ordinary graphs.)

A {\em directed hypergraph}  $H = (V, F, z)$ consists of a finite set $V$, a set $F$ of hyperarcs, and $z\colon F\ni f\mapsto z_f\in \R_+$,
where each {\em hyperarc} $f\in F$ is a pair $(t(f), h(f))$ of non-empty subsets of $V$, called the \textit{tail} and the \textit{head} (which may not be disjoint).
The Laplacian $L_H\colon \R^V\to\R^V$  of $H$ is defined as a nonlinear operator such that
\begin{align}
  x^\top L_H(x) = \sum_{f\in F} z_f \max_{u\in t(f), v\in h(f)}(x_u-x_v)^2_+
  \quad
  \text{for all $x\in \R^V$,}
\end{align}
where $(\cdot)_+ = \max\set*{\cdot, 0}$ (and $(\cdot)_+^2 = \prn*{\max\set*{\cdot, 0}}^2$).
If $x \in \set*{0,1}^V$, the definition of directed hypergraph Laplacian $L_H$ also captures the cut function of $H$.
Importantly, cut functions of directed hypergraphs can represent a large class of submodular functions~\citep{fujishige2001realization}.\footnote{In fact, any set function can be represented as a cut function of some directed hypergraph if negative weights are allowed~\citep{fujishige2001realization}.}
Directed hypergraphs are also useful for modeling higher-order directional relations that appear in, e.g., propositional logic \citep{gallo1993directed} and causal inference~\citep{javidian2020on}, which have constituted a motivation for studying spectral properties of directed hypergraphs~\citep{chan2019diffusion}.

Given an undirected/directed hypergraph $H = (V, F, z)$ and $\eps \in (0, 1)$, a hypergraph $\Htl = (V, \Ftl, \ztl)$ is called an \textit{$\eps$-spectral sparsifier} of $H$ if it satisfies $\Ftl \subseteq F$ and
\begin{align}
  \label{eq:Intro-Def-of-Sparsifier-hyper}
  (1-\eps)x^\top L_H (x) \leq x^\top L_{\Htl} (x) \leq (1+\eps)x^\top L_H (x)
  \quad
  \text{for all $x \in \R^V$.}
\end{align}
One of the big motivations for studying spectral sparsification of directed hypergraphs comes from the connection to the representation of submodular functions.
Since such a cut-function representation uses $\Omega(2^{|V|})$ hyperarcs in general, a spectral sparsifier of a directed hypergraph can serve as a compact approximate representation (see \cref{sec:submodular} for more details).

Soma~and~Yoshida~\citep{soma2019spectral} initiated the study of spectral hypergraph sparsification and gave an algorithm for constructing an $\eps$-spectral sparsifier with $O(n^3\log n/\eps^2)$ hyperedges, where $n$ is the number of vertices.
Unlike ordinary graphs, the hypergraph size can be as large as $2^n$ (and $4^{n}$ if directed).
Thus, obtaining a polynomial bound is already nontrivial.
For undirected hypergraphs, the result by Soma~and~Yoshida~\citep{soma2019spectral} has been improved to $\Otl(nr^3/\eps^2)$~\citep{bansal2019new} and to $\Otl(nr/\eps^{O(1)})$~\citep{kapralov2021towards},\footnote{We use $\Otl$ to hide $\poly(\log (n/\eps))$ factors.} where $r$ denotes the maximum size of a hyperedge in the input hypergraph $H$ and is called the {\em rank} of $H$.
Kapralov~et~al.~\citep{kapralov2021spectral} has removed the dependence on $r$ and obtained a nearly linear bound of $\Otl(n/\eps^4)$.
Very recently, an improved bound of $\Otl(n/\eps^2)$ has been shown in~\cite{Jambulapati2022-qh,Lee2022-rp} (concurrently to our work).
This upper bound is nearly tight since the $\Omega(n/\eps^2)$ lower bound applies even to ordinary graphs \citep{andoni2016sketching,carlson2019optimal}.

As for spectral sparsification of directed hypergraphs, Soma~and~Yoshida~\citep{soma2019spectral} showed that their algorithm is also applicable, and hence the $O(n^3\log n/\eps^2)$ bound also holds for directed hypergraphs.
Later, Kapralov~et~al.~\citep{kapralov2021towards} gave an $O(n^2r^3\log^2 n/\eps^2)$ bound for unweighted directed hypergraphs, where the rank $r$ is defined by $r = \max_{f \in F} \set*{|h(f)| + |t(f)|}$ in the directed case.
Recently, for the case of cut sparsification,
Rafiey~and~Yoshida~\citep{rafiey2022sparsification} obtained sparsifiers with $O(n^2r^2/\eps^2)$ hyperarcs.\footnote{This bound follows from their general result on sparsification of submodular functions.}
%\footnote{This bound is obtained by applying their general bound for sparsification of submodular functions to cut functions.}
See \cref{table:Intro-previous-studies-directed}.
On the other hand, a well-known $\Omega(n^2)$ lower bound for directed graphs \citep{cohen2017almost} is valid for directed hypergraphs.
Therefore, a central open question in this context is: \textit{can we obtain an upper bound of $\tilde{O}(n^2/\eps^{O(1)})$ that has no dependence on the rank $r$?}

\subsection{Main Results and Idea}\label{subsec:our-results}

Our main contribution is the first algorithm that constructs an $\eps$-spectral sparsifier with $\Otl(n^2/\eps^2)$ hyperarcs for a directed hypergraph, thus settling the aforementioned question.

\begin{table}[tb]
    \centering
    \caption{Bounds on sparsification of directed hypergraphs. In the time complexity, additive $\poly(n, 1/\eps)$ terms are omitted. Note that Kapralov~et~al.~\citep{kapralov2021towards} assume the unweighted case.}
    \label{table:Intro-previous-studies-directed}
    \begin{tabular}{cccc} \toprule
    Method & Cut/Spectral & Bound & Time complexity\\ \midrule
    Soma~and~Yoshida~\citep{soma2019spectral} & Spectral & $O(n^3\log n/\eps^2)$ & $O(mr^2)$  \\
    Kapralov~et~al.~\citep{kapralov2021towards} & Spectral & $O(n^2r^3\log^2 n/\eps^2)$ & $O(mr^2)$  \\
    Rafiey~and~Yoshida~\citep{rafiey2022sparsification} & Cut & $O(n^2r^2/\eps^2)$ & $O(m2^r)$  \\
    This paper & Spectral & $O(n^2 \log^3 (n/\eps)/\eps^2)$ & $O(mr^2)$  \\
    \bottomrule
    \end{tabular}
\end{table}

%\begin{theorem}\label[theorem]{theorem:Intro-DH-main}
%  Given a directed hypergraph with $n$ vertices and $\eps \in (0, 1)$, there is a polynomial-time algorithm that returns an $\eps$-spectral sparsifier with $O\prn*{\frac{n^2}{\eps^2}\log^3 \frac{n}{\eps}}$ hyperarcs with probability at least $1 - O\prn*{\frac{1}{n}}$.
%\end{theorem}

\begin{theorem}\label[theorem]{theorem:DHS-main}
  Let $H=(V,F,z)$ be a directed hypergraph with $n$ vertices.
  For any $\eps \in (0, 1)$, our algorithm (shown in \cref{alg:DHS-iterative}) returns an $\epsilon$-spectral sparsifier $\Htl=(V,\tilde{F},\tilde{z})$ of $H$ such that $|\tilde{F}| = O\left(\frac{n^2}{\eps^2}\log^3\frac{n}{\eps}\right)$ with probability at least $1-O\left(\frac{1}{n}\right)$.
  Its time complexity is $O(mr^2)$ with probability at least $1-O\left(\frac{1}{n}\right)$, where $m = |F|$ and $r$ is the rank of $H$.
\end{theorem}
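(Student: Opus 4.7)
The strategy is to adapt the iterative spanner-based sparsification of Koutis and Xu to the directed-hypergraph setting. I would design \DHSparsify\ as follows. Initialize $\Htl \leftarrow H$, and for $T = O(\log(m/n^2))$ rounds, at each round (i) compute a \emph{directed hypergraph spanner} $G\subseteq \Htl$ of size $\tilde O(n^2)$ that spectrally dominates every hyperarc not in $G$, (ii) retain every $f \in G$ with its current weight, and (iii) for each $f \in \Htl \setminus G$, independently keep $f$ with probability $1/2$ and double its weight, otherwise discard it. Since the spanner is preserved intact while the complement shrinks geometrically in expectation, after $T$ rounds the surviving hyperarc count is $\tilde O(n^2)$ up to the $\epsilon^{-2}\log^2(n/\epsilon)$ overhead forced by the concentration step.

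The first key ingredient is the right notion of spanner. I would define $G$ to be a subhypergraph such that for every $f \in F\setminus G$ the quadratic contribution $z_f \max_{u\in t(f),\, v\in h(f)}(x_u-x_v)_+^2$ is dominated, uniformly in $x$, by the sum of $\tilde O(\log n)$ contributions from hyperarcs of $G$. This is the directed-hypergraph analogue of the effective-resistance bound used for ordinary graphs in Koutis--Xu. I expect to construct such a $G$ greedily in $O(mr^2)$ time by processing hyperarcs in weight order and adding $f$ to $G$ only if no short ``spanner path'' of head-to-tail connections already dominates the quadratic form of $f$; the $\tilde O(n^2)$ size bound would follow from a pigeonhole/girth argument over the $n^2$ ordered vertex pairs.

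Next I would establish correctness of one sampling round. Because every non-spanner hyperarc has spectral leverage $\tilde O(1/\log n)$ relative to $G$, a scalar Bernstein bound shows that halving the non-spanner weights preserves $x^\top L_H(x)$ to within a $(1\pm\tilde O(\epsilon/\sqrt T))$ factor at any fixed $x$ with failure probability $\exp(-\Omega(n))$. To lift this pointwise statement to a uniform guarantee over all $x\in\R^V$ (which is what \eqref{eq:Intro-Def-of-Sparsifier-hyper} demands), I would use an $\epsilon$-net argument: the value of $x^\top L_H(x)$ depends on $x$ only through the combinatorial choice of the maximizing pair $(\arg\max_{u\in t(f)} x_u,\, \arg\min_{v\in h(f)} x_v)$ per hyperarc, so the relevant function class has at most $n^{2m}$ pieces and a net of size $\exp(\poly(n,1/\epsilon))$ suffices. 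A union bound over the $T$ rounds, together with a telescoping of the multiplicative errors, gives the final $(1\pm\epsilon)$ spectral guarantee with probability $1-O(1/n)$.

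The main obstacle is the nonlinearity of $L_H$: unlike ordinary graphs, one cannot directly invoke matrix Chernoff, so both the spanner's domination property and the per-round concentration must be formulated purely in terms of quadratic forms and then lifted uniformly by an $\epsilon$-net. Keeping the spanner at size $\tilde O(n^2)$ (rather than $\tilde O(n^2 r)$) while still granting the leverage bound needed to absorb the $\log$-scale net complexity is the central technical challenge. The $O(mr^2)$ running time would follow because each iteration halves the non-spanner hyperarcs in expectation, making the total work across rounds dominated by the first one, while a single greedy domination check per candidate hyperarc can be carried out in $O(r^2)$ time on the incremental spanner structure.
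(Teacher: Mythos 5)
Your high-level framework (iterative: keep a small "heavy" subset deterministically, sample the rest at $1/2$, repeat until $\tilde O(n^2)$ remains) matches the paper's \DHSparsify. However, two of your main ingredients do not work as stated, and both are precisely where the paper's real technical contribution lies.

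\textbf{The spanner notion does not transfer.} You propose a directed hypergraph spanner $G$ such that every $f\notin G$ is dominated "by a short spanner path of head-to-tail connections," and you invoke a girth/pigeonhole argument for the $\tilde O(n^2)$ size. This mirrors the undirected Koutis--Xu argument, which relies on effective resistance of an underlying linear Laplacian. Directed hypergraph Laplacians are nonlinear and have no underlying undirected graph supporting an effective-resistance/girth framework; the paper explicitly notes this obstruction. The paper replaces spanners by a much simpler combinatorial object, a \emph{$\lambda$-coreset}: for each ordered pair $(u,v)$, keep (up to) the $\lambda$ heaviest hyperarcs with $u\in t(f),v\in h(f)$. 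The only guarantee it gives is weight domination per pair (\cref{lemma:coreset}), not path-based spectral domination, and the $\lambda n^2$ size bound is immediate rather than a girth argument. This is a substantively different object with different (and much weaker) properties, and the rest of the analysis is built around those properties.

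\textbf{The $\epsilon$-net argument is too crude.} You claim the relevant function class has at most $n^{2m}$ pieces and a net of size $\exp(\poly(n,1/\eps))$ suffices, to be absorbed by a Bernstein bound with failure probability $\exp(-\Omega(n))$. This does not close: after normalizing so that $x^\top L_H(x)=1$, the hyperarcs whose energy is at scale $\approx 2^{-i}$ give concentration only $\exp(-\Omega(\eps^2 2^i/\polylog))$, which is far below $\exp(-\poly(n))$ at the dominant scales. The paper's key insight is that, once the $\lambda$-coreset is removed, the number of \emph{critical pairs} hit by hyperarcs at scale $2^{-i}$ is at most $2^i$ (\cref{lemma:DHS-onestep-MoreEnergy}); because hyperarcs sharing a critical pair $(u,v)$ all have energy determined by a single discretized value of $(x_u-x_v)_+^2$ (scaled by $z_f$), the number of distinct discretized energy profiles at scale $i$ is only $\exp(\tilde O(2^i))$ rather than $\exp(\poly(n))$ (\cref{lemma:number-of-values}). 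This calibrated, scale-dependent count is exactly what matches the Chernoff decay after choosing $\lambda=\tilde\Theta(\eps^{-2})$; a uniform $\poly(n,1/\eps)$-size net cannot match it.

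A smaller but real gap: you allocate error $\tilde O(\eps/\sqrt T)$ per round. Since the input hypergraph can have $m=2^{\Theta(n)}$ hyperarcs, $T$ can be $\Theta(n)$, and a uniform per-round budget blows up the coreset size. The paper uses an adaptively shrinking $\eps_i=\Theta(\eps/\log^2(m_i/m^*))$, whose sum telescopes to $O(\eps)$ independently of $T$; this is needed to get the stated $O(n^2\log^3(n/\eps)/\eps^2)$ bound.
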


This bound improves the previous results and is optimal up to the $\eps^{-1}$ and logarithmic factors due to the presence of the $\Omega(n^2)$ lower bound for directed graphs.
We prove \cref{theorem:DHS-main} in \cref{chapter:DHS} by providing a concrete algorithm and its analysis.

A natural next question would be whether the $\eps^{-1}$ term can be deleted.
Our new lower bound shows that the $\eps^{-1}$ term is indeed necessary, and an $\eps$-spectral sparsifier of size $O(n^2)$ may not exist in general, thus complementing our upper bound.

\begin{theorem}\label[theorem]{theorem:Intro-DH-lower}
Let $n \in \Z_{>0}$.
For any $\eps \in \prn*{\frac{1}{4n}, 1}$, there is a directed hypergraph $H=(V,F,z)$ with $2n$ vertices, $\Omega \prn*{\frac{n^2}{\eps}}$ hyperarcs, and the rank three that has no sub-hypergraph $\Htl=(V, \Ftl, \ztl)$ such that $\Ftl\subsetneq F$ and $(1-\eps )x^\top L_{H}(x) \leq x^\top L_{\Htl}(x)\leq(1+\eps )x^\top L_{H}(x)$ for all $x\in \{0,1\}^V$.
\end{theorem}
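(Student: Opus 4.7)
The plan is to exhibit an explicit directed hypergraph $H$ with $\Omega(n^2/\eps)$ rank-three hyperarcs for which no proper sub-hypergraph, under any non-negative reweighting, approximates the cut function on $\{0,1\}^V$ within a factor $(1\pm\eps)$. I would split into two regimes based on $\eps$.

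For $\eps\geq 1/10$ I would take $V=U\cup W$ with $|U|=|W|=n$ and let $F$ consist of the $n^2$ directed edges $(u_i,w_j)$; for each such edge the cut $\pi_{i,j}\in\{0,1\}^V$ that sets $x_{u_i}=1$ (all other $u$-coordinates zero) and $x_{w_j}=0$ (all other $w$-coordinates one) fires only $(u_i,w_j)$, so $L_H(\pi_{i,j})=1$ and removing the edge forces $L_{\Htl}(\pi_{i,j})=0\notin[(1-\eps),(1+\eps)]$ independently of any reweighting of the remaining edges. Since $n^2=\Omega(n^2/\eps)$ in this regime, the statement follows directly.

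For $\eps<1/10$ I would set $K=\lfloor 1/(10\eps)\rfloor$ (so $1\leq K<n$ by $\eps>1/(4n)$) and, for every $(i,j,k)\in[n]\times[n]\times[K]$, include the unit-weight rank-three hyperarc $f_{i,j,k}=(\{u_i\},\{w_j,w_{(j+k)\bmod n}\})$; this gives $n^2K=\Omega(n^2/\eps)$ hyperarcs. Suppose towards a contradiction that $\Htl=(V,\Ftl,\ztl)$ with $\Ftl\subsetneq F$ is an $\eps$-cut-sparsifier, and let $f_0=f_{i_0,j_0,k_0}\in F\setminus\Ftl$. The main family of witnesses would be $\{\pi_{j^\star}\}_{j^\star\in\Z/n\Z}$ obtained by setting $x_{u_{i_0}}=1$ (other $u$-coordinates zero) and $x_{w_{j^\star}}=0$ (other $w$-coordinates one). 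A direct case analysis shows that the hyperarcs firing on $\pi_{j^\star}$ are exactly $\{f_{i_0,j^\star,k}:k\in[K]\}\cup\{f_{i_0,(j^\star-k)\bmod n,k}:k\in[K]\}$ (disjoint since $k\neq 0$), whence $L_H(\pi_{j^\star})=2K$, and $f_0$ fires exactly on $\pi_{j_0}$ and $\pi_{(j_0+k_0)\bmod n}$.

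Writing $\ztl_f=1+\eta_f$ with $\eta_f\geq -1$, the sparsification inequalities on $\{\pi_{j^\star}\}$ encode the linear system $A\eta=b+e$, where $A$ is the edge--vertex incidence matrix of the circulant multigraph on $\Z/n\Z$ with generator set $\{1,\dots,K\}$ (minus the edge for $f_0$), $b=\mathbf{1}_{\{j_0,(j_0+k_0)\bmod n\}}$ is the ``removal impulse,'' and $\|e\|_\infty\leq 2K\eps$. The finishing step is to augment the test family with a secondary layer of cuts---e.g., cuts that set two $w$-coordinates to zero or activate a second $u$-coordinate simultaneously---so that the extra constraints clamp $\|\eta\|_\infty=O(\eps)$ coordinatewise while the original family still forces $(A\eta)_{j_0}\geq 1-O(K\eps)$; combining the two yields $\Omega(K\eps)\geq \Omega(1)$, contradicting $K<1/(10\eps)$.

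The hard part will be this last infeasibility step. A naive use of only the $\{\pi_{j^\star}\}$ inequalities admits a uniform redistribution of $\eta$ among the $2K-1$ surviving firing hyperarcs and yields merely an $\Omega(n^2/\sqrt\eps)$ bound, so extracting the full $\Omega(n^2/\eps)$ requires exploiting the global structure of many sparsification constraints jointly. Verifying that the secondary witness cuts clamp $\|\eta\|_\infty$ tightly enough without loosening the lower bound on $(A\eta)_{j_0}$---likely through an LP-duality or Cauchy--Schwarz averaging argument on the circulant incidence system---is where I expect to spend the bulk of the effort.
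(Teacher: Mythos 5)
Your construction for the small-$\eps$ regime is essentially the paper's (with the roles of head and tail swapped: the paper uses $t(f)=\{i,j\}\subseteq U$, $h(f)=\{k\}\in W$, you use $t=\{u_i\}$, $h=\{w_j,w_{j+k}\}$; these are equivalent under arc reversal), and your primary witnesses $\pi_{j^\star}$ correspond to the paper's $x^1,x^s$. However, your proof is not complete, and the finishing move you sketch is both unnecessarily heavy and unlikely to go through as stated.

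The piece you are missing is a single exact identity, not an LP-duality or Cauchy--Schwarz averaging argument. Let $f_0=f_{i_0,j_0,k_0}\notin\Ftl$ and let $\pi$ denote the cut that sets $x_{u_{i_0}}=1$ (other $u$-coordinates $0$), $x_{w_{j_0}}=x_{w_{(j_0+k_0)\bmod n}}=0$, and the remaining $w$-coordinates $1$. In $H$, the cut-set of $\pi$ is exactly the union of the cut-sets of $\pi_{j_0}$ and $\pi_{(j_0+k_0)\bmod n}$, and (because $K<n/2$) these two cut-sets intersect in the single hyperarc $f_0$. Hence $x^\top L_H$ on $\pi$ equals $2K+2K-1=4K-1$. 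Crucially, since $f_0\notin\Ftl$, the two cut-sets become \emph{disjoint} inside $\Ftl$, so for \emph{any} nonnegative weights $\ztl$ one has the exact additivity
\begin{equation}
\pi^\top L_{\Htl}(\pi)=\pi_{j_0}^\top L_{\Htl}(\pi_{j_0})+\pi_{(j_0+k_0)\bmod n}^\top L_{\Htl}\bigl(\pi_{(j_0+k_0)\bmod n}\bigr).
\end{equation}
Applying the lower bound $(1-\eps)\cdot 2K$ to each summand and the upper bound $(1+\eps)(4K-1)$ to the left side gives $4K(1-\eps)\le(1+\eps)(4K-1)$, i.e.\ $8K\eps\ge 1+\eps>1$, so $K>1/(8\eps)$, contradicting $K\le 1/(10\eps)$. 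This is exactly the paper's argument; it needs only three test vectors, no linear system, no control of $\|\eta\|_\infty$, and it bypasses the $\Omega(n^2/\sqrt{\eps})$ bottleneck you flag. By contrast, your plan to ``clamp $\|\eta\|_\infty=O(\eps)$'' is unlikely to work at all: each $\pi_{j^\star}$-constraint bounds a sum of $\Theta(K)$ coordinates of $\eta$ by $O(K\eps)$, which gives average-case control but allows individual $\eta_f$ as large as $\Theta(K\eps)$, and the secondary cuts you mention still only constrain sums. The insight you need is the exact cancellation, not better norm bounds.

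Two smaller points. Your $\eps\ge 1/10$ regime uses directed edges, which have rank two, not three; this is cosmetic (pad each tail or head with a dummy vertex), but the theorem statement requires rank three. Also, you should say explicitly that $K\ge 1$ and $K<n/2$ in the small-$\eps$ regime to ensure the construction is well-defined and that the two cut-sets overlap only in $f_0$; the assumption $\eps>1/(4n)$ is what makes $K<n/2$ true.
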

This gives a lower bound even for the case of cut sparsification and is the first nontrivial lower bound for sparsification of directed hypergraphs.
We present the proof of \cref{theorem:Intro-DH-lower} in \cref{section:LowerBound}.

The basic idea of our algorithm for \cref{theorem:DHS-main} comes from a spanner-based sparsification method for undirected graphs by Koutis~and~Xu~\citep{koutis2016simple}, in contrast to the method of \citep{kapralov2021spectral} for nearly tight sparsification of undirected hypergraphs.
The analysis of \citep{kapralov2021spectral} uses a technique called \emph{weight assignment}~\cite{chen2020near}, which crucially depends on linear algebraic arguments on the linear Laplacian of some underlying undirected graph.
\emph{Directed} hypergraphs, however, do not have such convenient underlying \emph{undirected} graphs, and hence their idea cannot be utilized.
We thus take an alternative route and use the algorithmic framework of Koutis~and~Xu~\citep{koutis2016simple}---iteratively select important edges and sample the remaining edges.
Due to its combinatorial nature, we can analyze errors via combinatorial arguments instead of linear algebraic tools.
Although our algorithm is as simple as theirs, our analysis for proving \cref{theorem:DHS-main} involves novel techniques.
Specifically, while building on a recent chaining-based analysis~\citep{kapralov2021towards,kapralov2021spectral}, we develop a completely new \emph{discretization scheme} based on a non-trivial combinatorial observation to obtain the optimal upper bound.
See \cref{section:DHS-TechnicalOverview} for an overview of our analysis.

\subsection{Additional Results}

\paragraph{Undirected hypergraph sparsification.}
The iterative sampling approach mentioned above indeed has much potential in hypergraph sparsification.
In \cref{section:UHS}, we exhibit its power by presenting a natural extension of the spanner-based algorithm by Koutis~and~Xu~\citep{koutis2016simple} to undirected hypergraphs.
The concept of spanners in graphs can be naturally extended to undirected hypergraphs, and accordingly, Koutis and Xu's algorithm can also be extended to undirected hypergraphs.
Based on a result by Bansal~et~al.~\citep{bansal2019new}, we show that the resulting algorithm constructs an $\eps$-spectral sparsifier with $O\prn*{\frac{nr^3}{\eps^2}\log^2 n}$ hyperedges, which is nearly optimal if $r$ is constant and matches the bound of \citep{bansal2019new} (up to a $\log n$ factor).
Moreover, our algorithm inherits advantages of the spanner-based approach in that it can be implemented in parallel~\citep{koutis2016simple} and can be converted to be fault-tolerant~\citep{zhu2019improved}, demonstrating that the iterative sampling approach can enjoy various useful extensions.

\paragraph{Application to learning of submodular functions.}
A notable application of directed hypergraph sparsification due to \citep{soma2019spectral} is agnostic learning of submodular functions.
In \cref{sec:submodular}, we apply our method to this setting and obtain an $\tilde{O}\prn*{ \frac{n^3}{\eps^4} + \frac{1}{\eps^2}\log\frac{1}{\delta} }$ sample complexity bound for agnostic learning of nonnegative hypernetwork-type submodular functions on a ground set of size $n$, improving the previous $\tilde{O}\prn*{ \frac{n^4}{\eps^4} + \frac{1}{\eps^2}\log\frac{1}{\delta} }$ bound in~\citep{soma2019spectral}.
Note that since the rank $r$ of a hypergraph representing a submodular function can be $O(n)$, eliminating the dependence on $r$ in the sparsifier size (i.e., our improvement from~\citep{kapralov2021towards}) is crucial in this application.
It should be mentioned that this application only requires cut sparsifiers.
Nevertheless, since our result gives the first near-optimal bound even on the size of cut sparsifiers of directed hypergraphs, this application serves as a good motivation for our result.

\subsection{Related Work}
Besides the aforementioned application to agnostic learning of submodular functions, there are many other potential applications that involve the quadratic form $x^\top L_H(x)$ (which is sometimes called the {\em energy} of hypergraphs), e.g., clustering~\citep{takai2020hypergraph}, semi-supervised learning~\citep{hein2013total,yadati2019hypergcn,zhang2020re,Li2020-db}, and link prediction~\citep{yadati2020nhp}.
For example, Li et al.~\citep{Li2020-db} use the quadratic form as a smoothness regularizer.
Our result on spectral sparsification can be useful when dealing with such regularizers on dense directed hypergraphs.

Cohen et al.~\cite{cohen2017almost} studied directed graph sparsification under a different definition of approximation based on \textit{Eulerian scaling}.
While their definition is compatible with fast Laplacian solvers, how to extend it to directed hypergraphs seems non-trivial.
Our definition is based on a general notion called \textit{submodular transformations}~\citep{Yoshida2019-ez} and admits a natural interpretation as a generalization of cut sparsification of directed hypergraphs.

\section{Preliminaries}\label{section:preliminaries}

We usually denote a directed hypergraph by $H=(V, F, z)$, the numbers of vertices by $n$, and the numbers of hyperarcs by $m$.
The Laplacian $L_H\colon \R^V \to \R^V$ is defined as a nonlinear operator that satisfies $x^\top L_H(x) = \sum_{f \in F}z_f\max_{u \in t(f), v \in h(f)} (x_u - x_v)_+^2$ for all $x \in \R^V$, where $h(f), t(f) \subseteq V$ are the head and the tail of $f$, respectively.
For each $f \in F$, we denote the contribution of $f$ to $x^\top L_H(x)$ by $Q_H^x(f) = z_f\max_{u\in t(f),v\in h(f)}(x_u-x_v)_+^2$, which we call the \textit{energy} of $f$.
Note that $x^\top L_H(x) = \sum_{f \in F} Q_H^x(f)$ holds.
For any subset $F'$ of $F$, we let $Q_H^x(F') = \sum_{f \in F'} Q_H^x(f)$, i.e., the sum of energies over $F'$.
For a hyperarc $f \in F$, we define its \textit{biclique} as an arc set $C(f) = \Set*{(u, v)}{u\in t(f), v\in h(f)}$.
For a subset $F'\subseteq F$, we let $C(F') = \bigcup_{f \in F'}C(f)$.
Below, we often take $\argmax_{f \in F'} \zeta(f)$ for a function $\zeta\colon F\to \R$ and a hyperarc subset $F'\subseteq F$.
For convenience, we let such argmax (or argmin) operations always return a singleton by using some tie-breaking rule with a pre-defined total order on $F$.
For example, if vertices are labeled by $1,\dots,n$ and each $f \in F$ is labeled by vertices in $f$, we may use the lexicographical order on $F$ with respect to the labels.
Similarly, we break ties when taking argmax/argmin on any $E' \subseteq V\times V$.
We will often use the following Chernoff bound.

\begin{proposition}[\citep{alon2016probabilistic}]\label[proposition]{theorem:chernoff}
Let $X_1,X_2,\cdots,X_m$ be independent random variables in the range of $[0,a]$.
For any $\delta \in [0,1]$ and $\mu \geq \mathbb{E}\left[\sum_{i=1}^{m}X_i\right]$, we have
    \begin{align}
        \nonumber
        \mathbb{P}\left[
            \left|
                \sum_{i=1}^{m}X_i - \mathbb{E}\left[\sum_{i=1}^{m}X_i\right]
            \right|
            > \delta \mu
        \right]
        \leq
        2\exp
        \left(
            -\frac{\delta^2\mu}{3a}
        \right)
        .
    \end{align}
\end{proposition}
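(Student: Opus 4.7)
The plan is to reduce to the classical Chernoff bound via rescaling, then apply the moment generating function (MGF) method. First I set $Y_i := X_i/a \in [0,1]$ and $T := \sum_{i=1}^m Y_i = S/a$, where $S := \sum_i X_i$, so the claim becomes $\mathbb{P}[\,|T - \E[T]| > \delta\nu\,] \le 2\exp(-\delta^2\nu/3)$ with $\nu := \mu/a \ge \E[T]$. This normalization lets me work with unit-range variables throughout.

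For the upper tail, I would apply Markov's inequality to $e^{tT}$ with a free parameter $t>0$, use independence to factor $\E[e^{tT}] = \prod_i \E[e^{tY_i}]$, and bound each factor by the convexity estimate $e^{ty} \le 1 + y(e^t - 1)$ on $[0,1]$ together with $1+x \le e^x$. This gives $\E[e^{tT}] \le \exp\bigl(\E[T](e^t-1)\bigr)$, and optimizing at $t = \ln(1+\eta)$ yields the Bennett--Chernoff inequality $\mathbb{P}[\,T \ge (1+\eta)\E[T]\,] \le \exp(-h(\eta)\E[T])$ with $h(\eta) := (1+\eta)\ln(1+\eta) - \eta$. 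The lower tail is analogous, via Markov on $e^{-tT}$.

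To translate these into the $\nu$-normalized form, I set $\eta := \delta\nu/\E[T]$ so that $\eta\E[T] = \delta\nu$, and invoke the standard rate bounds $h(\eta) \ge \eta^2/3$ for $\eta \in [0,1]$ and $h(\eta) \ge \eta/3$ for $\eta \ge 1$. A short case check then shows $h(\eta)\E[T] \ge \delta^2\nu/3$ in both regimes: if $\eta \le 1$, then $h(\eta)\E[T] \ge \delta^2\nu^2/(3\E[T]) \ge \delta^2\nu/3$ using $\nu \ge \E[T]$; if $\eta \ge 1$, then $h(\eta)\E[T] \ge \delta\nu/3 \ge \delta^2\nu/3$ using $\delta \le 1$. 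For the lower tail, the case $\eta \ge 1$ is vacuous since $T \ge 0$ forces $\mathbb{P}[T \le (1-\eta)\E[T]]=0$. A union bound over the two one-sided tails produces the factor $2$, and unwinding $\nu = \mu/a$ recovers the advertised $2\exp(-\delta^2\mu/(3a))$.

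The main obstacle, such as it is, is the bookkeeping around $\mu$ being merely an upper bound on $\E[S]$: the threshold $\delta\mu$ does not directly match the MGF exponent, which is naturally parameterized by $\E[S]$. The observation $\mu \ge \E[S] \Rightarrow \mu^2/\E[S] \ge \mu$ bridges this gap in the moderate-deviation regime, while the hypothesis $\delta \le 1$ handles the large-deviation regime via the linear-in-$\eta$ Chernoff tail. Both reductions are routine, so no new ideas beyond the classical Chernoff derivation are needed.
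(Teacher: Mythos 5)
The paper states this as a cited fact from Alon--Spencer and gives no proof, so there is nothing in-paper to compare against. Your reconstruction is a correct and essentially standard derivation of this Chernoff variant. The rescaling $Y_i := X_i/a$ correctly reduces to unit-range variables, and the MGF bound $\E[e^{tT}]\le\exp(\E[T](e^t-1))$ with $t=\ln(1+\eta)$ yields the Bennett form $\mathbb{P}[T\ge(1+\eta)\E[T]]\le\exp(-h(\eta)\E[T])$ with $h(\eta)=(1+\eta)\ln(1+\eta)-\eta$. The two features of the statement that are slightly non-standard --- that $\mu$ is only an upper bound on $\E[S]$ and that $\delta\mu$ may exceed $\E[S]$ --- are exactly the points your two-regime case analysis addresses, and both checks are sound: for $\eta\le 1$ you use $h(\eta)\ge\eta^2/3$ together with $\nu\ge\E[T]$ (so $\nu^2/\E[T]\ge\nu$), and for $\eta\ge 1$ you use $h(\eta)\ge\eta/3$ together with $\delta\le 1$. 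Both inequalities on $h$ are verifiable (e.g.\ from the standard bound $h(\eta)\ge\eta^2/(2(1+\eta/3))$, which gives $h(\eta)\ge 3\eta^2/8>\eta^2/3$ on $[0,1]$, and from $h(1)=2\ln 2-1>1/3$ plus $h'(\eta)=\ln(1+\eta)\ge\ln 2>1/3$ for $\eta\ge 1$). The lower-tail observation that $\eta\ge 1$ is vacuous because $T\ge 0$ is also correct, and for $\eta<1$ the lower-tail rate $g(\eta)=(1-\eta)\ln(1-\eta)+\eta\ge\eta^2/2\ge\eta^2/3$ gives the same chain. One small caveat worth a line in a polished write-up: $\eta:=\delta\nu/\E[T]$ is undefined when $\E[T]=0$, but in that degenerate case all $X_i=0$ almost surely and the claim is immediate.
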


\section{Technical Overview}\label{section:DHS-TechnicalOverview}
Our algorithm is an iterative algorithm whose each step goes as follows:
given a hypergraph $H=(V,F,z)$ from a previous iteration, it constructs a set $S$ of heavy hyperarcs, called a \textit{coreset}, which is kept deterministically in this step, and samples the remaining hyperarcs with probability $1/2$, where weights of sampled ones are doubled.
This single step yields a hypergraph with fewer hyperarcs, which is taken as input in the next step.
We iterate this until a sub-hypergraph of the desired size is obtained.
Roughly speaking, the size of the coreset is about $\Otl(n^2/\eps^2)$, and after about $O(\log\prn*{{m\eps^2}/{n^2}})$ iterations, we obtain a sub-hypergraph of size $\Otl(n^2/\eps^2)$.
This algorithmic framework is identical to that of Koutis~and~Xu~\citep{koutis2016simple} for ordinary undirected graph sparsification, which iteratively constructs a bundle of spanners (instead of a coreset) and sample the remaining edges with probability $1/4$.
% In each iteration, we keep sampled hyperarcs and a coreset of size about $\Otl(n^2/\eps^2)$, which plays a similar role to a bundle of spanners in Koutis~and~Xu~\citep{koutis2016simple}.
% Thus, if a hypergraph with $m$ hyperarcs is given, the size of the resulting hypergraph is concentrated around $\Otl(n^2/\eps^2)+m/2$, which is at most $\frac{3}{4}m$ if $\Otl(n^2/\eps^2) \le \frac{m}{4}$ (otherwise the initial $H$ is already sparsified).
% Hence, after about $\log_{4/3}\prn*{{m\eps^2}/{n^2}}$ iterations, we obtain a sub-hypergraph of size $\Otl(n^2/\eps^2)$.

We then describe how to analyze the sparsification error.
Note that if a sub-hypergraph produced in each step is a sparsifier of a hypergraph $H=(V,F,z)$ given from the previous step with a sufficiently large probability, then we can bound the error accumulated over the iterations.
Thus, we focus on the analysis of a single step (which is presented in \cref{lemma:DHS-onestep}).
To bound the sparsification error in $Q_H^x(F) = x^\top L_H(x)$ for all $x \in \R^V$ in each step, we adopted a chaining-type argument~\citep{kapralov2021towards,kapralov2021spectral};
this enables us to derive a desired uniform bound on a continuous domain from a pointwise bound via adaptive scaling of the domain discretization.
Here, how to design a discretization scheme crucially affects how sharp the resulting uniform bound is.
Therefore, we need to design an appropriate discretization scheme by carefully looking at the structure of directed hypergraphs.

% We %then
% explain how to upper bound the sparsification error in the energy $Q_H^x(F) = x^\top L_H(x)$, which is nonlinear in $x$, for all $x \in \R^V$.
% By applying the Chernoff bound to the independent sampling of hyperarcs, it is easy to bound the sparsification error {\em for each} $x \in \R^V$.
% On the other hand, we want to bound the error in the total energy $Q_H^x(F) = \sum_{f \in F} Q_H^x(f)$ {\em uniformly for all} $x \in \R^V$.
% To bridge this gap, we adopted a chaining-type argument~\citep{kapralov2021towards,kapralov2021spectral}, which enables us to derive the desired uniform bound on a continuous domain from the union bound via appropriate discretization of the domain.
% The discretization scale is defined to be progressively smaller to control the trade-off between the discretization error and the number of discretized values.
% Therefore, how to design an appropriate discretization scheme is crucial for obtaining a sharp uniform bound.

We below sketch our discretization scheme.
Inspired by the previous studies \citep{kapralov2021towards,kapralov2021spectral}, we classify hyperarcs $f \in F\setminus S$ based on their energies $Q_H^x(f)$.
Here, since the coreset $S$ is always selected, we can exclude it when discussing the following probabilistic arguments.
For each $x\in \R^V$, we consider a partition of $F\setminus S$ into $F_i^x$ ($i\in \Z$) such that each $F_i^x$ consists of hyperarcs $f$ with energies $Q_H^x(f) \approx 2^{-i} Q_{H}^x(F)$.
Then, the Chernoff bound offers the following pointwise guarantee for each $x \in \R^V$:
\begin{align}
    \nonumber
    \mathbb{P}\left[
    |Q_{\tilde{H}}^x(\tilde{F}_i^x) - Q_{H}^x(F_i^x)|
    \geq
    \eps Q_{H}^x(F)
    \right]
    \lesssim
    \exp
    \left(
        - \frac{\varepsilon^2 Q_{H}^x(F)}{2^{-i}Q_{H}^x(F)}
    \right)
    =
    \exp
    \left(
        - \varepsilon^{2}2^i
    \right),
\end{align}
where $\Htl$ is a sparsifier obtained from $H$ and $Q_{\tilde{H}}^x(\tilde{F}_i^x)$ denotes the energy of $\Htl$ with hyperarcs restricted to $F_i^x$.
To obtain a desired uniform bound using this inequality, we need to design a discretization scheme that satisfies the following two requirements:
\begin{description}
    \item[(R1)] the discretization error is $O(\eps)$, and
    \item[(R2)] the number of possible discretized energies is bounded by about $\exp(\eps^2 2^i)$.
\end{description}
Kapralov~et~al.~\citep{kapralov2021towards} obtained such a scheme by looking at underlying clique digraphs.
By contrast, we obtain a discretization scheme by directly looking at hypergraphs.
This strategy enables us to eliminate the extra $r^3$ factor in their bound, but it also poses a new challenge.

We explain the challenge when designing such a discretization scheme by directly looking at hypergraphs.
Once $x \in \R^V$ is fixed, the number of hyperarcs $f$ with $Q_H^x(f) \approx 2^{-i} Q_{H}^x(F)$ is bounded by about $2^i$; on the other hand, we need to prepare at least $\poly(n, 1/\eps)$ possible discretized energies for each $f$ to satisfy requirement~(R1).
Thus, naive counting implies that the number of total discretized energies for all $f \in F_i^x$ is $(\poly(n, 1/\eps))^{2^i} \approx \exp(\Otl(2^i))$, which is too large to satisfy requirement~(R2).
To overcome this problem, we need an additional combinatorial idea:
we count the number of discretized energies by focusing on the number of possible critical pairs.
We say that $(u,v)\in C(F\setminus S)$ is a {\em critical pair} of $f$ if $(u, v) = \argmax_{u' \in t(f), v' \in h(f)} (x_{u'} - x_{v'})_+^2$ (see also \Cref{subfig:critical}).
Suppose that a lot of hyperarcs in $F_i^x$ share a common critical pair for a given $x\in \R^V$, particularly when $F_i^x$ contains as many as $2^i$ hyperarcs.
Then, since the energy of $f$ is determined by the $(x_{u} - x_{v})_+^2$ value of the critical pair $(u,v)$ of $f$, we may get a sharper bound on the number of discretized energies by defining a discretization scheme based on $(x_{u} - x_{v})_+^2$ values so that hyperarcs with the same critical pairs share the same discretized energies (up to scaling of weights).

To accomplish the counting based on this idea, we use the existence of a coreset kept in each iteration.
As we will see shortly from the definition, a $\lambda$-coreset $S \subseteq F$ contains $\lambda$ heaviest hyperarcs for each $(u, v) \in C(F)$ (see also \Cref{subfig:coreset}).
Roughly speaking, important properties of $\lambda$-coresets are as follows:
\begin{description}
    \item[(P1)] $|S| \le \lambda n^2$,
    \item[(P2)] for any fixed $x \in \R^V$, many hyperarcs with large energies are included in $S$, and
    \item[(P3)] for any fixed $x\in \R^V$, the number of critical pairs of hyperarcs in $F^x_i$ is at most $2^i/\lambda$.\footnote{For ease of exposition, $\lambda$ is used differently from \cref{chapter:DHS}. In \cref{section:DHS-JustificationForOneIteration}, we will instead define $F_i^x$ based on $2^{-i} Q_{H}^x(F) / \lambda$ values and, accordingly, bound the number of critical pairs by $2^i$ (\cref{lemma:DHS-onestep-MoreEnergy}).}
\end{description}
If we set $\lambda = \Otl(\eps^{-2})$, the size of $\lambda$-coreset $S$ is $\Otl(n^2/\eps^2)$ by property~(P1), which is small enough that the output size decreases geometrically in each iteration until we obtain an $\Otl(n^2/\eps^2)$ size sparsifier.
Property~(P2) bounds the range of $i$ such that $F_i^x$ is non-empty.
Most importantly, property~(P3) implies that if we count possible discretized energies over $F_i^x$, the total number is at most $\prn*{\poly(n, 1/\eps)}^{2^i/\lambda} \approx \exp\prn*{ \Otl(\eps^2 2^i) }$, satisfying requirement~(R2).

In summary, once the coreset is selected, we can categorize the remaining hyperarcs in each $F_i^x$ based on a moderate number of critical pairs, which yields a sharp bound on the number of possible discretized energies of the remaining hyperarcs.
This is the key idea of our discretization scheme, which, together with the chaining-type argument, provides the desired uniform bound on the sparsification error.

\section{Spectral Sparsification of Directed Hypergraphs}\label{chapter:DHS}

%The main result of this section is the following theorem.
%\begin{theorem}\label[theorem]{theorem:DHS-main}
%    Let $H=(V,F,z)$ be a directed hypergraph with $n$ vertices.
%    For any $\eps \in (0, 1)$, {\rm{\DHSparsify$(H,\eps)$}} given in \cref{alg:DHS-iterative} returns an $\epsilon$-spectral sparsifier $\Htl=(V,\tilde{F},\tilde{z})$ of $H$ such that $|\tilde{F}| = O\left(\frac{n^2}{\eps^2}\log^3\frac{n}{\eps}\right)$ with probability at least $1-O\left(\frac{1}{n}\right)$.
%    %\vee \frac{\log (n/\eps)}{n^2}\right)$.
%\end{theorem}

We prove \cref{theorem:DHS-main} by presenting a concrete algorithm.
%This section is organized as follows.
%\Cref{section:DHS-TechnicalOverview} overviews our technical ideas.
\Cref{section:DHS-AlgorithmDescription} presents our algorithm and key lemmas.
\Cref{section:DHS-JustificationForOneIteration} focuses on the analysis of a single iteration, and \Cref{section:DHS-NearlyTightSparsification} bounds the overall sparsification error and the resulting sparsifier size, thus proving \cref{theorem:DHS-main}.
\Cref{section:DHS-ComputationalComplexity} shows the $O(mr^2)$ time complexity bound of our algorithm.

\subsection{Algorithm Description}\label{section:DHS-AlgorithmDescription}
Our algorithm consists of {\CoresetFinder} (\cref{alg:DHS-Spanner}),  {\DHOnestep} (\cref{alg:DHS-onestep}), and {\DHSparsify} (\cref{alg:DHS-iterative}).
{\DHSparsify} iteratively calls {\DHOnestep}, which uses {\CoresetFinder} as a subroutine.
We below explain them one by one.

\begin{algorithm}[htb]
    \caption{\CoresetFinder$(H,\lambda)$: greedy algorithm for coreset construction.}\label{alg:DHS-Spanner}
    \begin{algorithmic}[1]
        \Require $H = (V, F, z)$ and $\lambda > 0$
        \Ensure $S \subseteq F$
        \State $S \gets \emptyset$ and $S^{uv}\gets \emptyset$ for each $(u,v)\in C(F)$
        \State $A^{uv}\gets \Set*{f \in F}{ (u, v)\in C(f)}$ for each $(u,v)\in C(F)$
        \For{each $(u,v) \in C(F)$}
        \If{$|A^{uv}\setminus S| \ge \lambda$}
        \State Find the first $\lambda$ heaviest hyperarcs $f_1^{uv},f_2^{uv},\cdots, f_{\lambda}^{uv} \in A^{uv}\setminus S$
        \State{Add $f_1^{uv},f_2^{uv},\cdots, f_{\lambda}^{uv}$ to $S^{uv}$}
        \Else
        \State $S^{uv} \gets A^{uv} \setminus S$
        \EndIf
        \State $S \gets S \cup S^{uv}$
        \EndFor
        \State \Return $S$
    \end{algorithmic}
\end{algorithm}
\begin{algorithm}[h]
  \caption{\DHOnestep$(H,\lambda)$: sampling algorithm called in each iteration in \cref{alg:DHS-iterative}.}\label{alg:DHS-onestep}
  \begin{algorithmic}[1]
      \Require $H = (V, F ,z)$ and $\lambda > 0$
      \Ensure $\Htl = (V, \Ftl, \ztl)$
          \State{$S \gets \text{\CoresetFinder}(H,\lambda)$} {\label[line]{line:dhone-findcoreset}}
          \State $\Ftl \gets S$ and $\ztl_f \gets z_f$ for $f \in S$
          \For{each $f \in F\setminus S$}
          \State{With probability $\frac12$, add $f$ to $\tilde{F}$ and set $\tilde{z}_f \gets 2z_f$}
          \EndFor
          \State \Return $\Htl = (V,\tilde{F}, \tilde{z})$
  \end{algorithmic}
\end{algorithm}
\begin{algorithm}[h]
  \caption{\DHSparsify$(H, \eps)$: iterative algorithm that computes an $\eps$-spectral sparsifier.}\label{alg:DHS-iterative}
  \begin{algorithmic}[1]
      \Require $H = (V,F,z)$ with $|V|=n$ and $|F|=m$, and $\eps > 0$
      \Ensure $\Htl = (V,\Ftl, \ztl)$
      \State $m^* \gets \frac{n^2}{\eps^2}\log^3\frac{n}{\eps}$ \Comment{This is the (asymptotic) target size of the resulting sparsifier.}
      \State $T \gets \ceil*{\log_{4/3} \left(\frac{m}{m^*}\right)}$
      \State $i \gets 0$, $\Htl_0 = (V, \Ftl_0, \ztl_0) \gets H$, and $m_0 \gets |\Ftl_0|$
      \While{$i < T$ and $m_i \geq \Cc m^*$} \Comment{$\Cc$ is a constant that is explained in \cref{section:DHS-NearlyTightSparsification}.}
      \State $\eps_i \gets \frac{\eps}{4 \log_{4/3}^2\left(\frac{m_i}{m^*}\right)}$ and $\lambda_i \gets \ceil*{ \frac{\Ca \log^3 m_i}{\eps_i^2} }$ {\label[line]{line:DHSparsify-eps-lambda}}
      \Comment{$\eps_i$ is used in the analysis.}
      \State $\Htl_{i+1} = (V,\tilde{F}_{i+1},\tilde{z}_{i+1}) \gets \text{\DHOnestep}(\Htl_{i}, \lambda_i)$
      \State $m_{i+1} \gets |\tilde{F}_{i+1}|$
      \State $i\gets i+1$
      \EndWhile
      \State $\Tend\gets i$ and $\Htl\gets \Htl_{\Tend}$
      \State \Return $\Htl = (V,\tilde{F}, \tilde{z})$
  \end{algorithmic}
\end{algorithm}

The first building block of our algorithm is {\CoresetFinder}$(H,\lambda)$ given in \cref{alg:DHS-Spanner}.
It takes a hypergraph $H$ and a parameter $\lambda$ as input, constructs a set, $S^{uv}$, of up to $\lambda$ hyperarcs for each $(u, v) \in C(F)$, and outputs $S = \bigcup_{(u, v) \in C(F)} S^{uv}$.
For each pair $(u,v)$ (in arbitrary order), $S^{uv}$ is obtained by selecting up to the $\lambda$ heaviest hyperarcs $f$ with $(u,v)\in C(f)$ among those not selected yet.
The parameter $\lambda$ controls the size of output $S$.

\begin{lemma}\label[lemma]{lemma:coreset}
  Let $H$ be a directed hypergraph and $\lambda$ be a positive integer.
  {\CoresetFinder}$(H, \lambda)$ returns a set $S$ of at most $\lambda n^2$ hyperarcs that can be partitioned into disjoints subsets $\Set*{S^{uv}}{(u,v)\in C(F)}$ satisfying the following conditions:
  \begin{enumerate}
      \item for any $(u,v)\in C(F)$, every $f\in S^{uv}$ satisfies $(u,v)\in C(f)$,
      \item if $(u,v)\in C(F\setminus S)$, $|S^{uv}|= \lambda$ holds, and
      \item for any $(u,v)\in C(F)$, $f\in S^{uv}$, and $f'\in F\setminus S$ such that $(u,v)\in C(f')$, $z_f\geq z_{f'}$ holds.
  \end{enumerate}
\end{lemma}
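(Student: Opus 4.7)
The proof is essentially a careful bookkeeping of the state of $S$ as the loop in \cref{alg:DHS-Spanner} executes. The plan is to let $S^{(u,v)}_{\text{pre}}$ denote the value of the running set $S$ immediately before the iteration that processes the pair $(u,v)$, and to exploit the identity $S = \bigsqcup_{(u,v) \in C(F)} S^{uv}$ together with monotonicity of $S$ across iterations.

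First, the size bound. Since every $f \in S^{uv}$ is added to $S$ precisely when $(u,v)$ is processed, and the algorithm updates $S \gets S \cup S^{uv}$ only once for each $(u,v)$, the sets $\{S^{uv}\}_{(u,v) \in C(F)}$ are pairwise disjoint (by construction, $S^{uv}$ is drawn from $A^{uv} \setminus S^{(u,v)}_{\text{pre}}$, which contains none of the previously selected $S^{u'v'}$). Because $|S^{uv}| \le \lambda$ for every pair and $|C(F)| \le n^2$, we conclude $|S| = \sum_{(u,v)} |S^{uv}| \le \lambda n^2$. Condition~1 is immediate: $S^{uv} \subseteq A^{uv} \setminus S^{(u,v)}_{\text{pre}} \subseteq A^{uv}$, and by definition of $A^{uv}$ every $f \in A^{uv}$ satisfies $(u,v) \in C(f)$.

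For conditions~2 and~3 I plan to work with the key implication: if $f' \in F \setminus S$ and $(u,v) \in C(f')$, then $f' \in A^{uv} \setminus S^{(u,v)}_{\text{pre}}$ but $f' \notin S^{uv}$. (The first inclusion uses that $S^{(u,v)}_{\text{pre}} \subseteq S$, hence $f' \notin S^{(u,v)}_{\text{pre}}$; and the second uses that $S^{uv} \subseteq S$.) Consequently the algorithm must have taken the \textbf{if}-branch at iteration $(u,v)$, i.e., $|A^{uv} \setminus S^{(u,v)}_{\text{pre}}| \ge \lambda$, for otherwise line~8 would have forced $f' \in S^{uv} \subseteq S$. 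This already proves condition~2: for any $(u,v) \in C(F \setminus S)$ there exists such an $f'$, hence $|S^{uv}| = \lambda$.

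For condition~3, combine the observation above with the greedy choice rule. Any $f \in S^{uv}$ belongs to the top $\lambda$ heaviest hyperarcs of $A^{uv} \setminus S^{(u,v)}_{\text{pre}}$, while $f' \in A^{uv} \setminus S^{(u,v)}_{\text{pre}}$ is not among those top $\lambda$; hence $z_f \ge z_{f'}$ under the tie-breaking rule. (If $(u,v) \in C(F)$ but $(u,v) \notin C(F \setminus S)$, condition~3 is vacuous since no such $f'$ exists.) I do not anticipate a real obstacle here; the only point one has to be careful about is not to confuse the \emph{final} set $S$ with its value $S^{(u,v)}_{\text{pre}}$ at the moment of processing $(u,v)$, and to note that $A^{uv}$ is computed once before the loop so that the reasoning about "$f' \in A^{uv}$" is unaffected by later updates to $S$.
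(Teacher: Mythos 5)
Your proof is correct and takes essentially the same approach as the paper's own proof; the paper argues exactly this way, from the disjointness of the $S^{uv}$'s and the greedy selection of the $\lambda$ heaviest remaining hyperarcs, just more tersely. Your explicit introduction of the pre-iteration snapshot $S^{(u,v)}_{\text{pre}}$ is a clean way to make the bookkeeping unambiguous, but it does not change the underlying argument.
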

\begin{proof}
    Since {\CoresetFinder}$(H, \lambda)$ constructs $S^{uv}$ for each $(u,v)\in C(F)$ by selecting up to the $\lambda$ heaviest hyperarcs $f$ with $(u,v)\in C(f)$ among those that have not been selected yet, $S^{uv}$ for $(u, v) \in C(F)$ are mutually disjoint.
    This also implies $|S| = \sum_{(u, v) \in C(F)}|S^{uv}| \le \lambda n^2$ and the first and third conditions.
    After $S$ is constructed, if there is a hyperarc $f' \in F \setminus S$ such that $(u, v) \in C(f')$, then $\lambda$ hyperarcs must have been added to $S^{uv}$.
    Hence $|S^{uv}| = \lambda$ if $(u,v) \in C(F \setminus S)$, implying the second condition.
\end{proof}
We call the set $S$ shown in \cref{lemma:coreset} a \emph{coreset}, which plays a key role in the analysis.
\begin{definition}
Given a directed hypergraph $H=(V,F,z)$, a subset $S\subseteq F$, and a positive integer $\lambda$, we say $S$ is a {\em $\lambda$-coreset} of $H$ if $S$ can be partitioned into disjoints subsets $\Set*{S^{uv}}{(u,v)\in C(F)}$ satisfying the three conditions in the statement of \cref{lemma:coreset}.
\end{definition}
In short, if there is a hyperarc $f' \notin S$ with $(u, v) \in C(f')$, $S^{uv}$ contains (at least) $\lambda$ hyperarcs that are at least as heavy as $z_{f'}$.
\Cref{subfig:coreset} illustrates an example of a coreset.
We use this coreset as a counterpart of a bundle of spanners in the spanner-based sparsification.

\begin{figure}[tb]
  \label{fig:DHS-coreset}
  \centering
  \begin{minipage}[b]{.3302\linewidth}
      \centering
      \includegraphics[width=\linewidth]{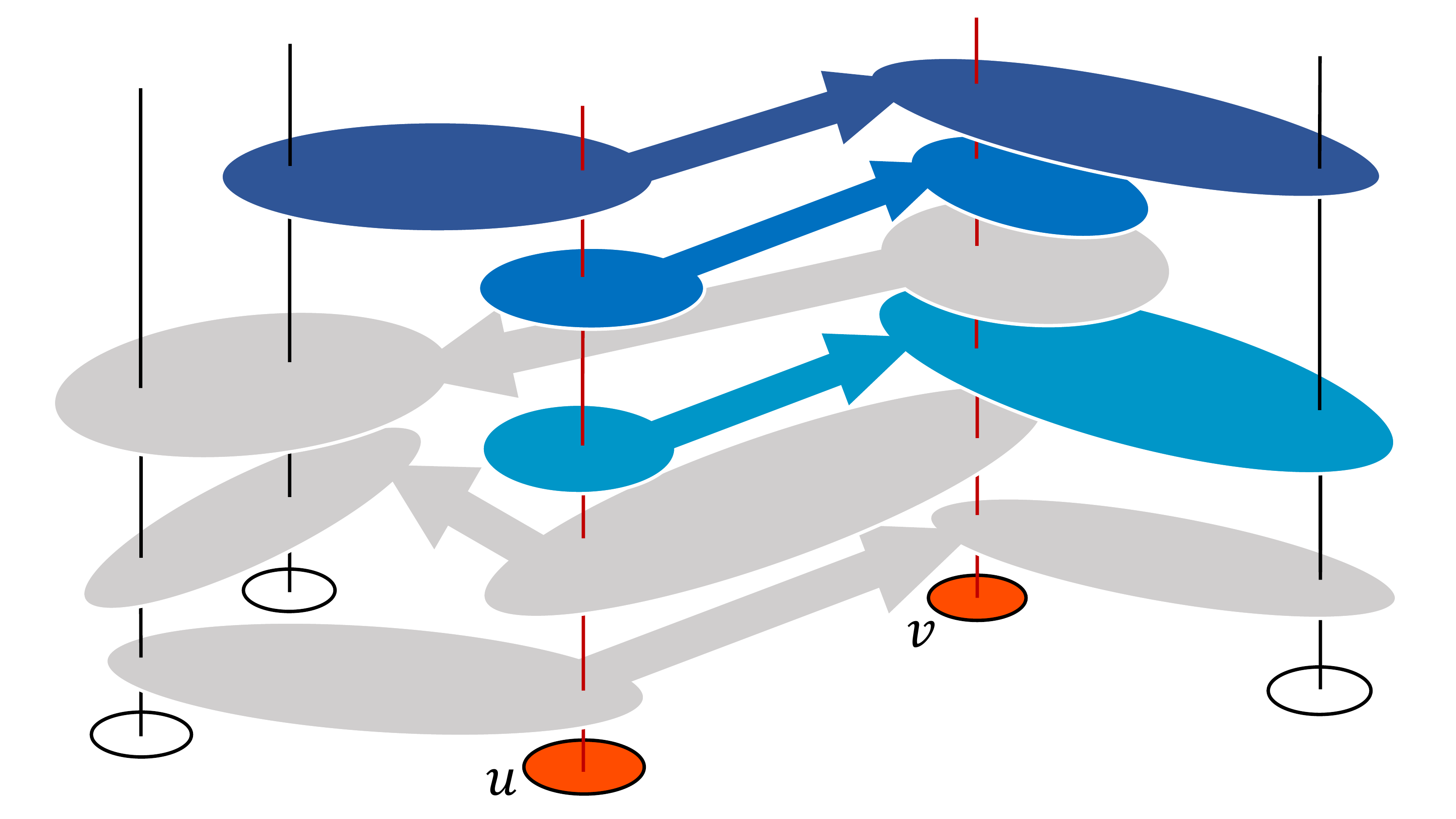}
      \subcaption{Coreset}
      \label{subfig:coreset}
  \end{minipage}
  \hspace{30pt}
  \begin{minipage}[b]{.3302\linewidth}
      \centering
      \includegraphics[width=\linewidth]{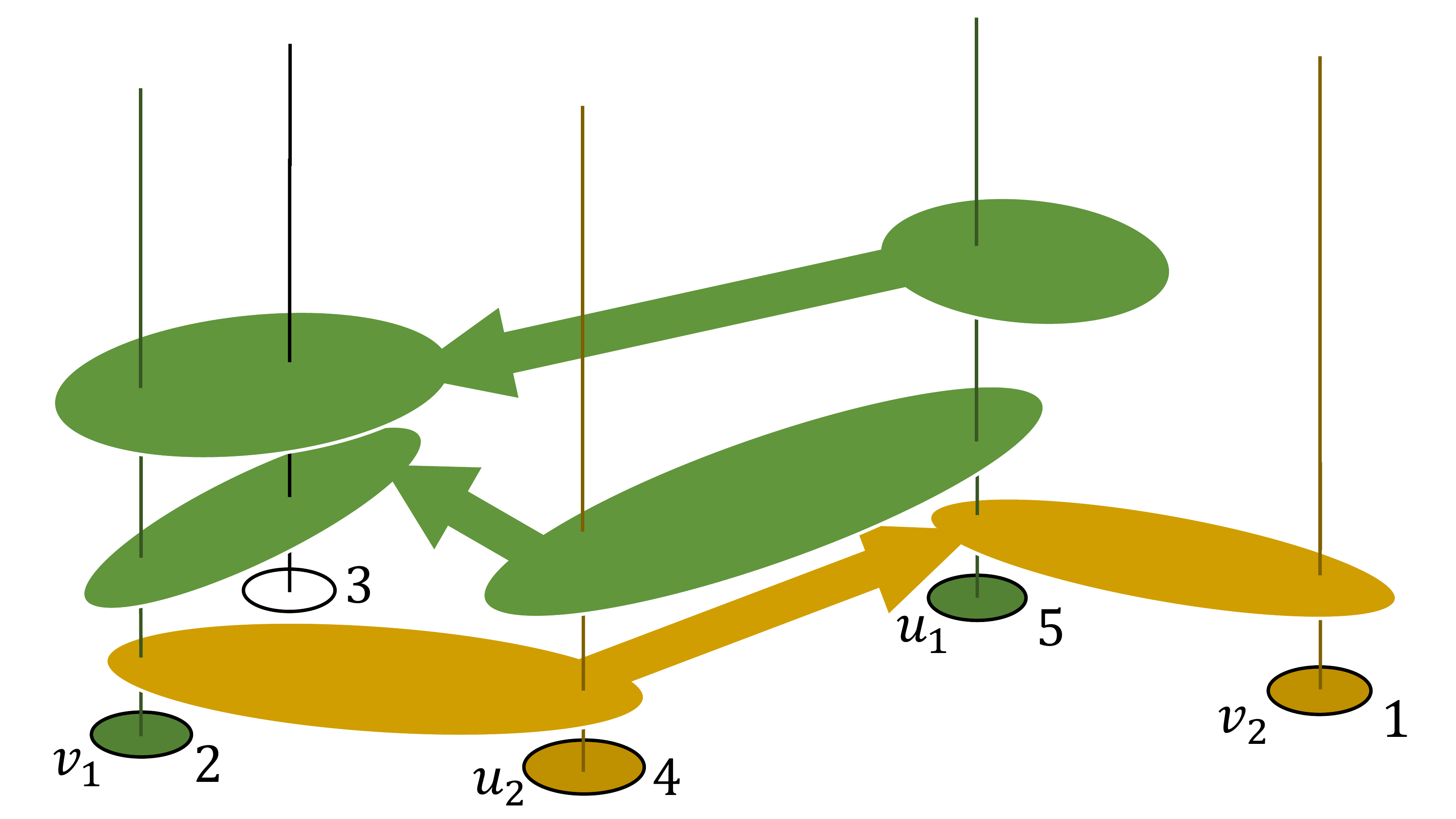}
      \subcaption{Critical pairs}
      \label{subfig:critical}
  \end{minipage}
  \caption{
      Illustration of a coreset and critical pairs on (a part of) a given hypergraph.
      A circle is a vertex, and a hyperarc is indicated by an arrow and two ellipses representing a head and a tail.
      A hyperarc contains a vertex if the line originating from the vertex pierces its head or tail.
      \Cref{subfig:coreset} presents an image of a coreset, focusing on a vertex pair $(u, v)$.
      Suppose that the hyperarcs are aligned in decreasing order of their weights from top to bottom.
      The blue hyperarcs are the three heaviest ones having $u$ and $v$ as elements of their tails and heads, respectively, and they are included in a subset $S^{uv}$ of a coreset $S$.
      We suppose that gray hyperarcs are not in $S$.
      While the bottommost gray hyperarc $f$ also satisfies $u \in t(f)$ and $v \in h(f)$, the three blue hyperarcs are heavier than it.
      Thus, the conditions of the $\lambda$-coreset with $\lambda = 3$ are satisfied for $(u, v)$.
      \Cref{subfig:critical} presents an image of critical pairs of three hyperarcs, which are missed by the coreset $S$ in \Cref{subfig:coreset}.
      Suppose that vertices $v$ have $x_v$ values of $2$, $3$, $4$, $5$, and $1$ from left to right, respectively, as shown nearby the vertices.
      Then, the green and yellow hyperarcs have $(u_1, v_1)$ and $(u_2, v_2)$, respectively, as $x$-ctirical pairs.
      If the three hyperarcs constitute $F_i^x \subseteq F \setminus S$, we have $E_i^x  = \set*{(u_1, v_1), (u_2, v_2)}$, and $F_i^x$ is partitioned into $F_i^{x, u_1v_1}$ and $F_i^{x, u_2v_2}$, shown in green and yellow, respectively.
  }
\end{figure}

Next, we explain {\DHOnestep}$(H,\lambda)$ given in \cref{alg:DHS-onestep}, which is the main subroutine in our algorithm.
The algorithm first computes a $\lambda$-coreset $S$ by calling {\CoresetFinder}$(H,\lambda)$. The hyperarcs in the coreset $S$ are deterministically added to the output.
Then, it randomly chooses the remaining hyperarcs with probability ${1}/{2}$ and doubles the weights if sampled, thus preserving the expected total weight.
The main technical observation is that, under an appropriate choice of $\lambda$, the output of {\DHOnestep}$(H,\lambda)$ is an $\eps$-spectral sparsifier of $H$.
Formally, we can show the following lemma, which is the main technical contribution and will be proved in \cref{section:DHS-JustificationForOneIteration}.
\begin{lemma}\label[lemma]{lemma:DHS-onestep}
  Let $H=(V,F,z)$ be a directed hypergraph with $|V|=n$ and $|F| = m$.
  For any $\eps\in (0,1)$ and $\lambda\geq \frac{\Ca \log^3 m}{\eps^2}$, where $\Ca$ is a sufficiently large constant, {\rm{\DHOnestep$(H,\lambda)$}} returns an $\eps$-spectral sparsifier $\Htl=(V,\tilde{F},\tilde{z})$ of $H$ satisfying
  $
  |\tilde{F}| \leq \frac{m}{2} + \prn*{3m\log n}^{\frac12} + \lambda n^2
  $
  with probability at least $1-O\left(\frac{1}{n^2}\right)$.
\end{lemma}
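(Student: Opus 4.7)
The proof naturally splits into two parts: the size bound and the spectral-approximation guarantee. The size bound is an immediate Chernoff calculation, while the spectral guarantee is the substantive content and is obtained via the chaining-type discretization outlined in \cref{section:DHS-TechnicalOverview}.

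For the size bound, \cref{lemma:coreset} gives $|S|\le \lambda n^2$ deterministically. The rest, $|\tilde F\setminus S|$, is a sum of at most $m$ independent $\mathrm{Bernoulli}(1/2)$ random variables, so its expectation is at most $m/2$. Applying \cref{theorem:chernoff} with $\mu=m/2$, $a=1$, and target deviation $\prn*{3m\log n}^{1/2}$ gives $\delta^2\mu/3 = 2\log n$, hence a failure probability of at most $2/n^2$. Combining with $|S|\le \lambda n^2$ yields the claimed size inequality.

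For the spectral part, since $S\subseteq \tilde F$ with unchanged weights contributes identically to $L_H$ and $L_{\tilde H}$, it suffices to bound $\sum_{f\in F\setminus S}(\tilde z_f/z_f-1)\, Q_H^x(f)$ uniformly in $x\in\R^V$. Fixing $x$, I would partition $F\setminus S$ into dyadic energy buckets $F_i^x$ according to $Q_H^x(f)$ and apply \cref{theorem:chernoff} bucketwise: each summand in bucket $i$ is of order $2^{-i}Q_H^x(F)/\lambda$ (after the normalization to be fixed in \cref{section:DHS-JustificationForOneIteration}), the expected total is at most $Q_H^x(F)$, and the tail probability is $\exp\prn*{-\Omega(\eps_i^2\lambda 2^i)}$ for an appropriately decaying sequence $\eps_i$ with $\sum_i\eps_i\le\eps$. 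Summing over $i$ gives the desired pointwise concentration.

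The main obstacle is promoting this pointwise estimate to a uniform one over $\R^V$. The plan is to construct, for each level $i$, a net $\mathcal N_i$ of representative vectors such that (R1) replacing $x$ by its nearest representative perturbs the bucket energies by at most $O(\eps)\cdot Q_H^x(F)$, and (R2) $|\mathcal N_i|\le \exp\prn*{O(\eps_i^2\lambda 2^i)}$, so that the union bound over $\mathcal N_i$ is absorbed into the Chernoff exponent. A naive per-hyperarc quantization would yield $\poly(n/\eps)^{|F_i^x|}$ representatives, which is exponentially too large. Here the $\lambda$-coreset pays off: every critical pair $(u,v)$ of a hyperarc outside $S$ admits $\lambda$ witnesses of at-least-as-large weight inside $S$ via \cref{lemma:coreset}, and a counting argument against $Q_H^x(S)\le Q_H^x(F)$ then forces the set of critical pairs appearing in $F_i^x$ to have size at most $2^i$. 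It therefore suffices to discretize the $(x_u-x_v)_+^2$ values on these few critical pairs at resolution $\poly(n/\eps)$, producing only $\exp\prn*{O(\log(n/\eps)\cdot 2^i)}$ classes; the choice $\lambda=\Theta(\log^3 m/\eps^2)$ makes this compatible with (R2). A union bound over the $O(\log(mn/\eps))$ relevant dyadic levels and a standard chaining telescoping then upgrade the pointwise bound to the uniform spectral guarantee with overall failure probability $O(1/n^2)$.
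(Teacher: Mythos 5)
Your proposal follows the paper's proof essentially step for step: a Chernoff bound for the size (\cref{lemma:DHS-size}); and for the spectral guarantee, dyadic energy bucketing, the coreset-driven bound $|E_i^x| < 2^i$ on the critical pairs in bucket $i$ (\cref{lemma:DHS-onestep-MoreEnergy}), discretization of $(x_u-x_v)_+^2$ only at those few critical pairs (\cref{lemma:counting1,lemma:counting2,lemma:counting3}), and a bucketwise Chernoff-plus-union-bound. One small bookkeeping note: the total discretization error must be spread over all $m$ hyperarcs, so the rounding width is $\Delta = \Theta(\eps/m)$ and the net size at level $i$ is $\exp\prn*{O\prn*{\log(nm/\eps)\cdot 2^i}}$ rather than your $\exp\prn*{O\prn*{\log(n/\eps)\cdot 2^i}}$ — this is precisely why one of the three $\log m$ factors in $\lambda$ is needed (the other two come from the $I^2 = O(\log^2 m)$ that appears when taking the per-bucket Chernoff parameter $\delta = \eps/(3I)$), and it is also a single-scale discretization with a direct error bound (\cref{lemma:DHS-onestep-DError}) rather than a genuine multi-scale telescoping.
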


Finally, we present our sparsification algorithm {\DHSparsify}$(H,\eps)$ in \cref{alg:DHS-iterative}.
In the algorithm description, $\Ca$ denotes the constant given in the statement of \cref{lemma:DHS-onestep}, and $\Cc$ is a sufficiently large constant (which we can compute explicitly by carefully expanding the analysis in \cref{section:DHS-NearlyTightSparsification}).
The algorithm iteratively calls {\DHOnestep}$(\Htl_i,\lambda_i)$, where $\Htl_i$ is the sub-hypergraph obtained in the previous step.
Here, the parameter $\lambda_i$ is defined as in \cref{line:DHSparsify-eps-lambda}, which makes $\Htl_{i+1}$ an $\eps_i$-spectral sparsifier of $\Htl_i$ by the condition in \cref{lemma:DHS-onestep}.\footnote{Unlike the existing spanner-based algorithm \citep{koutis2016simple}, we need to change $\eps_i$ adaptively since fixing $\eps_{i}=\frac{\eps}{T}$ does not yield a sparsifier of the desired size when the input hypergraph is exponentially large in $n$.}
The algorithm repeatedly calls {\DHOnestep}$(\Htl_i,\lambda_i)$ until the size of $\Htl_i$ becomes $\Otl(n^2/\eps^2)$ or the maximum number of iterations, $T$, is reached.
With this choice of $\eps_i$, we will show that the size of $\Htl_i$ decreases geometrically and that the accumulated sparsification error is bounded by $\eps$.
Consequently, the final output is an $\eps$-spectral sparsifier of the desired size, which completes the proof of \cref{theorem:DHS-main}.
We present the analysis in \cref{section:DHS-NearlyTightSparsification}.

\subsection{Proof of \texorpdfstring{\cref{lemma:DHS-onestep}}{Lemma~\ref{lemma:DHS-onestep}}}\label{section:DHS-JustificationForOneIteration}

We prove \cref{lemma:DHS-onestep}, which ensures the correctness of {\DHOnestep}.
In this section, we let $H=(V,F,z)$, $\lambda\geq \frac{\Ca \log^3 m}{\eps^2}$, and $\eps\in(0,1)$ be as given in the statement of  \cref{lemma:DHS-onestep},
and let $\Htl=(V,\tilde{F},\tilde{z})$ be the output of \DHOnestep$(H,\lambda)$.

To prove \cref{lemma:DHS-onestep}, we bound the size and sparsification error of $\Htl$ from above.
The former is an easy consequence of the Chernoff bound.
We below prove it assuming $m> 12\log n$; otherwise, an input hypergraph is already sparsified and we do not run \DHOnestep.
\begin{lemma}\label[lemma]{lemma:DHS-size}
Let $H=(V,F,z)$ be a directed hypergraph with $|V| = n$ and $|F| = m$, and let $\lambda$ be a positive integer.
If $m> 12\log n$, \DHOnestep$(H,\lambda)$ outputs a sub-hypergraph $\Htl=(V, \Ftl, \ztl)$ of $H$ satisfying
$|\Ftl| \leq \frac{m}{2} + \prn*{3m\log n}^{\frac12} + \lambda n^2$ with probability at least $1-\frac{2}{n^2}$.
\end{lemma}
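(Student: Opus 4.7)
The plan is to write $|\tilde F| = |S| + Y$, where $S$ is the coreset returned by \CoresetFinder{} and $Y$ is the number of hyperarcs in $F\setminus S$ that survive the independent coin flips with probability $\tfrac12$. The deterministic contribution $|S|$ is already controlled by \cref{lemma:coreset}, which gives $|S|\le \lambda n^2$. So the entire task reduces to a high-probability upper tail bound on the binomial random variable $Y=\sum_{f\in F\setminus S} X_f$, where $X_f\in\{0,1\}$ is the indicator that $f$ is sampled.

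First I would observe that $\mathbb{E}[Y] = \tfrac{1}{2}|F\setminus S| \le \tfrac{m}{2}$, so I can apply \cref{theorem:chernoff} with $a=1$ and the (valid upper bound) $\mu = \tfrac{m}{2}$. The goal is to show
\begin{align*}
\mathbb{P}\Bigl[\, Y - \tfrac{m}{2} > \sqrt{3m\log n}\, \Bigr] \le \tfrac{2}{n^2}.
\end{align*}
Setting $\delta\mu = \sqrt{3m\log n}$ forces $\delta = 2\sqrt{3\log n/m}$, and the hypothesis $m > 12\log n$ is exactly what ensures $\delta \in [0,1]$, making the proposition applicable. Plugging into the tail bound gives $\delta^2\mu/(3a) = 2\log n$, hence the probability of deviation is at most $2\exp(-2\log n) = 2/n^2$.

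Combining the two pieces, on the complement event (which has probability at least $1-2/n^2$) we have
\begin{align*}
|\tilde F| = |S| + Y \le \lambda n^2 + \tfrac{m}{2} + \sqrt{3m\log n},
\end{align*}
which is exactly the stated bound. There is no real obstacle here: the only subtlety is the careful choice of the slack parameter $\delta$ so that (i) the bound is tight enough to produce the $\sqrt{3m\log n}$ additive term and (ii) $\delta \le 1$ is enforced precisely by the assumption $m > 12\log n$. The coreset size bound from \cref{lemma:coreset} and the Chernoff inequality from \cref{theorem:chernoff} do all of the real work.
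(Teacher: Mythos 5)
Your proposal is correct and essentially identical to the paper's proof: the same decomposition $|\tilde F|=|S|+Y$ with $|S|\le\lambda n^2$ from \cref{lemma:coreset}, the same Chernoff application with $a=1$ and $\mu=m/2$, and the same slack parameter (your $\delta=2\sqrt{3\log n/m}$ equals the paper's $t=\sqrt{12\log n/m}$), with $m>12\log n$ ensuring $\delta\le 1$ exactly as you note.
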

\begin{proof}
Let $S$ be a $\lambda$-coreset constructed in \cref{line:dhone-findcoreset} in \DHOnestep$(H,\lambda)$.
By \Cref{lemma:coreset}, $S$ has at most $\lambda n^2$ hyperarcs.
To bound $|\tilde{F}\setminus S|$, for each $f\in F\setminus S$, we let $X_f$ be a random variable that takes $1$ if $f$ is sampled and $0$ otherwise.
Note that $|\tilde{F}\setminus S|=\sum_{f\in F\setminus S} X_f$ holds.
Since we have $\mathbb{E}\left[\sum_{f\in F\setminus S}X_f \right] = (m-|S|)/2\leq m/2$, for any $t\in (0,1)$, the Chernoff bound (\cref{theorem:chernoff}) implies
\begin{align}
    \mathbb{P}\left[
        \sum_{f\in F\setminus S} X_f
        -
        \mathbb{E}\left[\sum_{f\in F\setminus S} X_f\right]
        > \frac{m}{2}t
    \right]
    \leq
    2\exp\left(
        -\frac{mt^2}{6}
    \right).
\end{align}
By setting $t=\left(\frac{12\log n}{m}\right)^{\frac12}$, which is smaller than $1$ due to the lemma assumption, we obtain
\begin{align}
    \mathbb{P}\left[
        \sum_{f\in F\setminus S} X_f
        \leq
        \frac{m}{2} + \left(3m\log n\right)^{\frac12}
    \right]
    \geq
    1 - \frac{2}{n^2}
    .
\end{align}
Thus, we have $|\tilde{F}| = |S| + \sum_{f\in F\setminus S} X_f \leq \frac{m}{2} + \left(3m\log n\right)^{\frac12} + \lambda n^2$ with probability at least $1-\frac{2}{n^2}$.
\end{proof}
%and hence we defer the proof to \cref{subsec:DHS-bounding-number-of-hyperarcs}.
The rest of this section focuses on showing that $\Htl$ is an $\eps$-spectral sparsifier of $H$, i.e., $(1-\eps) x^{\top} L_H(x)\leq x^{\top} L_{\Htl}(x) \leq (1+\eps) x^{\top} L_H(x)$ for any $x\in \R^V$.
Since this relation is invariant under scaling of $x$, it suffices to prove the relation for any $x$  satisfying $x^{\top} L_H(x)=1$.
Let $\mathbb{S}_H=\Set*{x\in \R^V}{x^{\top} L_H(x)=1}$.
A similar normalization is used in \citep{kapralov2021towards} with respect to the total energy of the corresponding underlying clique digraphs.
By contrast, we directly normalize the total energy of a hypergraph, $x^{\top} L_H(x)$.
This difference is a key to eliminating the extra $r^3$ factor, while it requires a new discretization scheme, as described later.

Since we analyze the contribution of each hyperarc to the energy of $H$, it is convenient to use the notation of $Q_H^x(f)$ and $Q_H^x(F')$ for $f\in F$ and $F'\subseteq F$, respectively, defined in \cref{section:preliminaries}.
Our goal is to prove $(1-\eps) Q_{H}^x(F)\leq Q_{\Htl}^x(\tilde{F}) \leq (1+\eps) Q_{H}^x(F)$ for all $x\in \mathbb{S}_H$.

Given $x\in\mathbb{S}_H$ and a $\lambda$-coreset $S \subseteq F$, our strategy is to partition $F \setminus S$ into subsets based on the energies and evaluate the error caused by sparsification for each subset.
Specifically, we classify hyperarcs $f \in F \setminus S$ into subsets $F^x_i$ defined for each $i \in \Z$ as follows:
\begin{align}
    F_i^x
    :=
    \Set*{f \in F\setminus S}{Q_H^x(f) \in \left[\frac{1}{2^i\lambda}, \frac{1}{2^{i-1}\lambda}\right)}.
\end{align}
We also define $\tilde{F}_i^x := F_i^x\cap \tilde{F}$ for each $i\in \Z$.

Since $Q_{H}^x(F \setminus S)=\sum_{i\in\mathbb{Z}} Q_{H}^x(F_i^x)$
and $Q_{\Htl}^x(\tilde{F} \setminus S)=\sum_{i\in\mathbb{Z}} Q_{\Htl}^x(\tilde{F}_i^x)$,
our goal is to prove that $|Q_{\Htl}^x(\tilde{F}_i^x)-Q_{H}^x(F_i^x)|$ is sufficiently small for all $i\in \mathbb{Z}$ and $x\in \mathbb{S}_H$.
This is not difficult if $i$ is sufficiently large, as in the following lemma.
\begin{restatable}{lemma}{iBound}\label[lemma]{lemma:DHS-onestep-iBound}
    Let $I = \lceil \log_2(9m) \rceil$.
    For any $x\in \mathbb{S}_H$, $\left|Q_{H}^x\left(\cup_{i\ge I+1} F_i^x \right)-Q^x_{\Htl}\left(\cup_{i\ge I+1} \tilde{F}_i^x \right)\right|\leq \frac{\eps}{3}$.
    % \begin{equation}\label{eq:DHS-onestep-ignore}
    % \left|Q_{H}^x\left(\cup_{i\ge I+1} F_i^x \right)-Q^x_{\Htl}\left(\cup_{i\ge I+1} \tilde{F}_i^x \right)\right|\leq \frac{\eps}{3}.
    % \end{equation}
\end{restatable}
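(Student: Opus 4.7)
The plan is to observe that this is a purely deterministic tail bound: for sufficiently large $i$, every bucket $F_i^x$ consists of hyperarcs whose energies are individually so tiny that a crude $\ell_1$ bound (summing over at most $m$ hyperarcs) already dominates, with no probabilistic machinery needed.

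First I would unpack the threshold. Since $I = \lceil \log_2(9m)\rceil$, we have $2^I \geq 9m$, and therefore for every $i \geq I+1$ and every $f \in F_i^x$ the definition gives
\begin{align*}
  Q_H^x(f) < \frac{1}{2^{i-1}\lambda} \leq \frac{1}{2^I \lambda} \leq \frac{1}{9m\lambda}.
\end{align*}
Since $\bigcup_{i\geq I+1} F_i^x \subseteq F$ contains at most $m$ hyperarcs, summing this pointwise bound over them yields
\begin{align*}
  Q_H^x\!\left(\bigcup_{i\geq I+1} F_i^x \right) \;\leq\; m \cdot \frac{1}{9m\lambda} \;=\; \frac{1}{9\lambda}.
\end{align*}

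Next I would transfer this to $\Htl$. By construction of \DHOnestep, every sampled hyperarc has its weight doubled, so for any bucket $Q_{\Htl}^x(\Ftl_i^x) \leq 2\, Q_H^x(F_i^x)$ deterministically (regardless of which hyperarcs were actually sampled). Summing again,
\begin{align*}
  Q_{\Htl}^x\!\left(\bigcup_{i\geq I+1}\Ftl_i^x\right) \;\leq\; 2\, Q_H^x\!\left(\bigcup_{i\geq I+1} F_i^x\right) \;\leq\; \frac{2}{9\lambda}.
\end{align*}

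Finally, the triangle inequality gives
\begin{align*}
  \left|Q_H^x\!\left(\bigcup_{i\geq I+1} F_i^x\right) - Q_{\Htl}^x\!\left(\bigcup_{i\geq I+1}\Ftl_i^x\right)\right| \;\leq\; \frac{1}{9\lambda}+\frac{2}{9\lambda} \;=\; \frac{1}{3\lambda},
\end{align*}
and under the standing hypothesis $\lambda \geq \Ca\log^3 m/\eps^2$ (with $\Ca$ large enough that in particular $\lambda \geq 1/\eps$), this is at most $\eps/3$, as required. There is essentially no obstacle here: the only thing to be careful about is that the bound must be uniform over $x\in\mathbb{S}_H$ and over the random outcome of sampling, but both follow because the argument is entirely deterministic once $x$ is fixed, and because we used only the defining inequality for $F_i^x$ and the weight-doubling rule. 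The genuinely delicate chaining/discretization analysis will be reserved for the complementary range $i \leq I$, which is not addressed by this lemma.
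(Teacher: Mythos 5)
Your proof is correct and takes essentially the same route as the paper: bound each tail hyperarc's energy by $\frac{1}{9m\lambda}$, sum over at most $m$ of them, transfer to $\Htl$ via the deterministic weight-doubling bound, and finish with $\lambda \ge 1/\eps$. The only cosmetic difference is that the paper folds the inequality $\lambda\eps \ge 1$ into the per-hyperarc bound ($\le \eps/9m$) up front, whereas you carry the $\lambda$ through and invoke it at the end, which is equivalent.
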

\begin{proof}
    Due to the assumption in \cref{lemma:DHS-onestep}, $\lambda\eps\geq \frac{\Ca\log^3 m}{\eps}\geq 1$ holds for sufficiently large $\Ca$.
    By the definition of $F_i^x$, the energy of each hyperarc in $\cup_{i\ge I+1} F_i^x$ is less than $\frac{1}{2^I\lambda}$, which is at most $\frac{\eps}{9m}$ by $I = \lceil \log_2(9m) \rceil$ and $\lambda\eps\geq 1$.
    %Since $m\leq (n/\eps)^{r'}$,
    Thus, it holds that
        \begin{align}
            \label{eq:ignore1}
            Q_H^x\left(\cup_{i\ge I+1} F_i^x \right)
            =
            \sum_{f \in \cup_{i\ge I+1} F_i^x}Q_H^x (f)
            \leq m\cdot\frac{\epsilon}{9m}
            \leq
            \frac{\eps}{9}.
        \end{align}
    As for $\Ftl \subseteq F$, since the weight of each hyperarc in $\tilde{F}$ is doubled in \DHOnestep, we have
        \begin{align}
            \label{eq:ignore2}
            Q_{\Htl}^x\left(\cup_{i\ge I+1} \tilde{F}_i^x\right)
            \leq 2\cdot Q_H^x\left(\cup_{i\ge I+1} F_i^x \right)
            \leq
            \frac{2\eps}{9}
            .
        \end{align}
    Combining \cref{eq:ignore1,eq:ignore2}, we obtain the claim.
    %\cref{eq:DHS-onestep-ignore}.
\end{proof}

We then introduce additional definitions for the convenience of describing our discretization scheme and analyzing the sparsification error.
%To define our discretization scheme and perform the analysis, we need additional notation.
\begin{definition}
    For $x\in \mathbb{S}_H$, we say $(u, v) \in V\times V$ is an \textit{$x$-critical pair} of $f \in F$ if we have $(u,v)= \argmax_{(u,v)\in C(f)}(x_u-x_v)_+^2$, breaking ties as in \cref{section:preliminaries}.
    For $i\in \mathbb{Z}$ and $x\in \mathbb{S}_H$, let
    \begin{align}
        E_i^x = \Set*{(u,v) \in C(F)}{\text{$(u,v)$ is an $x$-critical pair of some $f \in F_i^x$}}
    \end{align}
    and, for each $(u,v)\in E_i^x$, let
    \begin{align}
        F_i^{x,uv} = \Set*{f \in F_i^x}{\text{$(u,v)$ is an $x$-critical pair of $f$}}.
    \end{align}
\end{definition}
Note that the collection of $F_i^{x,uv}$ for $(u, v) \in E_i^x$ forms a partition of $F_i^x$.
\Cref{subfig:critical} presents an example of $x$-critical pairs.

We now discuss how to bound $|Q_{\Htl}^x(\tilde{F}_i^x)-Q_{H}^x(F_i^x)|$ for $i$ that is not covered in \cref{lemma:DHS-onestep-iBound}.
By using the Chernoff bound, it is easy to evaluate the probability that $|Q_{\Htl}^x(\tilde{F}_i^x)-Q_{H}^x(F_i^x)|$ is small for each $x\in \mathbb{S}_H$.
To convert it to a uniform bound over all $x \in \mathbb{S}_H$, we construct an appropriate discretization scheme, as follows.

Let $\Delta=\frac{\eps}{9m}$.
For $(u,v)\in E_i^x$, we define the {\em discretization width} as $\Delta^{uv}_i \coloneqq \frac{\Delta}{\max_{f\in F_i^{x,uv}} z_f}$.
%$\Delta_i^{uv}$ as
%\[
%\Delta^{uv}_i=\frac{\Delta}{\max_{f\in F_i^{x,uv}} z_f}.
%\]
Note that $F_i^{x,uv}\neq \emptyset$ holds for $(u, v) \in E_i$
by the definitions of $E_i$ and $F_i^{x,uv}$, and hence $\Delta^{uv}_i$ is well-defined.
The denominator plays the role of scaling the width.
Given any $x\in \mathbb{S}_H$, we consider discretizing $(x_u-x_v)_+^2$ for each $(u,v)\in E_i^x$, not the energy itself.
Specifically, for each $i \in \mathbb{Z}$ and $(u,v)\in E_i^x$, we use $\left\lfloor \frac{(x_u-x_v)_+^2}{\Delta_i^{uv}}\right\rfloor\Delta_i^{uv}$ as a discretized value of $(x_u-x_v)_+^2$.
%\begin{align}
%    d_i^{uv}(x)=\left\lfloor \frac{(x_u-x_v)_+^2}{\Delta_i^{uv}}\right\rfloor\Delta_i^{uv}.
%\end{align}
Then, for each $f\in F_i^{x,uv}$ such that $(u,v)\in E_i^x$, we define the {\em discretized energy} $D^x_H(f)$ by
\[
D^x_H(f) \coloneqq z_f \left\lfloor \frac{(x_u-x_v)_+^2}{\Delta_i^{uv}}\right\rfloor\Delta_i^{uv}.
\]
It should be noted that the discretized energy of $f \in F_i^{x,uv}$ is defined by first discretizing $(x_u-x_v)_+^2$ and then scaling it by $z_f$.
This somewhat indirect discretization scheme will turn out important when bounding the number of possible discretized energies.
%(Note that $(x_u-x_v)_+^2$ depends on $(u,v)\in E_i^x$, while $D^x_H(f)$ depends on hyperarcs $f$. The discretization scheme enables us to count the number of discretized energies based on the size of $E_i^x$, which can be bounded sharply by using a $\lambda$ coreset in the output.)

For each sampled hyperarc $f\in F_i^{x,uv}\cap \Ftl$ with $(u,v)\in E_i^x$, we define the {\em discretized energy after sampling} by $D^x_{\Htl}(f) \coloneqq 2D^x_H(f)$.
We also let $D^x_H(F_i^x)=\sum_{f\in F_i^x} D_H^x(f)$ and $D^x_{\Htl}(\tilde{F}_i^x)=\sum_{f\in \tilde{F}_i^x} D_{\Htl}^x(f)$.
We can ensures that discretization errors are small as follows.
\begin{lemma}\label[lemma]{lemma:DHS-onestep-DError}
For any $x\in \mathbb{S}_H$, we have
\begin{align}
    \sum_{i\in \mathbb{Z}} \left| D^x_H(F_i^x) -  Q^x_H(F_i^x)\right| \leq \frac{\eps}{9}
    & &
    \text{and}
    & &
    \sum_{i\in \mathbb{Z}}\left |D^x_{\Htl}(\tilde{F}_i^x) - Q^x_{\Htl}(\tilde{F}_i^x)\right| \leq \frac{2\eps}{9}.
\end{align}
\end{lemma}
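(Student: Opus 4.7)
The plan is to bound the discretization error hyperarc-by-hyperarc and then sum carefully, exploiting the weight-normalized choice of $\Delta_i^{uv}$. First I would fix $i \in \mathbb{Z}$ and $(u,v) \in E_i^x$ and observe that for each $f \in F_i^{x,uv}$, the definition of an $x$-critical pair yields $Q_H^x(f) = z_f (x_u - x_v)_+^2$. Combined with the floor-based definition of $D_H^x(f)$, this gives the per-hyperarc bound
\begin{equation*}
    \bigl| Q_H^x(f) - D_H^x(f) \bigr| \le z_f \Delta_i^{uv}.
\end{equation*}

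The key observation is that the normalization $\Delta_i^{uv} = \Delta / \max_{f' \in F_i^{x,uv}} z_{f'}$ ensures $z_f \Delta_i^{uv} \le \Delta$ uniformly over $f \in F_i^{x,uv}$. Summing the per-hyperarc bound over $f \in F_i^{x,uv}$ therefore gives a per-pair bound of $\Delta |F_i^{x,uv}|$, and summing over $(u,v) \in E_i^x$---recall $\{F_i^{x,uv}\}_{(u,v) \in E_i^x}$ partitions $F_i^x$---yields the per-level bound $\Delta |F_i^x|$.

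Next I would sum over $i \in \mathbb{Z}$. Since $\{F_i^x\}_{i \in \mathbb{Z}}$ partitions the positive-energy portion of $F \setminus S$ (hyperarcs with zero energy contribute nothing to either side and can be ignored), the sums telescope to $\Delta |F \setminus S| \le \Delta m = \eps/9$, using $\Delta = \eps/(9m)$ and $|F| = m$. This proves the first inequality.

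For the second inequality, the identical argument applies to $\tilde{F}_i^x \subseteq F_i^x$, except that for a sampled hyperarc $f \in \tilde{F}$, both $Q_{\tilde{H}}^x(f)$ and $D_{\tilde{H}}^x(f)$ acquire a factor of two from the weight doubling performed by \DHOnestep. The per-hyperarc error thus becomes $2 z_f \Delta_i^{uv}$, and the same telescoping gives $2\Delta |\tilde{F} \setminus S| \le 2\Delta m = 2\eps/9$. There is no serious obstacle here; the only conceptual point worth highlighting is that it is precisely the indirect, weight-normalized discretization---applying the floor to $(x_u - x_v)_+^2$ rather than to the energy itself---that lets the worst-case per-hyperarc error depend cleanly on $|F_i^x|$ independently of the individual weights, a property that will matter again when bounding the number of distinct discretized energies in the chaining argument.
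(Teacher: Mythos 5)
Your proof is correct and takes essentially the same route as the paper's: bound each per-hyperarc discretization error by $z_f\Delta_i^{uv}\le\Delta$ using the critical-pair identity and the weight normalization in $\Delta_i^{uv}$, then sum the at most $m$ terms to get $\Delta m = \eps/9$ (doubled for $\Htl$). The only difference is that you spell out the intermediate sums over $(u,v)\in E_i^x$ and over $i$, whereas the paper collapses them into a single "at most $m\Delta$" step; both are fine.
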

\begin{proof}
Recall that the discretized energy $D_H^x(f)$ of each $f\in F_i^{x,uv}$ is obtained by discretizing $(x_u-x_v)_+^2$ with the width $\Delta_i^{uv}$ and scaling it by $z_f$.
Therefore, the discretization error for each $f$ is bounded by $z_f\Delta_i^{uv}$.
From the definition of $\Delta_i^{uv}$, we have $z_f\Delta_i^{uv} = z_f \frac{\Delta}{\max_{f\in F_i^{x,uv}} z_f} \leq \Delta$.
Hence, the total discretization error over all $f \in F\setminus S$ is bounded by $m\Delta$, which is at most $\frac{\eps}{9}$ since $\Delta=\frac{\eps}{9m}$. Thus, we obtain the first inequality.
The second inequality follows from the fact that the weights of sampled hyperarcs are doubled.
\end{proof}

From \cref{lemma:DHS-onestep-DError}, we can bound the sparsification error  $|Q_{\Htl}^x(\tilde{F}_i^x)-Q_{H}^x(F_i^x)|$  for all $x\in \mathbb{S}_H$ by bounding $|D_{\Htl}^x(\tilde{F}_i^x)-D_{H}^x(F_i^x)|$ for all $x\in \mathbb{S}_H$.
Since the number of possible discretized energies is finite, we can use the standard Chernoff bound and union bound to evaluate the sparsification error.
Thus, what remains is to prove that the number of discretized energies is small enough so that we can obtain the desired uniform bound.
To this end, we first bound the size of $E_i^x$ and then bound the number of possible discretized values.
The following lemma bounds the size of $E_i^x$, in which the existence of a $\lambda$-coreset plays an important role.

\begin{lemma}\label[lemma]{lemma:DHS-onestep-MoreEnergy}
For $i\in \Z$, we have $\left|E_i^x\right| <  2^{i}$.
\end{lemma}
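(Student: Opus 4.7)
\textbf{Proof plan for \cref{lemma:DHS-onestep-MoreEnergy}.} The plan is to exploit the defining property of the $\lambda$-coreset $S$ to charge each critical pair in $E_i^x$ with a full block of $\lambda$ heavy hyperarcs in $S$, then compare the total energy of these blocks against the normalization $x^\top L_H(x)=1$. The strict inequality will come from the fact that, on top of the coreset blocks, the hyperarcs of $F_i^x$ themselves contribute additional energy that is disjoint from the coreset.

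First I would fix $x\in \mathbb{S}_H$ and, for each $(u,v)\in E_i^x$, pick an arbitrary witness $f_{uv}\in F_i^{x,uv}$. Since $f_{uv}\in F\setminus S$ and $(u,v)\in C(f_{uv})$, the second property of a $\lambda$-coreset (\cref{lemma:coreset}) gives $|S^{uv}|=\lambda$, and the first and third properties say that every $g\in S^{uv}$ has $(u,v)\in C(g)$ and $z_g\ge z_{f_{uv}}$. From this I would deduce the pointwise energy bound
\begin{equation*}
Q_H^x(g)\;\ge\;z_g(x_u-x_v)_+^2\;\ge\;z_{f_{uv}}(x_u-x_v)_+^2\;=\;Q_H^x(f_{uv})\;\ge\;\frac{1}{2^i\lambda}
\end{equation*}
for every $g\in S^{uv}$, where the last inequality uses the definition of $F_i^x$.

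Next I would sum these contributions over $(u,v)\in E_i^x$. The sets $\{S^{uv}\}_{(u,v)\in E_i^x}$ are pairwise disjoint by definition of the coreset, and each is disjoint from $F_i^x$ because $F_i^x\subseteq F\setminus S$ while $S^{uv}\subseteq S$. Hence
\begin{equation*}
1\;=\;x^\top L_H(x)\;\ge\;\sum_{(u,v)\in E_i^x}\sum_{g\in S^{uv}}Q_H^x(g)\;+\;\sum_{f\in F_i^x}Q_H^x(f)\;\ge\;\frac{|E_i^x|\,\lambda}{2^i\lambda}\;+\;\frac{|F_i^x|}{2^i\lambda}.
\end{equation*}
Using the trivial inequality $|F_i^x|\ge |E_i^x|$ (each critical pair is realized by at least one hyperarc), this gives $1\ge |E_i^x|(1+1/\lambda)/2^i$, and hence $|E_i^x|<2^i$, as desired. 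If $E_i^x=\emptyset$ the bound is trivial.

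There is no real obstacle here: the whole argument is a counting estimate that turns the combinatorial guarantee of the coreset (property~(P3) in the technical overview) into a quantitative statement. The only subtlety worth being careful about is the bookkeeping that the coreset blocks $S^{uv}$ are mutually disjoint and also disjoint from $F_i^x$, which is what makes the two sums on the right additive rather than overlapping; this is exactly where the $\lambda$-coreset structure is being used.
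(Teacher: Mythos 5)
Your proof is correct and follows essentially the same route as the paper: both charge each $(u,v)\in E_i^x$ with the $\lambda$ heavy hyperarcs of the disjoint coreset block $S^{uv}$, lower-bound each such hyperarc's energy by $1/(2^i\lambda)$ via the third coreset property, and compare the total against $x^\top L_H(x)=1$. The only cosmetic difference is that you obtain the strict inequality by adding the energy of all of $F_i^x$ and using $|F_i^x|\ge|E_i^x|$, while the paper adds only one witness hyperarc $f^{uv}$ per pair; both yield the same $(\lambda+1)/(2^i\lambda)$ bound per pair.
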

\begin{proof}
By the definition of $E_i^x$, for each $(u,v)\in E_i^x$, there is a hyperarc $f^{uv}\in F_i^x \subseteq F\setminus S$ such that
$(u,v)$ is an $x$-critical pair of $f^{uv}$.
Since $S$ is a $\lambda$-coreset, $S$ admits a partition $\Set*{S^{uv}}{(u,v)\in C(F)}$ satisfying the three conditions in \cref{lemma:coreset}.
Since $f^{uv}\notin S$, the third condition in \cref{lemma:coreset} implies $z_f\geq z_{f^{uv}}$ for any $f\in S^{uv}$.
Hence, for any $f\in S^{uv}$, we have
\begin{align}\label{eq:Ei}
Q_H^x(f^{uv}) = z_{f^{uv}} (x_u-x_v)_+^2 \leq z_{f} (x_u-x_v)_+^2 \leq \max_{(u',v')\in C(f)} z_{f} (x_{u'}-x_{v'})_+^2 = Q_H^x(f).
\end{align}
Since the second condition in \cref{lemma:coreset} implies $|S^{uv}| = \lambda$ for $(u,v) \in E_i^x \subseteq C(F\setminus S)$,
\begin{align}
        %1 & = Q_H^x(F)\qquad (\text{by } x\in \mathbb{S}_H)  \\
        Q_H^x(F) &\geq %\underset{\rm (a)}{\geq}
        \sum_{(u,v)\in E_i^x}\left( Q_H^x(S^{uv}) + Q_H^x(f^{uv})\right)\quad (\text{since all $S^{uv}$ and $f^{uv} \notin S$ are disjoint}) \\
        %& \geq \sum_{(u,v)\in E_i^x} (|S^{uv}|+1) \cdot Q_H^x(f^{uv})\qquad (\text{by \cref{eq:Ei}}) \\
        & \geq \sum_{(u,v)\in E_i^x} (\lambda+1) \cdot Q_H^x(f^{uv})\qquad (\text{by \cref{eq:Ei} and $|S^{uv}| = \lambda$}) \\
	& \geq \sum_{(u,v)\in E_i^x} (\lambda+1) \cdot (2^i\lambda)^{-1}\qquad  (\text{by } f^{uv}\in F_i^x).
	%&>2^{-i}|E_i^x|.
\end{align}
holds, hence $Q_H^x(F)>2^{-i}|E_i^x|$.
Since $Q_H^x(F) = 1$ by $x\in \mathbb{S}_H$, we obtain $|E_i^x|<2^i$.
\end{proof}
From \cref{lemma:DHS-onestep-MoreEnergy}, if $i \le 0$, we have $|E_i^x|< 2^i \leq 1$, which implies $E_i^x = \emptyset$ and $F_i^x = \emptyset$.
Thus, the following corollary holds.
%An easy but important corollary of \cref{lemma:DHS-onestep-MoreEnergy} is the following.
\begin{corollary}\label[corollary]{corollary:DHS-Fixisempty}
If $i\le 0$, we have $F_i^x=\emptyset$.
\end{corollary}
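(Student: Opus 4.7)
The plan is to derive the corollary as an immediate consequence of \cref{lemma:DHS-onestep-MoreEnergy}. First, I would invoke the bound $|E_i^x| < 2^i$ provided by that lemma. For any $i \le 0$, we have $2^i \le 1$, so the strict inequality $|E_i^x| < 1$ forces $|E_i^x| = 0$ since $|E_i^x|$ is a nonnegative integer. Hence $E_i^x = \emptyset$ in this regime.

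Next, I would translate emptiness of $E_i^x$ into emptiness of $F_i^x$. By the definition of $E_i^x$, every hyperarc $f \in F_i^x$ possesses an $x$-critical pair $(u,v)$, and that pair by definition lies in $E_i^x$. Equivalently, the subsets $\{F_i^{x,uv}\}_{(u,v)\in E_i^x}$ form a partition of $F_i^x$, so if $E_i^x$ is empty, then the union defining $F_i^x$ is vacuous and $F_i^x = \emptyset$. Combining the two steps yields the claim. There is no real obstacle here; the statement is a one-line corollary meant to flag that the index $i$ can effectively be restricted to $i \ge 1$ in subsequent summations over the partition $\{F_i^x\}_{i\in\Z}$.
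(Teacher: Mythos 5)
Your argument is correct and matches the paper's reasoning exactly: both invoke \cref{lemma:DHS-onestep-MoreEnergy} to get $|E_i^x| < 2^i \le 1$ for $i \le 0$, conclude $E_i^x = \emptyset$, and then observe that $F_i^x = \emptyset$ since every hyperarc in $F_i^x$ contributes an $x$-critical pair to $E_i^x$. No differences to note.
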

%\begin{proof}
%By \cref{lemma:DHS-onestep-MoreEnergy} and $i\le 0$, $|E_i^x|< 2^i \leq 1$,
%and hence $E_i^x = \emptyset$, implying $F_i^x = \emptyset$.
%\end{proof}

Due to \cref{corollary:DHS-Fixisempty} and \cref{lemma:DHS-onestep-iBound}, we can focus on $i \in \Z$ with $1 \le i \le I = \ceil*{\log_2 9m}$. In this range, we have the following bound on the number of possible discretized values.
\begin{lemma}\label[lemma]{lemma:number-of-values}
For each positive integer $i$, let $L_i = \Set*{\left(F_i^x, \{D^x_H(f)\}_{f\in F_i^x}\right)}{x\in \mathbb{S}_H}$, where $\{D^x_H(f)\}_{f\in F_i^x}$ is the list of the discretized energies over all hyperarcs in $F_i^x$.
If $1 \le i \le I = \ceil*{\log_2 9m}$, we have $|L_i|\leq \left( \frac{648 n^4m^4}{\lambda\eps} \right)^{2^i}$.
%\begin{align}
%        |L_i|\leq
%        \left( \frac{648 n^4m^4}{\lambda\eps} \right)^{2^i}.
%        \label{eq:DHS-onestep-counting}
%    \end{align}
\end{lemma}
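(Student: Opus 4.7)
The strategy is to encode each element of $L_i$ by a richer piece of data and count the encodings. First, I would observe that the pair $(F_i^x,\{D_H^x(f)\}_{f\in F_i^x})$ is completely determined by the tuple
\[
\sigma(x) = \bigl(E_i^x,\ \{(F_i^{x,uv},\,k^{uv})\}_{(u,v)\in E_i^x}\bigr),\qquad k^{uv}=\left\lfloor\frac{(x_u-x_v)_+^2}{\Delta_i^{uv}}\right\rfloor,
\]
since the sets $\{F_i^{x,uv}\}_{(u,v)\in E_i^x}$ partition $F_i^x$ and, for any $f\in F_i^{x,uv}$, the discretized energy $D_H^x(f)=z_f\,k^{uv}\,\Delta/\max_{g\in F_i^{x,uv}}z_g$ depends only on $\sigma(x)$ and the fixed weights of $H$. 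Hence $|L_i|$ is bounded by the number of realizable tuples $\sigma(x)$ as $x$ ranges over $\mathbb{S}_H$.

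Second, I would enumerate this tuple coordinate-by-coordinate. By \cref{lemma:DHS-onestep-MoreEnergy} we have $|E_i^x|<2^i$, so the subset $E_i^x\subseteq V\times V$ can be specified in at most $(n^2)^{2^i}$ ways. For each fixed $(u,v)\in E_i^x$, selecting the maximum-weight hyperarc $f^{*}\in F_i^{x,uv}$ (at most $m$ options) fixes $\Delta_i^{uv}=\Delta/z_{f^{*}}$ with $\Delta=\eps/(9m)$. The energy-bucket condition $z_{f^{*}}(x_u-x_v)_+^2\in[1/(2^i\lambda),1/(2^{i-1}\lambda))$ then forces $k^{uv}$ into the integer interval $[\lfloor 9m/(2^i\lambda\eps)\rfloor,\,\lceil 18m/(2^i\lambda\eps)\rceil]$, which contains at most $O(m/(\lambda\eps))$ integers. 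Moreover, this same bucket together with the choice of $k^{uv}$ restricts the weights of the members of $F_i^{x,uv}$ to a ratio-at-most-$4$ window. To pin down $F_i^{x,uv}$ itself, I would exploit that $(u,v)$ must be the unique $x$-critical pair of every element, so specifying a small number of witness hyperarcs and/or vertex pairs cuts the number of admissible subsets to at most $O(n^2 m^2)$. Multiplying the per-pair factors gives $O(n^2 m^4/(\lambda\eps))$ choices per element of $E_i^x$, and combining with the $(n^2)^{2^i}$ factor for $E_i^x$ and raising to the $2^i$-th power yields the claimed bound $(648 n^4 m^4/(\lambda\eps))^{2^i}$.

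The main obstacle is this last subset-counting step. A naive enumeration of $F_i^{x,uv}$ as an arbitrary subset of $F$ gives $2^m$ choices, which is exponentially too large. The crux is to exploit the combinatorial restriction that $(u,v)$ is the unique critical pair of every $f\in F_i^{x,uv}$, together with the factor-$4$ weight window, to parameterize the subset with only polynomially many bits. This is precisely where the coreset construction pays dividends: the bound $|E_i^x|<2^i$ forced by the $\lambda$-coreset absorbs the polynomial per-pair factors into the exponent $2^i$ without damaging the final estimate.
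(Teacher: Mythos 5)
Your reduction to counting the tuples $\sigma(x)=\bigl(E_i^x,\{(F_i^{x,uv},k^{uv})\}_{(u,v)\in E_i^x}\bigr)$ is valid, and your counts of $E_i^x$ (via $|E_i^x|<2^i$) and of $k^{uv}$ (via the bucket condition and the choice of the extreme-weight hyperarc in $F_i^{x,uv}$) are essentially the same estimates the paper uses. But you have not actually carried out the one step you yourself flag as the crux, namely bounding the number of possible sets $F_i^{x,uv}$. You assert, without argument, that ``specifying a small number of witness hyperarcs and/or vertex pairs cuts the number of admissible subsets to at most $O(n^2m^2)$,'' and this is exactly where the proof lives.

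The difficulty is that $F_i^{x,uv}$ is \emph{not} determined by local, per-pair data alone. The paper's \cref{lemma:counting1} shows that the partition $\{F_i^{x,uv}\}_{(u,v)\in E}$ is pinned down only once you know, simultaneously, (i) the set $E=E_i^x$, (ii) for each $(u,v)\in E$ the \emph{minimum}-weight hyperarc $f_{uv}=\argmin_{f\in F_i^{x,uv}}z_f$ (not the maximum, as in your encoding: the min gives the lower bound $Q^y_H(f^*)\ge z_{f_{uv}}(y_u-y_v)_+^2\ge 2^{-i}/\lambda$ that the argument needs), (iii) the total ordering $\pi_E$ of $E$ by the values $(x_u-x_v)_+^2$, and (iv) the union $\bigcup_{j<i}F_j^x$ of all lower-index buckets. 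Item (iii) is what lets one recover the $y$-critical pair of a hyperarc from the set $E$, and item (iv) is what rules out $f^*$ having migrated into a bucket $F_j^y$ with $j<i$. The paper pays for (iii) only a global factor $(2^i)!$ and for (iv) an induction over $j$ that raises the exponent from $2^i$ to $2^{i+1}$ (\cref{lemma:counting2}), and these two devices are entirely absent from your sketch. Without them there is no reason the per-pair subset $F_i^{x,uv}$ can be encoded in $\poly(n,m)$ bits, so the jump from ``$2^m$ naive choices'' to ``$O(n^2m^2)$'' is unjustified. (Two smaller inaccuracies: the weight ratio within $F_i^{x,uv}$ is at most $2$, not $4$, and it follows from the bucket condition alone rather than from the choice of $k^{uv}$; and the interval for $k^{uv}$ has length of order $9m/(2^i\lambda\eps)$, i.e.\ the $2^i$ belongs in the denominator.)
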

Since the proof of  \cref{lemma:number-of-values} is not short, we first complete the proof of \cref{lemma:DHS-onestep} assuming that \cref{lemma:number-of-values} is true; then, we prove \cref{lemma:number-of-values} in \cref{subsubsec:proof-of-number-of-values-lemma}.

\begin{proof}[Proof of \cref{lemma:DHS-onestep}]
Let $I =  \lceil \log_2(9m) \rceil$ as in \cref{lemma:DHS-onestep-iBound} and define $L_i$ as in \cref{lemma:number-of-values}.
Fix $i \in \set*{1,2,\dots,I}$ and consider any element of $L_i$, which we denote by $(F_i^{y}, \{D_H^{y}(f)\}_{f\in F_i^{y}})$ for some $y\in \mathbb{S}_H$.
Since the discretized energy of each hyperarc is obtained by rounding down, we have $D^y_H(f)\leq Q^y_H(f)$.
Thus, for every $f\in F_i^y$, it holds that
\begin{align}\label{eq:onestep-1}
D_H^{y}(f) \leq Q^y_H(f) < \frac{1}{2^{i-1}\lambda}.
\end{align}
For each $f\in F\setminus S$, let $X_f$ be a random variable that takes $1$ with probability ${1}/{2}$ and $0$ otherwise, which represents the randomness of sampling and hence $D_{\Htl}^y (\tilde{F}_i^y) = \sum_{f\in F_i^y} 2X_fD_H^{y}(f)$.
By $D^y_H(f)\leq Q^y_H(f)$ again, we have
\begin{equation}\label{eq:onestep-2}
%\mathbb{E}\left[D_{\Htl}^y (\tilde{F}_i^y)\right]=
\mathbb{E}\Bigg[\sum_{f\in F_i^y} 2X_fD_H^{y}(f)\Bigg]
=
\sum_{f\in F_i^y}D_H^{y}(f)
=
D_H^{y}(F_i^y)
\leq Q^y_H(F_i^y)\leq Q^y_H(F)=1.
\end{equation}
Due to \cref{eq:onestep-1,eq:onestep-2}, the Chernoff bound (\cref{theorem:chernoff}) with $\mu =1$, $a = \frac{1}{2^{i-2}\lambda}$, and $\delta=\frac{\eps}{3I}$ implies
    \begin{align}
        \mathbb{P}\left[\left|
        D_{\Htl}^y (\tilde{F}_i^y)-D_H^{y}(F_i^y)
            \right| > \frac{\eps}{3I}
        \right]
        &=
        \mathbb{P}\left[\left|
        \sum_{f\in F_i^y} 2X_fD_H^{y}(f)
        -
        \mathbb{E}\left[\sum_{f\in F_i^y} 2X_fD_H^{y}(f)\right]
            \right| > \frac{\eps}{3I}
        \right]
        \\
        &\leq
        2\exp \left(
            - \frac{2^i \cdot \epsilon^2\lambda }{108I^2}
        \right).
    \end{align}
This bound is true for each $(F_i^{y}, \{D_H^{y}(f)\}_{f\in F_i^{y}})\in L_i$, and we can convert it to a uniform bound over all $(F_i^{y}, \{D_H^{y}(f)\}_{f\in F_i^{y}})\in L_i$ by using \cref{lemma:number-of-values} and the union bound as follows:
    \begin{align}
        \mathbb{P}\left[\exists (F_i^{y}, \{D_H^{y}(f)\}_{f\in F_i^{y}})\in L_i,  \left|
        D_{\Htl}^y (\tilde{F}_i^y)-D_H^{y}(F_i^y)
            \right| > \frac{\eps}{3I}
        \right]
        \leq
        2\exp \left(
            - \frac{2^i \cdot \epsilon^2\lambda }{108I^2}
        \right) \cdot \left(\frac{648 n^4m^4}{\lambda \eps}\right)^{2^{i}}.
    \end{align}
We may assume $nm\geq 648$ (otherwise \cref{lemma:DHS-onestep} is trivial for a sufficiently large $\Ca$).
Letting $\Ca$ be sufficiently large, we have $\lambda\geq \frac{\Ca \log^3 m}{\eps^2}\geq \frac{108 I^2}{\eps^2}(6\log n+5\log m)$ and $\lambda\eps \ge 1$.
%(and $\eps\lambda \geq 1$).
Thus, we can further bound the right-hand side from above by
\begin{align}
    2\exp \left(
                - \frac{2^i \cdot \epsilon^2\lambda }{108I^2}
            \right) \cdot \left( n^5m^5\right)^{2^{i}}
    \le
    2\exp \left(
                -2^i \cdot (6\log n+5\log m)
            \right) \cdot \left( n^5m^5\right)^{2^{i}}
    \le
    \frac{2}{n^{2^{i}}}.
\end{align}
Therefore, $\mathbb{P}[\forall (F_i^y, \{D_H^y(f)\}_{f\in F_i^y})\in L_i,  | D_{\Htl}^y(\tilde{F}_i^y)-D_H^y(F_i^y) | \leq \frac{\eps}{3I} ] \geq 1-\frac{2}{n^{2^{i}}}$ holds.
%\begin{align}
%        &\mathbb{P}\left[\forall (F_i^y, \{D_H^y(f)\}_{f\in F_i^y})\in L_i,  \left|
%        D_{\Htl}^y(\tilde{F}_i^y)-D_H^y(F_i^y)
%            \right| \leq \frac{\eps}{3I}
%        \right] \geq 1-\frac{2}{n^{2^{i}}}
%        .
%\end{align}
Since $(F_i^x, \{D^x(f)\}_{f\in F_i^x})\in L_i$ holds for all $x\in \mathbb{S}_H$, we can equivalently rewrite the bound as
\begin{align}
        &\mathbb{P}\left[\forall x\in \mathbb{S}_H,  \left|
        D^x_{\Htl}(\Ftl_i^x)-D^x_H(F_i^x)
            \right| \leq \frac{\eps}{3I}
        \right] \geq 1-\frac{2}{n^{2^{i}}}.
\end{align}
By the union bound over $1\leq i\leq I=\lceil \log_2 (9m)\rceil $ and $\sum_{i=1}^I\frac{2}{n^{2^{i}}}\leq \sum_{i=1}^\infty\frac{2}{n^{2i}}\leq \frac{2}{n^2-1}\leq \frac{3}{n^2}$ (for $n\geq 2$), we obtain
 \begin{align}\label{eq:onestep3}
        &\mathbb{P}\left[\forall x\in \mathbb{S}_H,
        \sum_{i=1}^I
        \left| D^x_{\Htl}(\Ftl_i^x)-D^x_H(F_i^x) \right|
        \leq \frac{\eps}{3}
        \right] \geq 1-\frac{3}{n^2}.
    \end{align}
Thus, for all $x\in \mathbb{S}_H$, we can bound
$|x^{\top} L_{\Htl}(x)-x^{\top} L_{H}(x)| = |Q_{\Htl}^x(\tilde{F})-Q_{H}^x(F)|$ as follows:
\begin{align}
&\left|Q_{\Htl}^x(\tilde{F})-Q_{H}^x(F)\right| \\
={}&\frac{\eps}{3}+\sum_{i=1}^{I} |Q_{\Htl}^x(\tilde{F}_i^x)-Q_{H}^x(F_i^x)| \qquad \prn*{ \text{by \Cref{lemma:DHS-onestep-iBound} and \Cref{corollary:DHS-Fixisempty}} } \\
\leq{}& \frac{\eps}{3}+ \sum_{i=1}^{I}\left[ |Q_{\Htl}^x(\tilde{F}_i^x)-D^x_{\Htl}(\tilde{F}_i^x)|+|D^x_{\Htl}(\tilde{F}_i^x)-D^x_H(F_i^x)|+|D^x_H(F_i^x)-Q_{H}^x(F_i^x)|\right] \\
\leq{}& \frac{\eps}{3}+\frac{\eps}{9}+\frac{\eps}{3}+\frac{2\eps}{9} \qquad \prn*{ \text{by \cref{lemma:DHS-onestep-DError} and \cref{eq:onestep3}} }\\
={}&\eps,
\end{align}
which holds with probability at least $1-\frac{3}{n^2}$.
Hence, $\Htl$ is an $\eps$-spectral sparsifier of $H$.
Combining this with the size bound in \cref{lemma:DHS-size}, we obtain \cref{lemma:DHS-onestep}.
\end{proof}

\subsubsection{Proof of \texorpdfstring{\cref{lemma:number-of-values}}{Lemma~\ref{lemma:number-of-values}}}\label{subsubsec:proof-of-number-of-values-lemma}
We present the proof of \cref{lemma:number-of-values}.
Our goal is to bound the size of $L_i$ defined in \cref{lemma:number-of-values} for $i \in \Z$ with $1\le i \le I = \ceil*{\log_2 9m}$.
To this end, we proceed in two steps:
we first bound the number of possible combinations of $(F_i^x, E_i^x, \{F_i^{x,uv}\}_{f\in E_i^x})$ over all $x\in \mathbb{S}_H$, and then bound the number of possible lists $\set*{D_H^x(f)}_{f \in F^x_i}$ of discretized energies.
For convenience, we define the following notion.
\begin{definition}\label[definition]{def:i_realized}
    Let $(E, \{f_{uv}\}_{(u,v)\in E}, \pi_E)$ be a tuple such that $E\subseteq V\times V$, $\{f_{uv}\}_{(u,v)\in E}$ is a list of hyperarcs indexed by $(u,v) \in E$, and $\pi_E$ is a total ordering on $E$.
    For $i \in \set*{1,2,\dots,I}$, we say $(E, \{f_{uv}\}_{(u,v)\in E}, \pi_E)$ is {\em $i$-realized} by $x\in \mathbb{S}_H$ if the following conditions hold:
    \begin{enumerate}
        \item $E = E_i^x$,
        \item $f_{uv}=\argmin_{f\in F_i^{x,uv}} z_f$ for each $(u,v)\in E_i^x$, and
        \item $\pi_{E}$ is the increasing order of the values of $(x_u-x_v)_+^2$, i.e., $(u,v)$ is smaller than $(u',v')$ in $\pi_E$ if and only if $(x_u-x_v)_+^2\leq (x_{u'}-x_{v'})_+^2$ (where the tie-breaking rule explained in \cref{section:preliminaries} is used when the equality holds).
    \end{enumerate}
\end{definition}

The following lemma says that the $i$-realizability determines $E_i^x, F_i^x$, and $F_i^{x,uv}$, implying that we can reduce the problem of counting the number of possible $(F_i^x, E_i^x, \{F_i^{x,uv}\}_{f\in E_i^x})$ to that of counting the number of possible tuples $(E, \{f_{uv}\}_{(u,v)\in E}, \pi_E)$.
\begin{lemma}\label[lemma]{lemma:counting1}
Let $(E, \{f_{uv}\}_{(u,v)\in E}, \pi_E)$ be a tuple as defined in \cref{def:i_realized} and $x, y \in \mathbb{S}_H$.
If both $x$ and $y$ $i$-realize $(E, \{f_{uv}\}_{(u,v)\in E}, \pi_E)$ and $\bigcup_{j=1}^{i-1} F_j^x=\bigcup_{j=1}^{i-1} F_j^y$ holds, then, for every $(u,v)\in E$, we have $E_i^x=E_i^y$, $F_i^x=F_i^y$, and $F_i^{x,uv}=F_i^{y,uv}$.
%\begin{align}
%    E_i^x=E_i^y, & & F_i^x=F_i^y, & & \text{and} & & F_i^{x,uv}=F_i^{y,uv}.
%\end{align}
\end{lemma}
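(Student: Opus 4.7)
The plan is as follows. Condition~(1) of $i$-realization directly gives $E_i^x = E = E_i^y$, so it suffices to prove $F_i^{x,uv} = F_i^{y,uv}$ for each $(u,v) \in E$; since $\{F_i^{x,uv}\}_{(u,v) \in E_i^x}$ partitions $F_i^x$ by definition of $x$-critical pair, this also yields $F_i^x = F_i^y$.

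The main step is to establish a combinatorial characterization of $F_i^{x,uv}$ that depends only on data shared by $x$ and $y$:
\begin{align*}
F_i^{x,uv} = \left\{ f \in F \setminus S \setminus \bigcup\nolimits_{j=1}^{i-1} F_j^x \;:\; (u,v) \text{ is an } x\text{-critical pair of } f \text{ and } z_f \geq z_{f_{uv}} \right\}.
\end{align*}
The forward inclusion uses condition~(2), which identifies $f_{uv}$ as the minimum-weight element of $F_i^{x,uv}$. For the reverse, if $(u,v)$ is $x$-critical for $f$ and $z_f \geq z_{f_{uv}}$, then $Q_H^x(f) = z_f (x_u - x_v)_+^2 \geq z_{f_{uv}} (x_u - x_v)_+^2 = Q_H^x(f_{uv}) \geq \frac{1}{2^i \lambda}$, where the last inequality uses $f_{uv} \in F_i^{x,uv}$; meanwhile $f \notin \bigcup_{j < i} F_j^x$ forces $Q_H^x(f) < \frac{1}{2^{i-1}\lambda}$, so $f \in F_i^x$, and the $x$-criticality of $(u,v)$ then places $f$ in $F_i^{x,uv}$. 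The same characterization holds for $y$, and the hypothesis $\bigcup_{j<i} F_j^x = \bigcup_{j<i} F_j^y$ ensures that the two characterizations share the same ambient set.

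It thus remains to show that for $f$ in this common ambient set with $z_f \geq z_{f_{uv}}$, the statement ``$(u,v)$ is an $x$-critical pair of $f$'' is equivalent to its $y$-counterpart. By symmetry, assume the $x$-side holds and let $(u^*, v^*)$ be the $y$-critical pair of $f$. If $(u^*, v^*) \in E$, then condition~(3)---the common ordering $\pi_E$---transfers the $y$-dominance of $(u^*, v^*)$ over $(u,v)$ into $x$-dominance (tie-breaking being fixed and consistent), contradicting the $x$-criticality of $(u,v)$ unless $(u^*,v^*) = (u,v)$. If instead $(u^*, v^*) \notin E = E_i^y$, then $(u^*, v^*)$ is not the $y$-critical pair of any hyperarc in $F_i^y$, so $f \notin F_i^y$; combined with $f \in F \setminus S \setminus \bigcup_{j<i} F_j^y$, this forces $f \in F_{j'}^y$ for some $j' > i$, hence $Q_H^y(f) < \frac{1}{2^i\lambda}$. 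But $z_f \geq z_{f_{uv}}$ and the $y$-maximality of $(u^*, v^*)$ yield $Q_H^y(f) \geq z_f (y_u - y_v)_+^2 \geq z_{f_{uv}}(y_u - y_v)_+^2 = Q_H^y(f_{uv}) \geq \frac{1}{2^i \lambda}$ (using $f_{uv} \in F_i^{y,uv}$), a contradiction. The main obstacle is precisely this second case: I need to use realization data for $y$---specifically, the minimum-weight witness $f_{uv}$ and the energy lower bound it supplies---to prevent the $y$-critical pair of $f$ from escaping outside $E$.
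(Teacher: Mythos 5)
Your proof is correct and uses the same key ingredients as the paper's: the weight inequality $z_f \ge z_{f_{uv}}$ to derive the energy lower bound $Q_H^y(f) \ge \tfrac{1}{2^i\lambda}$, the hypothesis $\bigcup_{j<i}F_j^x = \bigcup_{j<i}F_j^y$ to pin $f$ into level $i$, and the shared order $\pi_E$ to force the critical pair to agree. The organization differs slightly (you extract an explicit characterization of $F_i^{x,uv}$ and reduce to a criticality-transfer equivalence, whereas the paper argues directly by contradiction on a witness $f^* \in F_i^{x,u_1v_1}\setminus F_i^{y,u_1v_1}$), but the argument is essentially the same.
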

\begin{proof}
By the definition of the $i$-realizability, we have $E_i^x=E=E_i^y$.
If we can assume $F_i^{x,uv}=F_i^{y,uv}$ for every $(u,v)\in E$, we have $F_i^x=\bigcup_{(u,v)\in E} F_i^{x,uv}=\bigcup_{(u,v)\in E} F_i^{y,uv}=F_i^y$ since $\Set*{F_i^{x,uv}}{(u,v)\in C(F)}$ and $\Set*{F_i^{y,uv}}{(u,v)\in C(F)}$ are partitions of $F_i^x$ and $F_i^y$, respectively.
Therefore, we below focus on proving $F_i^{x,uv}=F_i^{y,uv}$ for every $(u,v)\in E$.

For a contradiction, suppose $F_i^{x,u_1v_1}\neq F_i^{y,u_1v_1}$ for some $(u_1,v_1)\in E$.
Without loss of generality, we assume there is a hyperarc $f^* \in F_i^{x,u_1v_1}\setminus  F_i^{y,u_1v_1}$.
Since both $x$ and $y$ $i$-realize $(E, \{f_{uv}\}_{(u,v)\in E}, \pi_E)$ and $(u_1,v_1)\in E$,
the second condition of the $i$-realizability implies
\begin{equation}\label{eq:counting1-2}
\argmin_{f\in F_i^{x,u_1v_1}} z_f=f_{u_1v_1}=\argmin_{f\in F_i^{y,u_1v_1}} z_f.
\end{equation}
In particular, we have $z_{f_{u_1v_1}}\leq z_{f^*}$ for $f^*\in F_i^{x,u_1v_1}$.
Hence
\begin{align}
Q^y_H(f^*) &= z_{f^*}\max_{(u,v)\in C(f^*)}(y_{u}-y_{v})^2_+ \\
%&\geq z_{f^*} (y_{u_1}-y_{v_1})^2_+ \\
&\geq z_{f_{u_1v_1}}(y_{u_1}-y_{v_1})^2_+ \qquad \prn*{\text{by $(u_1, v_1) \in C(f^*)$ and $z_{f^*} \ge z_{f_{u_1v_1}}$}}\\
&= z_{f_{u_1v_1}}\max_{(u,v)\in C(f_{u_1v_1})} (y_u-y_v)^2_+ \qquad \prn*{\text{by $f_{u_1v_1}\in F_i^{y,u_1v_1}$ as in \cref{eq:counting1-2}}}\\
&\geq \frac{2^{-i}}{\lambda}\qquad (\text{by $f_{u_1v_1}\in F_i^{y}$})
.
\end{align}
From $Q^y_H(f^*)\geq \frac{2^{-i}}{\lambda}$ and $f^*\in F_i^{x,u_1v_1}\subseteq F\setminus S$, it must hold that $f^*\in \bigcup_{j=1}^{i} F_j^y$.
Moreover, since $\bigcup_{j=1}^{i-1} F_j^x=\bigcup_{j=1}^{i-1} F_j^y$ by the lemma assumption and $f^* \notin \bigcup_{j=1}^{i-1} F_j^x$ by $f^*\in F_i^{x,u_1v_1}$, we have $f^*\notin \bigcup_{j=1}^{i-1} F_j^y$, hence $f^* \in F_i^y$.
Since the orderings of $E$ with respect to $(x_u-x_v)_+^2$ and $(y_u-y_v)_+^2$ are both equal to $\pi_E$ and $(u_1, v_1)$ is an $x$-critical pair of $f^*$, we have
\begin{equation}\label{eq:counting1-1}
(u_1, v_1)=\argmax_{(u,v)\in C(f^*)\cap E} (y_u-y_v)_+^2.
\end{equation}
Since $f^*\in F_i^y$, \cref{eq:counting1-1} implies $f^*\in F_i^{y,u_1v_1}$, contradicting the assumption of $f^*\notin F_i^{y,u_1v_1}$.
Therefore, $F_i^{x,uv}=F_i^{y,uv}$ holds for every $(u,v)\in E$.
\end{proof}

\Cref{lemma:counting1} enables us to bound the number of possible $(F_i^x, E_i^x, \{F_i^{x,uv}\}_{f\in E_i^x})$ for $x\in \mathbb{S}_H$.
\begin{lemma}\label[lemma]{lemma:counting2}
For each $i\geq 1$, $\abs*{ \Set*{(F_i^x, E_i^x, \{F_i^{x,uv}\}_{f\in E_i^x}) }{ x\in \mathbb{S}_H } } \leq \left(2^i n^2 m\right)^{2^{i+1}}$ holds.
\end{lemma}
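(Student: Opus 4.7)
The plan is to use Lemma~\ref{lemma:counting1} to reduce the count of triples $(F_i^x, E_i^x, \{F_i^{x,uv}\}_{(u,v)\in E_i^x})$ to the count of $i$-realized tuples $(E, \{f_{uv}\}_{(u,v)\in E}, \pi_E)$, paired with the count of possible ``histories'' $\bigcup_{j<i} F_j^x$. I would prove the slightly stronger cumulative statement by induction on $i$: the number of distinct sequences $(F_1^x, \ldots, F_{i-1}^x, F_i^x, E_i^x, \{F_i^{x,uv}\}_{(u,v)\in E_i^x})$ realizable by some $x\in \mathbb{S}_H$ is at most $\prod_{j=1}^{i}(2^j n^2 m)^{2^j}$, from which the claim follows by projection onto the last three components.

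For the inductive step, fix a realizable history $(F_1^y, \ldots, F_{i-1}^y)$. By Lemma~\ref{lemma:counting1}, among all $x\in \mathbb{S}_H$ sharing this history, the triple $(F_i^x, E_i^x, \{F_i^{x,uv}\})$ is determined by the $i$-realized tuple $(E, \{f_{uv}\}_{(u,v)\in E}, \pi_E)$. Hence I only need to bound the number of such tuples realizable by some compatible $x$. Lemma~\ref{lemma:DHS-onestep-MoreEnergy} gives $|E| = |E_i^x| < 2^i$, so the subset $E \subseteq V\times V$ admits at most $(n^2)^{2^i}$ choices, the indexed family $\{f_{uv}\}_{(u,v)\in E} \in F^{|E|}$ contributes at most $m^{2^i}$ choices, and the total ordering $\pi_E$ on $E$ has at most $(2^i)! \le (2^i)^{2^i}$ choices. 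Multiplying yields at most $(2^i n^2 m)^{2^i}$ tuples per history, which is exactly the factor needed to advance the induction.

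Telescoping the per-level bounds, the total count after $i$ steps is at most $\prod_{j=1}^{i} (2^j n^2 m)^{2^j} \le (2^i n^2 m)^{\sum_{j=1}^{i} 2^j} = (2^i n^2 m)^{2^{i+1}-2} \le (2^i n^2 m)^{2^{i+1}}$, which is the stated bound.

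The main obstacle I anticipate is recognizing that Lemma~\ref{lemma:counting1} requires the cumulative hypothesis $\bigcup_{j<i} F_j^x = \bigcup_{j<i} F_j^y$, so a naive one-step union bound cannot be applied to the single quantity appearing in the lemma statement; the induction must therefore carry the entire history through all previous levels. Once this is set up, the counting itself is elementary: the critical-pair compression provided by the $\lambda$-coreset (via Lemma~\ref{lemma:DHS-onestep-MoreEnergy}) keeps $|E|$ at most $2^i$, and the three factor bounds combine cleanly because the geometric sum $\sum_{j\le i} 2^j$ is bounded by $2^{i+1}$, packaging everything into a single base raised to a single exponent.
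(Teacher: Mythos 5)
Your proposal is correct and follows essentially the same route as the paper: condition on the history $(F_1^x,\dots,F_{i-1}^x)$, invoke Lemma~\ref{lemma:counting1} to reduce the per-history count to counting $i$-realized tuples $(E,\{f_{uv}\},\pi_E)$, bound these three components by $(n^2)^{2^i}\cdot m^{2^i}\cdot(2^i)!$ using $|E_i^x|<2^i$ from Lemma~\ref{lemma:DHS-onestep-MoreEnergy}, and then multiply over levels. The paper phrases the final step as ``inductively using the above bound in increasing order of $j$'' without writing out the cumulative induction you make explicit, but the underlying argument is identical.
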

\begin{proof}
First, we suppose that $F_j^x$ for $j = 1,\dots,i-1$ are fixed.
Then, due to \cref{lemma:counting1}, we can bound the number of possible combinations of $(F_i^x, E_i^x, \{F_i^{x,uv}\}_{f\in E_i^x})$ for all $x\in \mathbb{S}_H$ by counting the number of possible tuples $(E, \{f_{uv}\}_{(u,v)\in E}, \pi_E)$ that can be $i$-realized by some $x\in \mathbb{S}_H$.
Since $|E|<2^i$ by \cref{lemma:DHS-onestep-MoreEnergy}, the number of possible $E$ is $\sum_{k=1}^{|E|}\binom{n^2}{k}\leq\sum_{k=1}^{2^i-1}\binom{n^2}{k}$.
Once $E$ is specified, there are up to $m$ possible choices of $f_{uv}$ for each $(u,v)\in E$.
Furthermore, the number of possible total orderings $\pi_E$ of $E$ is  at most $(|E|)!\leq (2^i)!$.
Thus, the number of possible tuples $(E, \{f_{uv}\}_{(u,v)\in E}, \pi_E)$ that can be $i$-realized by some $x\in \mathbb{S}_H$ is at most $\left(\sum_{k=1}^{2^i-1}\binom{n^2}{k}\right) \cdot m^{2^i} \cdot (2^i)!$.
This is further upper bounded by $\left(2^in^2m\right)^{2^i}$ by a simple calculation.
% \[
% \left(\sum_{k=1}^{|E|}{n^2\choose k}\right) \cdot m^{2^i} \cdot (2^i)!
% \leq 2^i\cdot (n^2)^{2^i}\cdot m^{2^i}\cdot (2^i)!
% \leq 2^i \left(2^in^2m\right)^{2^i}.
% \]

We now remove the assumption that $F_j^x$ for $j=1,\dots, i-1$ are fixed.
By inductively using the above bound in increasing order of $j$, the number of possible combinations of $(F_i^x, E_i^x, \{F_i^{x,uv}\}_{f\in E_i^x})$ over all $x\in \mathbb{S}_H$ is at most
$
\prod_{j=1}^i \left(2^j n^2m\right)^{2^j}
\leq
\left(2^i n^2 m\right)^{\sum_{j=1}^{i}2^j}
\leq
\left(2^i n^2 m\right)^{2^{i+1}}
$, thus completing the proof.
\end{proof}

We then fix any tuple $(F_i^y, E_i^y, \{F_i^{y,uv}\}_{f\in E_i^y})$ for some representative $y \in \mathbb{S}_H$ and upper bound the number of possible lists of discretized energies, $\{D^x_H(f)\}_{f\in F_i^x}$, over a subspace of $\mathbb{S}_H$ that consists of $x$ with $(F_i^x, E_i^x, \{F_i^{x,uv}\}_{f\in E_i^x}) = (F_i^y, E_i^y, \{F_i^{y,uv}\}_{f\in E_i^y})$.

% The above lemma gives an upper bound on the number of $i$-realizable tuples.
% Next, to bound $|L_i|$, we evaluate the number of possible lists of discretized energies, $\{D^x_H(f)\}_{f\in F_i^x}$, over all $x\in \mathbb{S}_H$, supposing that $x$ $i$-realizes some $i$-realizable tuple fixed arbitrarily.
\begin{lemma}\label[lemma]{lemma:counting3}
Let $i\geq 0$ and fix $y\in \mathbb{S}_H$ arbitrarily.
The number of possible lists $\{D^x_H(f)\}_{f\in F_i^x}$ for all $x\in \mathbb{S}_H$ with $(F_i^x, E_i^x, \{F_i^{x,uv}\}_{(u,v)\in E_i^x}) = (F_i^y, E_i^y, \{F_i^{y,uv}\}_{(u,v)\in E_i^y})$ is at most $\left( \frac{9m}{2^{i-2}\lambda \eps}\right)^{2^i}$.
\end{lemma}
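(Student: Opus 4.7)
The plan is to observe that once the combinatorial tuple $(F_i^x, E_i^x, \{F_i^{x,uv}\}_{(u,v)\in E_i^x})$ is fixed to equal $(F_i^y, E_i^y, \{F_i^{y,uv}\}_{(u,v)\in E_i^y})$, the only remaining freedom in the list $\{D^x_H(f)\}_{f\in F_i^x}$ is captured by exactly one integer per critical pair in $E_i^x$. Indeed, the width $\Delta_i^{uv} = \Delta / \max_{f\in F_i^{x,uv}} z_f$ depends only on the block $F_i^{x,uv}$ and the hyperarc weights of $H$, so it is the same for all admissible $x$. Moreover, every $f \in F_i^{x,uv}$ has $(u,v)$ as its $x$-critical pair, so $\max_{(u',v')\in C(f)}(x_{u'}-x_{v'})_+^2 = (x_u-x_v)_+^2$, and hence $D^x_H(f) = z_f \cdot k^x_{uv} \cdot \Delta_i^{uv}$ with a \emph{single} integer $k^x_{uv} := \lfloor (x_u-x_v)_+^2 / \Delta_i^{uv} \rfloor$ shared across the entire block $F_i^{x,uv}$. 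Thus the whole list is parametrized by the tuple $(k^x_{uv})_{(u,v)\in E_i^x}$.

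Next I would bound the range of each $k^x_{uv}$. Set $f^{*} := \argmax_{f\in F_i^{x,uv}} z_f$, so that $\Delta_i^{uv} = \Delta / z_{f^*} = \eps / (9m\, z_{f^*})$. Since $f^* \in F_i^x$ and $(u,v)$ is its $x$-critical pair, the definition of $F_i^x$ gives $z_{f^*} (x_u-x_v)_+^2 = Q_H^x(f^*) < \frac{1}{2^{i-1}\lambda}$, and dividing by $\Delta_i^{uv}$ yields $(x_u-x_v)_+^2 / \Delta_i^{uv} < \frac{9m}{2^{i-1}\lambda\eps}$. Consequently $k^x_{uv}$ is a nonnegative integer strictly smaller than $\frac{9m}{2^{i-1}\lambda\eps}$, so it can take at most $\lceil \frac{9m}{2^{i-1}\lambda\eps} \rceil \leq \frac{9m}{2^{i-2}\lambda\eps}$ distinct values (this simplification is valid when the right-hand side is $\geq 1$; otherwise the claimed bound is vacuous).

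Finally, by \cref{lemma:DHS-onestep-MoreEnergy} we have $|E_i^x| < 2^i$, so the number of possible integer tuples $(k^x_{uv})_{(u,v)\in E_i^x}$, and hence the number of possible lists $\{D^x_H(f)\}_{f\in F_i^x}$, is at most $\prn*{ \frac{9m}{2^{i-2}\lambda\eps} }^{|E_i^x|} \le \prn*{ \frac{9m}{2^{i-2}\lambda\eps} }^{2^i}$, as claimed.

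The main obstacle is conceptual rather than computational: a naive argument that assigns each hyperarc in $F_i^x$ its own rounding index would introduce up to roughly $2^i\lambda$ free integers and inflate the count by a $\lambda$ factor in the exponent, violating requirement~(R2) from \cref{section:DHS-TechnicalOverview}. What makes the proof go through is the \emph{indirect} discretization scheme—rounding the pair-level quantity $(x_u-x_v)_+^2$ using a width that absorbs $z_{f^*}$, rather than rounding $Q_H^x(f)$ directly—which forces every hyperarc sharing a critical pair to share its rounding index. Combined with the coreset-driven bound $|E_i^x| < 2^i$, this compresses the count to $\exp(\tilde O(2^i))$, precisely what the subsequent chaining/union bound step needs.
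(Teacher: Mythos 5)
Your proof is correct and follows the same route as the paper's: parametrize the list by one rounding index per critical pair (exploiting the indirect discretization at the pair level), bound the range of that index by roughly $9m/(2^{i-1}\lambda\eps)$, and invoke \cref{lemma:DHS-onestep-MoreEnergy} to cap the number of pairs by $2^i$. The only cosmetic difference is in how the extra factor of $2$ is absorbed: you use the tight bound $(x_u-x_v)_+^2 < \frac{1}{2^{i-1}\lambda z_{f^*}}$ (via the maximum-weight hyperarc) together with $\lceil a\rceil \le 2a$, whereas the paper uses the looser bound via $\min_{f} z_f$ and then the spread $\max z_f / \min z_f \le 2$ (which follows from all $f\in F_i^{x,uv}$ having energies in $[1/(2^i\lambda), 1/(2^{i-1}\lambda))$). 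Both are valid and arrive at the same quantity.
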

\begin{proof}
Let $x \in \mathbb{S}_H$ satisfy the condition in the lemma statement and fix $(u,v)\in E_i^x$.
Since every $f\in F_i^{x,uv} \subseteq F_i^x$ satisfies $z_f(x_u-x_v)_+^2 = Q_H^x(f) < \frac{1}{2^{i-1}\lambda}$, the range of $(x_u-x_v)_+^2$ is restricted to $\Big[0, \frac{1}{2^{i-1}\lambda \min_{f\in F_i^{x,uv}} z_f}\Big)$.
Hence, the number of possible discretized $(x_u-x_v)_+^2$ values, $\left\lfloor {(x_u-x_v)_+^2}/{\Delta_i^{uv}}\right\rfloor\Delta_i^{uv}$, over all $x \in \mathbb{S}_H$ under the lemma condition is at most
\begin{equation}\label{eq:counting3-2}
    \frac{1}{\Delta_i^{uv}2^{i-1}\lambda \min_{ f\in F_i^{x,uv}} z_f}
    =
    \frac{1}{\Delta 2^{i-1}\lambda}\cdot \frac{\max_{ f\in F_i^{x,uv}} z_f}{\min_{ f\in F_i^{x,uv}} z_f}
    \le
    \frac{1}{\Delta 2^{i-2}\lambda},
\end{equation}
where the equality is due to $\Delta^{uv}_i={\Delta}/{\max_{f\in F_i^{x,uv}} z_f}$ and the inequality comes from
$z_f(x_u-x_v)_+^2 = Q_H^x(f) \in \left[ \frac{1}{2^i\lambda} , \frac{1}{2^{i-1}\lambda} \right)$ for $f\in F_i^{x,uv} \subseteq F_i^x$, i.e., $\max_{ f\in F_i^{x,uv}} z_f \leq 2\min_{ f\in F_i^{x,uv}} z_f$.

Since the discretized energy of $f\in F_i^{x,uv}$ is defined by $D^x_H(f) = z_f \left\lfloor {(x_u-x_v)_+^2}/{\Delta_i^{uv}}\right\rfloor\Delta_i^{uv}$, fixing the discretization of $(x_u-x_v)_+^2$ determines discretized energies of all $f\in F_i^{x,uv}$.
Therefore, the number of possible lists $\{D^x_H(f)\}_{f\in F_i^{x,uv}}$ is also bounded by \cref{eq:counting3-2} for each $(u,v)\in E_i^x$.
Since $|E_i^x| < 2^i$ by \cref{lemma:DHS-onestep-MoreEnergy}, the number of possible lists $\{D^x_H(f)\}_{f\in F_i^x}$ is at most $\big(\frac{1}{\Delta 2^{i-2}\lambda}\big)^{2^i}$.
By substituting $\Delta=\frac{\eps}{9m}$ into it, we obtain the lemma.
\end{proof}

We are now ready to prove \cref{lemma:number-of-values}.
\begin{proof}[Proof of \cref{lemma:number-of-values}]
We can uniquely specify any element of $L_i$ by first fixing $(F_i^x, E_i^x, \{F_i^{x,uv}\}_{f\in E_i^x})$ and then $\{D^x_H(f)\}_{f\in F_i^x}$.
Therefore, we have
$
|L_i|\leq (2^i n^2 m)^{2^{i+1}}\cdot \left( \frac{9m}{2^{i-2}\lambda \eps}\right)^{2^i}
=
\left(
\frac{36\cdot 2^in^4m^3}{\lambda\eps}
\right)^{2^i}
$ by \cref{lemma:counting2,lemma:counting3}.
Combining this with $i \le I =  \lceil \log_2 9m \rceil$ completes the proof.
\end{proof}

\subsection{Proof of \texorpdfstring{\cref{theorem:DHS-main}}{Theorem~\ref{theorem:DHS-main}}}\label{section:DHS-NearlyTightSparsification}
Let $H=(V,F,z)$ be a directed hypergraph with $|V|=n$ and $|F|=m$, $\eps\in (0,1)$, and $\Htl=(V,\tilde{F},\tilde{z})$ the output of {\DHSparsify}$(H,\eps)$.
 Our goal is to prove that $\Htl$ is an $\eps$-spectral sparsifier of $H$ and $|\tilde{F}|=O\left(\frac{n^2}{\eps^2}\log^3\frac{n}{\eps}\right)$.
We here use $m^*$, $T$, $i_{\rm end}$, $(\Htl_i = (V,\tilde{F}_i, \tilde{z_i}), \lambda_i)$, $m_i$, and $\eps_i$ given in the description of {\DHSparsify}$(H,\eps)$ (\cref{alg:DHS-iterative}), where
$m^*=\frac{n^2}{\eps^2}\log^3 \frac{n}{\eps}$ is the target sparsifier size,
$T = \ceil*{\log_{4/3} \left(\frac{m}{m^*}\right)}$ is the maximum number of iterations,
$\Tend$ is the number of iterations performed,
($\Htl_i = (V,\tilde{F}_i, \tilde{z_i})$, $\lambda_i$) is the input of {\DHOnestep} at the $i$th iteration, $m_i=|\tilde{F}_i|$, and $\eps_i = \frac{\eps}{4 \log_{4/3}^2\left(\frac{m_i}{m^*}\right)}$, as in \cref{line:DHSparsify-eps-lambda} of \cref{alg:DHS-iterative}.

We first show that the number of hyperarcs decreases geometrically in each step.
\begin{lemma}\label[lemma]{lemma:DH-threequarter}
    Let $m_i$ be the number of hyperarcs in $\Htl_i$.
    Assume $m_i \ge \Cc m^* = \Cc \frac{n^2}{\eps^2} \log^3 \frac{n}{\eps}$ for a sufficiently large constant $\Cc$. Then, we have $\prn*{3m_i\log n}^{\frac12} + \lambda_i n^2 \leq \frac{m_i}{4}$.
\end{lemma}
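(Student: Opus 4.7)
The plan is to choose $\Cc$ large enough that each of the two summands on the left is at most $m_i/8$.

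For the first summand, $(3m_i \log n)^{1/2} \leq m_i/8$ is equivalent to $m_i \geq 192 \log n$. By the hypothesis $m_i \geq \Cc m^* = \Cc (n^2/\eps^2) \log^3(n/\eps)$ and the fact that $\log(n/\eps) \geq \log n$ for $\eps < 1$, we have $m_i \geq \Cc n^2 \log n$, which dominates $192 \log n$ as soon as $\Cc \geq 1$ and $n \geq 2$. So this half is immediate.

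The real work is the second summand. Set $R := m_i / m^* \geq \Cc$. From \cref{line:DHSparsify-eps-lambda},
\[
    \eps_i^{-2} \;=\; \frac{16 \log_{4/3}^4 R}{\eps^2}
    \qquad\text{and}\qquad
    \lambda_i \;\leq\; \frac{\Ca \log^3 m_i}{\eps_i^2} + 1,
\]
so $\lambda_i n^2 \leq \frac{16 \Ca\, n^2 \log^3 m_i \log_{4/3}^4 R}{\eps^2} + n^2$. The additive $n^2$ fits inside, say, $m_i/16$ using $m_i \geq \Cc n^2$. For the main piece, substituting $m_i = R\, n^2 \log^3(n/\eps)/\eps^2$ reduces $\lambda_i n^2 \leq m_i/16$ to
\[
    C\, \Ca \cdot \log^3 m_i \cdot \log_{4/3}^4 R \;\leq\; R \log^3(n/\eps)
\]
for an absolute constant $C$. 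Since $\log m_i = \log R + \log m^*$ and $\log m^* = O(\log(n/\eps))$, we have $\log^3 m_i = O(\log^3 R + \log^3(n/\eps))$; splitting the target inequality accordingly, it suffices to verify (i)~$R \geq C_1 \log_{4/3}^4 R$ and (ii)~$R \log^3(n/\eps) \geq C_2 \log^3 R \cdot \log_{4/3}^4 R$ for constants $C_1, C_2$. Both are of the form ``$R$ dominates a polylog of $R$,'' using $\log(n/\eps) \geq \log 2$ to discharge the logarithmic factor on the right of (ii), and both hold whenever $R \geq \Cc$ for a sufficiently large absolute constant $\Cc$.

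The only subtle point is that $m_i$ can be as large as $4^n$, so the factor $\log^3 m_i$ appearing in $\lambda_i$ cannot be controlled by $\log^3(n/\eps)$ alone. The decomposition $\log m_i = \log R + \log m^*$ cleanly separates this into a polylog-in-$R$ piece (absorbed by the linear growth of $R$ on the right-hand side) and an $O(\log(n/\eps))$ piece (absorbed by the $\log^3(n/\eps)$ factor on the right). Apart from making $\Cc$ large enough to dominate these polylog factors, the computation is routine.
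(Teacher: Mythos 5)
Your proposal is correct and follows essentially the same route as the paper: both substitute the ratio $R = m_i/m^*$ (the paper calls it $\alpha$), decompose $\log m_i = \log R + \log m^*$, observe that the $\log^3 m^* = O(\log^3(n/\eps))$ contribution is absorbed by the $\log^3(n/\eps)$ factor hidden in $m^*$, and reduce the remaining inequality to ``$R$ dominates $\mathrm{polylog}(R)$,'' which holds once $R \geq \Cc$ for an absolute constant. Your explicit handling of the ceiling in $\lambda_i$ and the cleaner split into (i) and (ii) are cosmetic differences; the mathematical content matches the paper's.
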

\begin{proof}
It is easy to show that $\prn*{3m_i\log n}^{\frac12} \leq \frac{m_i}{8}$ holds if $m_i\geq 192\log n$, which is true if $\Cc$ is sufficiently large.
Hence, the desired inequality holds if $\lambda_i n^2 \leq \frac{m_i}{8}$, which we show below.

By \cref{line:DHSparsify-eps-lambda} in \cref{alg:DHS-iterative}, we have $\eps_i=\frac{\eps}{4\log^2_{4/3}\frac{m_i}{m^*}}$ and
$\lambda_i=\ceil*{ \frac{\Ca \log^3 m_i}{\eps_i^2} }$.
Hence,
\[
    \frac{m_i}{8}-\lambda_i n^2\geq \frac{m_i}{8}-\frac{2500\Ca n^2}{\eps^2}\log^3 m_i \log^4\frac{m_i}{m^*}
    \qquad
    \text{(by $4^2/\log^4(4/3) < 2500$)}.
\]
Let $m_i=\alpha m_*$ and $g(\alpha)$ be the right-hand side of the above inequality, which we regard as a function of $\alpha$.
Since $m^*=(n/\eps)^2\log^3({n}/{\eps})$, we have
\begin{align}
    g(\alpha) &= m_*
    \left(
        \frac{\alpha}{8} - \frac{2500\Ca}{\log^3(n/\eps)}\log^3 (\alpha m_*)\log^4\alpha
    \right)
    \\
    &\geq
    m_*\left(
        \frac{\alpha}{8} - \frac{10000\Ca}{\log^3(n/\eps)}(\log^3 \alpha + \log^3 m_*) \log^4\alpha
    \right)\quad (\text{by $(a+b)^3 \leq 4(a^3 + b^3)$})
    \\
    &\geq
    m_*\left(
        \frac{\alpha}{8} - 10000\Ca\left(\frac{\log^3 \alpha}{\log^3(n/\eps)} + 125\right) \log^4\alpha
    \right)\quad \Big(\text{by $m_* = \frac{n^2}{\eps^2}\log^3 \frac{n}{\eps} \leq \left(\frac{n}{\eps}\right)^5$}\Big).
\end{align}
Thus, there exists a sufficiently large constant $\Cc$, which is independent of $n$ and $\eps$, such that $g(\alpha)\geq 0$ holds for all $\alpha \ge \Cc$.
Using this constant $\Cc$, for all $m_i\geq \Cc m^*$, we have $\lambda_i n^2\leq \frac{m_i}{8}$ as desired.
\end{proof}

\begin{proof}[Proof of \cref{theorem:DHS-main}]
We say \DHOnestep$(H_i,\lambda_i)$ is {\em successful} if $\Htl_{i+1}$ is an $\eps_{i}$-spectral sparsifier of $\Htl_i$ and $m_{i+1} \leq \frac34 m_i$ holds.
{\DHSparsify}$(H,\eps)$ calls \DHOnestep$(H_i,\lambda_i)$ only when $m_i \geq \Cc m^*$ and $i\leq T$.
Therefore, by \cref{lemma:DHS-onestep,lemma:DH-threequarter}, with probability at least $1-O\left(\frac{T}{n^2}\right) \gtrsim 1 - O\left(\frac{1}{n}\right)$, \DHOnestep$(H_i,\lambda_i)$ is successful for all $i$ with $0\leq i\leq i_{\rm end}$.
Hence, assuming all \DHOnestep$(H_i,\lambda_i)$ to be successful, we below prove that the output hypergraph $\Htl$ has $O\left(\frac{n^2}{\eps^2}\log^3\frac{n}{\eps}\right)$ hyperarcs and that $\Htl$ is an $\eps$-spectral sparsifier of $H$.

We first discuss the size of $\Htl$.
If $m_i \leq \Cc m^*=\frac{\Cc n^2\log^3 (n/\eps)}{\eps^2}$ occurs for some $i\leq T-1$, then $m_i$ gives the size of $\Htl$ by the termination rule of {\DHSparsify}, which is already small enough.
Hence we below assume $m_i \geq \Cc m^*$ for all $i< T$.
Since every \DHOnestep$(H_i,\lambda_i)$ is successful,
$m_{i+1} \leq \frac34 m_i$ holds for all $i=0,1,\cdots,T-1$.
Thus, it holds that
\begin{align}
    m_{T} \leq m \cdot \left(\frac34\right)^T
    \leq
    m \cdot \left(\frac34\right)^{\log_{4/3}\frac{m\eps^2}{n^2\log^3 (n/\eps)}}
    =
    \frac{n^2\log^3 (n/\eps)}{\eps^2}.
\end{align}
Therefore, we have $|\Ftl| = O\left(\frac{n^2}{\eps^2}\log^3\frac{n}{\eps}\right)$.

We then show that $\Htl$ is an $\eps$-spectral sparsifier of $H$.
Since $\Htl_{i+1}$ is an $\eps_i$-spectral sparsifier of $\Htl_i$ for all $i=0,1,\cdots, \Tend-1$, the output hypergraph $\Htl = \Htl_{\Tend}$ is an $\tilde{\eps}$-spectral sparsifier of $H$, where
\begin{align}
    \tilde{\eps} = \max\left\{
    \prod_{i=0}^{\Tend-1} \left(1 + \eps_i\right) - 1,
    1 - \prod_{i=0}^{\Tend-1} \left(1 - \eps_i\right)
    \right\}
    .
\end{align}
A simple calculation yields the following upper bound on $\tilde{\eps}$:
\begin{align}\label{eq:accumulatederrors-1}
    \tilde{\eps}
    \leq
    \sum_{j=1}^{\Tend}
    \sum_{0\leq i_1<\cdots<i_j\leq \Tend-1}
    \eps_{i_1}\eps_{i_2}\cdots\eps_{i_j}
    \leq
    \sum_{j=1}^{\Tend}
    %\frac{1}{j!}
    \left(\sum_{i=0}^{\Tend-1} \eps_i\right)^j
    % \leq
    % \exp \left(\sum_{i=0}^{\Tend} \eps_i\right) - 1
    .
\end{align}
Since $m_{i+1}\leq \frac34 m_i$ and $m_{\Tend - 1}\geq \Cc m^*$,
we have $m_{\Tend -j}\geq \left(\frac{4}{3}\right)^{j-1} \Cc m^* \ge \prn*{\frac43}^j m^*$ for sufficiently large $\Cc \ge \frac43$, hence $\log_{4/3} \left(\frac{m_{\Tend -j}}{m^*}\right)\geq j$.
Using $\sum_{j =1}^\infty \frac{1}{j^2} \le \frac{\pi^2}{6}$,
we obtain
\begin{align}\label{eq:accumulatederrors-2}
\sum_{i=0}^{\Tend-1} \eps_i = \sum_{i=0}^{\Tend-1} \frac{\eps}{4 \log_{4/3}^2\left(\frac{m_i}{m^*}\right)}
\leq
\sum_{j=1}^{\infty} \frac{\eps}{4 j^2}
\leq
\frac{\eps}{4} \cdot \frac{\pi^2}{6}\leq \frac{\eps}{2}
.
\end{align}
Putting this into the right-hand side of \cref{eq:accumulatederrors-1}, we have
\begin{align}\label{eq:accumulatederrors-3}
    \sum_{j=1}^{\Tend}
    \left(\sum_{i=0}^{\Tend-1} \eps_i\right)^j
    \leq
    \sum_{j=1}^{\Tend} \prn*{\frac{\eps}{2}}^j
    \leq
    \frac{\frac{\eps}{2}}{1-\frac{\eps}{2}}
    \leq
    \eps.
\end{align}
%holds for any $\eps \in (0, 1)$.
By \cref{eq:accumulatederrors-1,eq:accumulatederrors-3}, $\Htl=\Htl_{\Tend}$ is an $\eps$-spectral sparsifier of $H$.

To conclude, with probability at least $1-O\left(\frac{1}{n}\right)$, {\rm DH-Sparsify}$(H,\eps)$ outputs an $\eps$-spectral sparsifier of $H$ with $O\left(\frac{n^2}{\eps^2}\log^3\frac{n}{\eps}\right)$ hyperarcs.
\end{proof}

\subsection{Total Time Complexity}\label{section:DHS-ComputationalComplexity}
We show that our algorithm runs in $O(r^2 m)$ time with probability at least $1 - O(1/n)$.
\begin{theorem}\label[theorem]{theorem:DHS-complexity}
For any directed hypergraph $H=(V,F,z)$ with the rank $r$ and $m$ hyperarcs and $\eps \in (0, 1)$, {\DHSparsify}$(H, \eps)$ runs in $O(r^2m)$ time with probability at least $1 - O(1/n)$.
\end{theorem}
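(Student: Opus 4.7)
The plan is to bound the work done in one iteration of \DHOnestep$(\Htl_i,\lambda_i)$ by $O(r^2 m_i)$, where $m_i = |\Ftl_i|$, and then conclude $O(r^2 m)$ by summing a geometric series that, with high probability, telescopes. Since \DHSparsify\ executes at most $T = \lceil \log_{4/3}(m/m^*) \rceil$ iterations, and, by \cref{lemma:DHS-size} combined with \cref{lemma:DH-threequarter}, each iteration satisfies $m_{i+1} \le \tfrac34 m_i$ with probability at least $1 - 2/n^2$, a union bound gives that every iteration succeeds with probability at least $1 - 2T/n^2$. Because $m \le 4^{n}$ (the maximum possible size of a directed hypergraph on $n$ vertices), we have $T = O(n)$ and hence $1 - 2T/n^2 = 1 - O(1/n)$.

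For the per-iteration bound, I would analyze the two components of \DHOnestep\ separately. The sampling loop over $F\setminus S$ trivially runs in $O(m_i)$ time. For \CoresetFinder, the key counting estimate is that, since $|t(f)|+|h(f)| \le r$, one has $|C(f)| = |t(f)|\cdot|h(f)| \le r^2/4$, so
\[
\sum_{(u,v)\in C(F)} |A^{uv}| = \sum_{f\in F} |C(f)| \le r^2 m_i / 4.
\]
Enumerating the pairs $(u,v)\in C(f)$ for each $f$ therefore builds all the lists $\{A^{uv}\}$ in $O(r^2 m_i)$ time. Representing $S$ as a bit-array indexed by $F$ lets each membership test run in $O(1)$, so computing $A^{uv}\setminus S$ costs $O(|A^{uv}|)$; finding the $\lambda$ heaviest elements can be done in the same $O(|A^{uv}|)$ via deterministic linear-time selection. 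Summing over $(u,v)\in C(F)$ yields an $O(r^2 m_i)$ bound for \CoresetFinder, hence an $O(r^2 m_i)$ bound for one call to \DHOnestep.

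Conditional on every iteration succeeding, the total running time is then
\[
\sum_{i=0}^{\Tend - 1} O(r^2 m_i) \le O(r^2 m) \sum_{i=0}^{\infty} (3/4)^i = O(r^2 m),
\]
which happens with probability at least $1 - O(1/n)$, matching the claim. The main obstacle I anticipate is justifying the $O(r^2 m_i)$ bound for \CoresetFinder\ rigorously: a naive implementation that sorts each $A^{uv}$ or maintains dynamic priority queues keyed by $S$-membership could incur an extra logarithmic factor, and one must also check that the outer loop over $(u,v)\in C(F)$ never re-scans a pair whose work is not charged to $|A^{uv}|$. These issues are resolved by the bit-array representation of $S$ combined with linear-time selection, and verifying this carefully is the only nontrivial step; everything else is a routine geometric-series summation.
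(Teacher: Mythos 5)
Your proposal is correct and follows essentially the same approach as the paper's proof: bound each call to \DHOnestep{} by $O(r^2 m_i)$ via $\sum_f |C(f)| = O(r^2 m_i)$ and linear-time selection, then sum a geometric series conditional on all iterations halving (up to constants) with high probability. The only addition you make is spelling out that $T = O(n)$ because $m \le 4^n$ and giving bit-array implementation details, both of which the paper leaves implicit; neither changes the argument.
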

\begin{proof}
We first discuss the running time of \DHOnestep$(\Htl_i, \lambda_i)$, where $\Htl_i = (V, \Ftl_i, \ztl_i)$ and $|\Ftl_i| = m_i$.
It first constructs a $\lambda_i$-coreset by calling \CoresetFinder$(\Htl_i, \lambda_i)$.
{\CoresetFinder} first constructs $A^{uv} = \Set*{f \in F}{C(f) \ni (u,v)}$ for $(u,v) \in C(F)$, which is done in $O(r^2 m_i)$ time since we have $|C(f)| = O(r^2)$ for each $f \in \Ftl_i$.
Then, for each $(u,v) \in C(F)$, it selects the $\lambda_i$ heaviest hyperarcs from $A^{uv}\setminus S$ in $O(|A^{uv}\setminus S|)$ time by using a selection algorithm \citep{blum1973time}, thus taking $O\prn*{\sum_{(u, v) \in C(F)}|A^{uv}\setminus S|} = O(r^2 m_i)$ time in total.
Therefore, \CoresetFinder$(\Htl_i, \lambda_i)$ takes $O(r^2m_i)$ time.
After that, {\DHOnestep} samples the remaining hyperarcs in $O(m_i)$ time.
Thus, \DHOnestep$(\Htl_i, \lambda_i)$ takes $O(r^2m_i)$ time.

We then bound the total time complexity.
Since {\DHSparsify}$(H, \eps)$ calls \DHOnestep$(\Htl_i, \lambda_i)$ for $i = 0,1,\dots,T-1$ (or stops earlier), the total time complexity is at most $O\prn*{r^2 \sum_{i=0}^{T-1} m_i}$.
From \cref{lemma:DHS-size,lemma:DH-threequarter}, whenever {\DHOnestep} is called, we have $m_{i+1} \le \frac34 m_i$ with probability at least $1 - O(1/n^2)$.
This implies that $\sum_{i=0}^{T-1} m_i \le m \sum_{i=0}^{T-1} \prn*{\frac34}^{i} \le 4m$ holds with probability at least $1 - O(T/n^2) \gtrsim 1 - O(1/n)$.
Therefore, the total time complexity is bounded by $O(r^2 m)$ with probability at least $1 - O(1/n)$.
\end{proof}

\section{Lower Bound on Sparsification of Directed Hypergraphs}\label{section:LowerBound}
We show the existence of an unsparsifiable directed hypergraph with $\Omega(n^2/\eps)$ hyperarcs and the rank three, thus proving \cref{theorem:Intro-DH-lower}.

\begin{proof}[Proof of \cref{theorem:Intro-DH-lower}]
    We construct an unsparsifiable bipartite directed hypergraph $H = (U\cup W, F, z)$ with $2n$ vertices such that $|U| = |W| = n$.
    For simplicity, we assume that $\frac{1}{8\eps}$ is an integer.
    We label vertices in $U$ and $W$ as $U = \set*{1,2,\dots,n}$ and $W = \set*{n+1,n+2\dots,2n}$, respectively.
    For each pair of $i \in U$ and $k \in W$
    and an integer $j$ with $j = i+1,i+2,\dots,i+\frac{1}{8\eps} \pmod{n}$, we create a hyperarc $f_{i,j,k}$ such that
    \begin{align}
    t(f_{i,j,k}) = \set*{i, j} \subseteq U && \text{and} &&
    h(f_{i,j,k}) = \set*{k} \in W
    .
    \end{align}
    These $\Omega(\frac{n^2}{\eps})$ hyperarcs form the hyperarc set, $F$, of $H$.
    Note that no multiple hyperarcs are defined since we have $\frac{1}{4\eps} < n$, as assumed in the statement of \cref{theorem:Intro-DH-lower}.
    We let $z_f = 4\eps$ for every $f \in F$.
    Note that $H$ is defined in a symmetric manner so that each pair $(i, k) \in U \times W$ is contained in $\frac{1}{4\eps}$ hyperarcs.

    Suppose for a contradiction that a proper sub-hypergraph $\Htl = (V, \Ftl, \ztl)$ of $H$ is an $\eps$-spectral sparsifier of $H$.
    Without loss of generality, we suppose that $f_{1,\js,n+1}\in F \setminus \Ftl$ holds for some $s$ with $2 \le \js \le 1 + \frac{1}{8\eps}$.

    We define $x^1 \in \R^{U\cup W}$ as follows:
    for $i \in U$, $x^1_i = 1$ if $i = 1$ and $0$ otherwise;
    for $i \in W$, $x^1_i = 0$ if $i = n+1$ and $1$ otherwise.
    We also define $x^s \in \R^{U\cup W}$ as a vector obtained by swapping the first and the $s$th elements of $x^1$.
    That is, $x^1$ and $x^s$ are written as follows:
    \begin{align}
        x^1 = (\overbrace{\underset{\substack{\uparrow \\ \text{$1$st}}}{1}, 0, \dots, 0}^{U},
             \overbrace{\underset{\substack{\uparrow \\ \text{ \makebox[0pt]{($n+1$)st}  }}}{0}, 1, \dots, 1}^W)^\top
        & &
        \text{and}
        & &
        x^s = (\overbrace{0, \dots, 0, \underset{\substack{\uparrow \\ \text{$s$th}}}{1}, 0, \dots, 0}^{U},
             \overbrace{\underset{\substack{\uparrow \\ \text{ \makebox[0pt]{($n+1$)st}  }}}{0}, 1, \dots, 1}^W)^\top.
    \end{align}
    We also define $x^{1s} = x^1 \vee x^s \in \R^{U\cup W}$, where $\vee$ denotes the element-wise maximum operator.

    Since $x^1\in \{0,1\}^V$, ${x^1}^\top L_H(x^1)$ is the value of some cut in $H$.
    Specifically,
    since  $(x^1_i - x^1_j)_+^2$ is $1$ if $(i, j) = (1, n+1)$ and $0$ otherwise,
    ${x^1}^\top L_H(x^1)$ is the value of the cut-set $F_1 = \left\{f \in F\ |\ \text{$1 \in t(f)$ and}\right.$ $\left.(n+1) \in h(f)\right\}$.
    Since $F_1$ has $\frac{1}{4\eps}$ hyperarcs, it holds that
    \[
        {x^1}^\top L_H(x^1) = \sum_{f \in F} 4\eps \cdot \max_{(i, j) \in C(f)} (x^1_i - x^1_j)_+^2 = 4\eps \cdot|F_1| = 1.
    \]
    Similarly, we have ${x^s}^\top L_H(x^s) = 4\eps \cdot |F_s| = 1$, where $F_s = \Set*{f \in F}{\text{$s \in t(f)$ and $(n+1) \in h(f)$}}$.
    From the assumption that $\Htl$ is an $\eps$-spectral sparsifier of $H$, we have
    \begin{align}\label{eq:lower_bound_1}
        \begin{aligned}
        &
        1 - \eps = (1 - \eps) {x^1}^\top L_H(x^1) \le {x^1}^\top L_{\Htl}(x^1)
        \quad \text{and} \\
        &
        1 - \eps = (1 - \eps) {x^s}^\top L_H(x^s) \le {x^s}^\top L_{\Htl}(x^s).
        \end{aligned}
    \end{align}

    Again, for the same reason as above,  ${x^{1s}}^\top L_H(x^{1s})$ is the value of the cut-set $F_1 \cup F_s$ in $H$.
    Since $F_1 \cap F_s = \set*{f_{1,s,n+1}}$, we have $|F_1 \cup F_s| = 2\cdot\frac{1}{4\eps} - 1$, and hence
    \[
        {x^{1s}}^\top L_{H}(x^{1s}) = 4\eps\cdot \prn*{ 2\cdot \frac{1}{4\eps} - 1} = 2(1 - 2\eps).
    \]
    Combining this with ${x^{1s}}^\top L_{\Htl}(x^{1s}) \le (1 + \eps){x^{1s}}^\top L_H(x^{1s})$, we obtain
    \begin{align}\label{eq:lower_bound_2}
        {x^{1s}}^\top L_{\Htl}(x^{1s}) \le (1 + \eps) \cdot 2(1 - 2\eps) = 2(1 - \eps - 2\eps^2).
    \end{align}

    On the other hand, ${x^{1s}}^\top L_{\Htl}(x^{1s})$ equals the value of the cut-set $\set*{F_1 \cup F_s} \cap \Ftl$ in  $\Htl$.
    Since we have $f_{1,s,n+1} \notin \Ftl$, $F_1 \cap \Ftl$ and $F_s \cap \Ftl$ are disjoint, which implies ${x^1}^\top L_{\Htl}(x^1) + {x^s}^\top L_{\Htl}(x^s) = {x^{1s}}^\top L_{\Htl}(x^{1s})$.
    Therefore, from \cref{eq:lower_bound_1,eq:lower_bound_2}, we have
    \[
        2(1 - \eps) \le {x^1}^\top L_{\Htl}(x^1) + {x^s}^\top L_{\Htl}(x^s) = {x^{1s}}^\top L_{\Htl}(x^{1s}) \le 2(1 - \eps - 2\eps^2),
    \]
    yielding a contradiction.
\end{proof}

\section{Application to Agnostic Learning of Submodular Functions}\label{sec:submodular}
We apply our \cref{theorem:DHS-main} to agnostic learning of submodular functions based on the discussion by Soma~and~Yoshida~\citep{soma2019spectral}.
Our goal is to prove the following sample complexity bound.
\begin{theorem}\label[theorem]{theorem:submodular}
    Let $V$ be a finite set with $n=|V|$,
    $\mathcal{C}$ be the class of nonnegative hypernetwork-type submodular functions $\sfn\colon 2^V\to [0,1]$, and $\mathcal{D}$ be any distribution on $2^V\times [0,1]$.
    For any $\eps, \delta \in (0, 1)$, there exists an algorithm that, given $\tilde{O}\prn*{ \frac{n^3}{\eps^4} + \frac{1}{\eps^2}\log\frac{1}{\delta} }$ i.i.d. samples from $\mathcal{D}$, outputs a submodular function $\sfn'\colon 2^V \to [0,1]$ that can be evaluated in polynomial time in $n$ and satisfies
    \begin{align}
        \mathbb{E}_{(X,y)\sim \mathcal{D}} \left[|\sfn'(X)-y|\right]\leq \inf_{\sfn\in \mathcal{C}} \mathbb{E}_{(X,y)\sim \mathcal{D}} \left[|\sfn(X)-y|\right] + \eps
    \end{align}
    with probability at least $1-\delta$.
\end{theorem}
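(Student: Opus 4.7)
The plan is to reduce agnostic learning over $\mathcal{C}$ to empirical risk minimization over a finite $\eps$-net constructed via our sparsifier, and then apply standard uniform-convergence bounds. The key enabling observation, following Soma and Yoshida~\citep{soma2019spectral}, is that any hypernetwork-type submodular function $\sfn\in \mathcal{C}$ on $V$ admits a representation as the cut function of a weighted directed hypergraph $H_\sfn=(V,F,z)$, i.e.\ $\sfn(X)=\mathbf{1}_X^\top L_{H_\sfn}(\mathbf{1}_X)$ for every $X\subseteq V$. Since we only need cut-function (not spectral) approximation, the $\eps$-spectral sparsifier from \cref{theorem:DHS-main}, applied to $H_\sfn$, yields a sub-hypergraph $\Htl_\sfn$ with only $k=\tilde O(n^2/\eps^2)$ hyperarcs whose cut function approximates $\sfn$ up to multiplicative error $\eps$ pointwise on $\set*{0,1}^V$. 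This is the sole place where our improved \cref{theorem:DHS-main} enters, and it is what replaces the $n^3$ in Soma--Yoshida's bound by $n^2$.

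Next I would build an explicit $\eps$-net $\Ncal$ for $\mathcal{C}$. Because $\sfn$ takes values in $[0,1]$, the weights $\tilde z_f$ of $\Htl_\sfn$ can be rounded to an $\eps/\poly(n,1/\eps)$ grid without spoiling the approximation, and the redundancy in the weights can be truncated at $\poly(n,1/\eps)$. The number of distinct hyperarcs on $V$ is at most $4^n$, so the number of possible support sets of size $k$ is at most $\binom{4^n}{k}\le 4^{nk}$; combined with $\poly(n,1/\eps)^{k}$ choices of quantized weights, this gives
\begin{align}
\log\abs*{\Ncal}= O\prn*{k\cdot n\log\prn*{n/\eps}}=\tilde O\prn*{n^3/\eps^2}.
\end{align}
Every $\sfn\in\mathcal{C}$ is $\eps$-close (in $L_\infty$ on $2^V$, and hence in the $L_1(\mathcal{D})$ loss) to some member of $\Ncal$, and each member of $\Ncal$ is a hypergraph cut function evaluable in $\poly(n,1/\eps)$ time.

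The learning algorithm $\Acal$ is then standard ERM over $\Ncal$: draw $N$ i.i.d.\ samples $(X_i,y_i)$, evaluate the empirical loss $\hat R(\sfn')=\frac{1}{N}\sum_i\abs*{\sfn'(X_i)-y_i}$ for every $\sfn'\in\Ncal$, and output the minimizer. Since each loss is bounded in $[0,1]$, Hoeffding's inequality together with a union bound over $\Ncal$ implies
\begin{align}
\Pr\brc*{\exists\sfn'\in\Ncal:\abs*{\hat R(\sfn')-R(\sfn')}>\eps}\le 2\abs*{\Ncal}\exp\prn*{-2N\eps^2},
\end{align}
which is at most $\delta$ once $N=O\prn*{(\log\abs*{\Ncal}+\log(1/\delta))/\eps^2}=\tilde O\prn*{n^3/\eps^4+(1/\eps^2)\log(1/\delta)}$. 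Combining the uniform convergence with the $\eps$-net approximation of $\mathcal{C}$ yields the claimed agnostic guarantee with additive error $O(\eps)$, which rescales to $\eps$.

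The main technical obstacle is verifying that \cref{theorem:DHS-main} can be converted into a pointwise $\eps$-cut-approximation on $\set*{0,1}^V$ uniform over the class $\mathcal{C}$ with the claimed weight discretization; in particular, one must carefully argue that (i) the \emph{nonnegativity} and $[0,1]$-range of the cut functions are preserved up to a rescaling absorbed into the final $\eps$, and (ii) hypernetwork-type submodular functions admit representations whose total energy can be normalized to $O(1)$ so that the multiplicative spectral guarantee translates into an additive $\eps$-error in the learning loss. Both points are handled by the reduction of \citep{soma2019spectral}; with that in hand, the remaining counting and concentration steps are routine and the sample-complexity exponent on $n$ drops from $4$ to $3$ purely through our improved sparsifier size.
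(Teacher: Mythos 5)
Your overall plan---represent each $\sfn\in\Ccal$ as a directed-hypergraph cut function, sparsify to $\tilde O(n^2/\eps^2)$ hyperarcs via \cref{theorem:DHS-main}, quantize weights to form a finite $\eps$-net, and finish with Hoeffding plus a union bound (\cref{proposition:learnable})---matches the paper's proof. The final counting $\log|\Ncal|=\tilde O(n^3/\eps^2)$ and the resulting sample complexity are also correct.

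However, there is a genuine gap in how you obtain a polynomial bound on the sparsifier's weights, which you flag as "the main technical obstacle" but then dismiss as "handled by the reduction of \citep{soma2019spectral}." It is not, and the paper explicitly warns that "if a sparsifier has hyperarcs with exponentially heavy weights, the following discussion ceases to hold." The Fujishige--Patkar cut-function representation of a general $\sfn\in\Ccal$ can have as many as $4^n$ hyperarcs. Since {\DHSparsify} runs $T=O(\log(m/m^*))$ iterations and doubles weights at every iteration, applying it directly to $H_\sfn$ of size $m=4^{\Theta(n)}$ gives $T=\Theta(n)$ and weights as large as $2^{\Theta(n)}$. Your suggestion to "truncate" or round such weights is not justified---a hyperarc of exponential weight in a cut function need not be discardable without argument, and even if it were, the required grid size would re-introduce an extra $n$ factor in $\log|\Ncal|$.

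The paper's fix, which you are missing, is the two-step composition in \cref{lemma:ours-hypernetwork-approx}: first invoke Soma--Yoshida's \cref{proposition:sy-hypernetwork-approx} to obtain an intermediate cut sparsifier of $\sfn$ with only $O(n^3\log n/\eps^2)$ hyperarcs \emph{and} weights bounded by $n^3$, and only then apply {\DHSparsify}. Starting from that polynomially-sized hypergraph, the number of iterations is $T\le 2.5\log_2 n+O(1)$, so the output weights are at most $2^T\cdot n^3=O(n^{5.5})$, which is what makes the grid-counting in your $\eps$-net step sound. Without this composition, the claimed $\poly(n,1/\eps)$ weight bound and hence the $\tilde O(n^3/\eps^2)$ bound on $\log|\Ncal|$ do not follow.
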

This result improves previous bounds of  $n^{O(1/\eps^2)}\log\frac{1}{\delta}$ in~\citep{cheraghchi2012submodular} and  %$\tilde{O}\left(\frac{n^4}{\eps^4} + \frac{\log (1/\delta)}{\eps^2}\right)$
$\tilde{O}\prn*{ \frac{n^4}{\eps^4} + \frac{1}{\eps^2}\log\frac{1}{\delta} }$ in \citep{soma2019spectral}.
We remark that, as is already mentioned in \citep{soma2019spectral}, the setting of \citep{cheraghchi2012submodular} restricts the marginal distribution of $\mathcal{D}$ over $2^V$ to a product distribution and allows the output function to be non-submodular.
On the other hand, the algorithm of~\citep{cheraghchi2012submodular} takes only $n^{O(1/\eps^2)}\log\frac{1}{\delta}$ time, while both that of \citep{soma2019spectral} and ours take exponential time in $n$ in general.
For more information on this topic, we refer the reader to \citep{balcan2018learning,soma2019spectral}.

Given the discussion in~\citep{soma2019spectral}, the application is mostly straightforward.
We, however, need to care about hyperarc weights resulting from our algorithm because of its iterative nature.
(If a sparsifier has hyperarcs with exponentially heavy weights, the following discussion ceases to hold.)
As shown below, we can avoid this issue by first using the result of \citep{soma2019spectral} and then our method to obtain a sparsifier with polynomially heavy hyperarcs, thus obtaining the improved sample complexity bound as in \cref{theorem:submodular}.

Let us now introduce the background.
We say a set function $\sfn\colon2^V \to \R$ is \textit{submodular} if $\sfn(X) + \sfn(Y) \ge \sfn(X \cup Y) + \sfn(X \cap Y)$ holds for every $X, Y \subseteq V$.
One large subclass of submodular functions is the class of nonnegative hypernetwork-type functions $\sfn\colon2^V \to \R$, which are defined by the following conditions:
$\sfn(\emptyset) = \sfn(V) = 0$, $\sfn(X) \ge 0$ for any $X \subseteq V$, and $\sum_{Y \subseteq X} (-1)^{|Y\setminus X|} \sfn(Y) \le 0$ for any $X \subseteq V$ with $|X| = 2,3,\dots,|V|$.
Fujishige~and~Patkar~\citep{fujishige2001realization} showed that any nonnegative hypernetwork-type submodular function can be represented as a cut function of some directed hypergraph on $|V|$ vertices.
Due to the analysis in \citep{fujishige2001realization} and the existence of $\eps$-spectral sparsifiers of size $O\left(\frac{n^3}{\eps^2} \log n\right)$ by Soma~and~Yoshida~\citep{soma2019spectral}, the following claim holds.
\begin{proposition}[{\citep{soma2019spectral}}]\label[proposition]{proposition:sy-hypernetwork-approx} % Corollary 1.1
    Let $V$ be a finite set with $|V| = n$.
    For any nonnegative hypernetwork-type submodular function $\sfn\colon2^V \to \R$ and any $\eps \in (0, 1)$, there is a directed hypergraph $H = (V, F, z)$ with $O\left(\frac{n^3}{\eps^2} \log n\right)$ hyperarcs such that the cut function of $H$, denoted by $\kappa_H:2^V \to \R$, satisfies $(1-\eps)\kappa_H(X) \le \sfn(X) \le (1+\eps)\kappa_H(X)$ for every $X \subseteq V$.
    Moreover, $z_f \le n^3$ holds for every $f \in F$.
\end{proposition}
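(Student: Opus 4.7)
The plan is to assemble the claim from the two ingredients referenced in its statement and then verify the weight bound.

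First, I would invoke Fujishige and Patkar's representation theorem: since $\sfn$ is a nonnegative hypernetwork-type submodular function on $V$, there exists a directed hypergraph $H_0 = (V, F_0, z_0)$ on $V$ such that $\kappa_{H_0}(X) = \sfn(X)$ for every $X \subseteq V$. This hypergraph may have $|F_0| = \Omega(2^n)$ hyperarcs, but the weights produced by their Möbius-inversion-type construction are linear combinations of $\sfn$-values; because $\sfn$ takes values in $[0, 1]$, each initial weight $z_{0,f}$ is bounded by a fixed polynomial in $n$ (indeed, by at most $n^3$ with room to spare, since the Möbius coefficients combine at most $2^{|f|}$ function values each of magnitude $\le 1$ but are structured so that only $O(n^{O(1)})$ contribute per hyperarc).

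Next, I would feed $H_0$ into the spectral-sparsification algorithm of Soma and Yoshida with error parameter $\eps$, which returns a sub-hypergraph $H = (V, F, z)$ with $|F| = O\!\left(\frac{n^3}{\eps^2}\log n\right)$ satisfying the spectral inequality
\begin{align}
    (1-\eps)\,x^\top L_{H_0}(x) \le x^\top L_{H}(x) \le (1+\eps)\,x^\top L_{H_0}(x) \qquad \text{for all } x \in \R^V.
\end{align}
Specializing to $x = \mathbf{1}_X$ for $X \subseteq V$ turns the spectral bound into the cut bound $(1-\eps)\kappa_H(X) \le \kappa_{H_0}(X) \le (1+\eps)\kappa_H(X)$, and since $\kappa_{H_0} = \sfn$ this gives exactly the two-sided approximation claimed (after relabeling $\eps$ by a constant factor to convert the upper bound $\sfn \le (1+\eps)\kappa_H$ into the form stated).

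The one delicate point is the weight bound $z_f \le n^3$. Here I would use that Soma--Yoshida's algorithm is an importance-sampling scheme which keeps each hyperarc $f$ with probability $p_f$ and rescales its weight to $z_{0,f}/p_f$; the probabilities are set so that $p_f \ge \Omega(n^{-c})$ for a constant $c$ on every hyperarc that is not already dropped deterministically, so the rescaled weights inherit the polynomial bound from the initial $z_{0,f}$. Combining the Fujishige--Patkar bound on $z_{0,f}$ with the lower bound on $p_f$, and absorbing constants into the $n^3$ estimate, produces the desired uniform weight bound.

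The main obstacle will be carefully tracking constants through both reductions: one needs to verify that the Möbius-type coefficients in the Fujishige--Patkar representation do not blow up (which relies on nonnegativity of $\sfn$ and the hypernetwork property), and that Soma--Yoshida's sampling probabilities stay polynomially bounded away from $0$. Neither step involves deep new ideas; the work is purely in bookkeeping, and everything else follows from quoting the two cited theorems.
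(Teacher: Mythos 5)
This proposition is quoted from Soma and Yoshida and is not proved in the present paper, so I am evaluating your reconstruction rather than matching it against a printed proof. Your high-level route is the right one: Fujishige--Patkar represents a nonnegative hypernetwork-type submodular function as a cut function of some directed hypergraph $H_0$ on $V$, Soma--Yoshida's spectral sparsifier then shrinks the hyperarc count to $O(n^3 \log n/\eps^2)$, and restricting the spectral guarantee to $x = \mathbf{1}_X$ and reparametrizing $\eps$ yields the two-sided cut-approximation in the stated form. That much is sound, and the arithmetic converting $(1\pm\eps)\sfn \lessgtr \kappa_H$ into $(1\pm\eps)\kappa_H \lessgtr \sfn$ up to a constant rescaling of $\eps$ is fine.

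The weight bound is where the argument does not go through as written. You argue $z_f \le n^3$ by pairing an upper bound on the initial Fujishige--Patkar weights $z_{0,f}$ with a lower bound $p_f = \Omega(n^{-c})$ on the sampling probabilities and then ``absorbing constants.'' But dividing a maximum of order $n^3$ by a minimum of order $n^{-c}$ gives $O(n^{3+c})$, not $O(n^3)$; those two worst cases are not mutually exclusive in your reasoning, so the bound does not close. Two fixes are needed. First, the initial weight bound should be stated more sharply and more simply: in the hypernetwork-type characterization the $z_{0,f}$ are nonnegative by definition, and since every hyperarc $f$ contributes to some cut $X$ with $\kappa_{H_0}(X) = \sfn(X) \le 1$, nonnegativity alone yields $z_{0,f} \le 1$ — your discussion of M\"obius coefficients ``with room to spare'' is both vaguer and looser than this. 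Second, and more importantly, the rescaled weight $z_{0,f}/p_f$ must be bounded by examining how $p_f$ scales \emph{with} $z_{0,f}$ in Soma--Yoshida's importance-sampling rule (typically $p_f$ is proportional to $z_{0,f}$ times a bounded resistance-type quantity, so $z_{0,f}/p_f$ is controlled independently of $z_{0,f}$), rather than by separately bounding numerator and denominator. Without that coupling the claim $z_f \le n^3$ is unsupported; the rest of the reconstruction is solid.
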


Combining \cref{proposition:sy-hypernetwork-approx} with \cref{theorem:DHS-main}, we can show that any nonnegative hypernetwork-type submodular function can be approximated by a cut function of some directed hypergraph with fewer hyperarcs as follows.

\begin{lemma}\label[lemma]{lemma:ours-hypernetwork-approx}
    Let $V$ be a finite set with $|V| = n$.
    For any nonnegative hypernetwork-type submodular function $\sfn\colon2^V \to [0,1]$ and $\eps \in (0, 1)$, there is a directed hypergraph $\Htl = (V, \Ftl, \ztl)$ with $O\left(\frac{n^2}{\eps^2} \log^3\frac{n}{\eps}\right)$ hyperarcs such that $(1-\eps)\kappa_{\Htl}(X) \le \sfn(X) \le (1+\eps)\kappa_{\Htl}(X)$ and $0\leq \kappa_{\Htl}(X)\leq 1$ hold for every $X \subseteq V$.
    Moreover, $\ztl_f =O(n^{5.5})$ holds for every $f \in \Ftl$.
\end{lemma}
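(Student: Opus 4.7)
The approach is to compose \Cref{proposition:sy-hypernetwork-approx} with \Cref{theorem:DHS-main}. First I invoke \Cref{proposition:sy-hypernetwork-approx} with error parameter $\eps_1 = \Theta(\eps)$ to obtain an intermediate directed hypergraph $H = (V, F, z)$ with $m = O(n^3 \log n / \eps_1^2)$ hyperarcs, weights bounded by $z_f \le n^3$, and $(1-\eps_1)\kappa_H(X) \le \sfn(X) \le (1+\eps_1)\kappa_H(X)$ for every $X \subseteq V$. Then I apply \Cref{theorem:DHS-main} (\DHSparsify) with parameter $\eps_2 = \Theta(\eps)$ to $H$, producing a spectral sparsifier $\tilde H = (V, \tilde F, \tilde z)$ with $|\tilde F| = O\!\prn*{n^2/\eps^2 \cdot \log^3(n/\eps)}$ hyperarcs with probability at least $1-O(1/n)$.

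Because any spectral sparsifier of a directed hypergraph preserves cut values (restricting $x$ to $\set*{0,1}^V$ in the defining inequality), $\tilde H$ satisfies $(1-\eps_2)\kappa_H(X) \le \kappa_{\tilde H}(X) \le (1+\eps_2)\kappa_H(X)$ for every $X \subseteq V$. Chaining this with the guarantee of \Cref{proposition:sy-hypernetwork-approx} and choosing $\eps_1, \eps_2$ small enough that $(1-\eps_1)(1-\eps_2) \ge 1-\eps$ and $(1+\eps_1)(1+\eps_2) \le 1+\eps$ (for instance, $\eps_1 = \eps_2 = \eps/3$) yields the desired two-sided approximation $(1-\eps)\kappa_{\tilde H}(X) \le \sfn(X) \le (1+\eps)\kappa_{\tilde H}(X)$. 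The lower bound $\kappa_{\tilde H}(X) \ge 0$ is immediate since all weights are nonnegative, while the upper bound $\kappa_{\tilde H}(X) \le 1$ follows from $\sfn(X) \le 1$ and the approximation, up to a $(1-\eps_1)^{-1}(1-\eps_2)^{-1}$ factor; this factor is absorbed either by slightly tightening the $\eps_i$'s or by rescaling every $\tilde z_f$ by a constant $\le 1$, which preserves the desired inequalities after a corresponding tweak of $\eps$.

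The main obstacle is the weight bound $\tilde z_f = O(n^{5.5})$, because \DHSparsify\ doubles the weight of every sampled hyperarc at each iteration. I control this by estimating $T$, the total number of outer iterations, using $T = \lceil \log_{4/3}(m/m^*)\rceil$ with $m^* = \Theta\!\prn*{n^2 \log^3(n/\eps) / \eps^2}$. Substituting $m = O(n^3 \log n / \eps^2)$ gives $m/m^* = O\!\prn*{n/\log^2(n/\eps)} = O(n)$, whence $T \le (1/\log_2(4/3))\log_2 n + O(1) \le 2.41\log_2 n + O(1)$. Consequently any hyperarc's weight is multiplied by at most $2^T = O(n^{2.41})$ over the course of the algorithm, and starting from $z_f \le n^3$ we conclude $\tilde z_f \le n^3 \cdot 2^T = O(n^{5.41}) = O(n^{5.5})$, matching the claimed bound.
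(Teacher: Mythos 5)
Your proof is correct and follows essentially the same route as the paper's: compose \Cref{proposition:sy-hypernetwork-approx} with \Cref{theorem:DHS-main}, chain the multiplicative error bounds, fix the $\kappa_{\Htl}(X)\le 1$ normalization by a uniform down-scaling of the weights, and bound the weight blow-up by $2^T$ where $T=\Theta(\log n)$ because the input already has only $O(n^3\log n/\eps^2)$ hyperarcs. Your numerical bound $T\le (1/\log_2(4/3))\log_2 n+O(1)\le 2.41\log_2 n+O(1)$ is a slightly sharper version of the paper's $2.5\log_2 n + O(1)$, and both give $\ztl_f = O(n^{5.5})$. The one place you are a bit more informal than the paper is the normalization step: the paper concretely multiplies all weights in $\bar H$ by $(1-\eps/2)$ and checks the resulting two-sided bound, whereas you state that the $(1-\eps_1)^{-1}(1-\eps_2)^{-1}$ slack is ``absorbed by a corresponding tweak of $\eps$'' without spelling it out. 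This is fine in spirit (and the paper's own constant-chasing here is also not airtight), but for a polished write-up you should choose a concrete rescaling constant and verify, as the paper does, that both $(1-\eps)\kappa_{\Htl}\le \sfn$ and $\sfn\le(1+\eps)\kappa_{\Htl}$ hold after the rescale.
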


\begin{proof}
    Due to \cref{proposition:sy-hypernetwork-approx}, there is a directed hypergraph $H = (V,F,z)$ that satisfies $|F| = O\left(\frac{n^3}{\eps^2} \log n\right)$, $(1-\eps/6)\kappa_H(X) \le \sfn(X) \le (1+\eps/6)\kappa_H(X)$ for every $X \subseteq V$, and $z_f \le n^3$ for every $f \in F$.
    By applying {\DHSparsify}$(H, \eps/6)$ to this $H$, we can obtain an ($\eps/6$)-spectral sparsifier $\bar{H} = (V, \bar{F}, \bar{z})$ of $H$ such that $|\bar{F}| = O\left(\frac{n^2}{\eps^2} \log^3\frac{n}{\eps}\right)$ with high probability, implying the existence of such a sparsifier $\bar{H}$.
    Since $(1 \pm \eps/6)^2 \in [1 - \eps/2, 1 + \eps/2]$, we have $(1-\eps/2)\kappa_{\bar{H}}(X) \le \sfn(X) \le (1+\eps/2)\kappa_{\bar{H}}(X)$ for all $X \subseteq V$.

    We then multiply weights of all hyperarcs in $\bar{H}$ by $\left(1-\varepsilon/2\right)$ to obtain $\tilde{H}=(V,\tilde{F},\tilde{z})$.
    Since $(1 - \eps/2)(1\pm \eps/2)\in [1 - \eps, 1+\eps]$, we have $(1-\eps)\kappa_{\tilde{H}}(X) \le \sfn(X) \le (1+\eps)\kappa_{\tilde{H}}(X)$ for every $X \subseteq V$.
    Also, $0 \le \kappa_{\tilde{H}}(X) = (1 - \eps/2)\kappa_{\bar{H}}(X)\leq (1 - \eps/2)\frac{\sfn(X)}{(1 - \eps/2)}=\sfn(X)\leq 1$ holds for every $X\subseteq V$.

    Recall that {\DHSparsify} performs at most $T = \ceil*{\log_{4/3} \prn*{ \frac{|F|\eps^2}{n^2\log^3(n/\eps)} }} \le 2.5\log_2 n + O(1)$ iterations and that weights of hyperarcs are doubled in each iteration.
    Thus, the weights of $\bar{H}$ are bounded by $\bar{z}_{f} \le 2^T z_f \le 2^{\log_2n^{2.5} + O(1)}n^3 = O(n^{5.5})$ for every $f \in \bar{F}$;
    this also bounds $\ztl_f$ for every $f \in \Ftl$ due to the construction of $\Htl$ from $\bar{H}$.
\end{proof}

We now discuss agnostic learning of set functions on $2^V$.
Let $\Dcal$ be any distribution on $2^V\times [0,1]$ and $\Ccal$ be a class of $[0,1]$-valued functions on $2^V$.
A class $\Ccal$ is said to be \textit{agnostically learnable} if, for any $\eps > 0$ and $\delta \in (0,1)$, there exists an algorithm that, using a sampling oracle from $\Dcal$, returns a hypothesis function $g\colon 2^V \to [0,1]$ such that
    \[
        \E_{(X,y)\sim \Dcal}\left[|g(X)-y|\right] \leq \inf_{g'\in \Ccal}\E_{(X,y)\sim \Dcal}\left[|g'(X)-y|\right] + \eps
    \]
with probability at least $1-\delta$.
The following fact is a well-known consequence of the Chernoff bound and the union bound.
\begin{proposition}[{\citep{soma2019spectral}}]\label[proposition]{proposition:learnable}
    A finite class $\mathcal{C}$ of $[0,1]$-valued functions is agnostically learnable with $O\left(\log (|\mathcal{C}|/\delta)/\eps^2\right)$ samples.
\end{proposition}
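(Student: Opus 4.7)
The plan is standard empirical risk minimization (ERM) combined with a Chernoff-type uniform concentration bound over the finite class $\Ccal$. First I would draw $m = \Theta\prn*{\eps^{-2}\log(\abs*{\Ccal}/\delta)}$ i.i.d.\ samples $(X_1,y_1),\dots,(X_m,y_m)$ from $\Dcal$ and, for every $g \in \Ccal$, define the empirical and true risks
\[
\hat R(g) = \frac{1}{m}\sum_{i=1}^m \abs*{g(X_i)-y_i},
\qquad
R(g) = \E_{(X,y)\sim\Dcal}\brc*{\abs*{g(X)-y}}.
\]
The algorithm then returns the empirical minimizer $g' \in \argmin_{g\in\Ccal} \hat R(g)$.

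The key probabilistic step is a uniform deviation bound $\sup_{g\in\Ccal}\abs*{\hat R(g)-R(g)}\le\eps/2$. For each fixed $g \in \Ccal$, the per-sample losses $Z_i^g \coloneqq \abs*{g(X_i)-y_i}$ lie in $[0,1]$, so \cref{theorem:chernoff} applied with $a=1$, $\mu=m\ge\E\brc*{\sum_i Z_i^g}$, and relative deviation $\eps/2$ gives
\[
\Pr\brc*{\abs*{\hat R(g)-R(g)} > \eps/2}
\;\le\; 2\exp\prn*{-m\eps^2/12}.
\]
A union bound across the $\abs*{\Ccal}$ hypotheses, together with a sufficiently large hidden constant in the choice of $m$, makes the event $\mathcal{E} \coloneqq \set*{\sup_{g\in\Ccal}\abs*{\hat R(g)-R(g)}\le\eps/2}$ occur with probability at least $1-\delta$.

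Conditional on $\mathcal{E}$, a one-line ERM comparison closes the argument: letting $g^\star \in \argmin_{g\in\Ccal} R(g)$, one has
\[
R(g') \le \hat R(g') + \eps/2 \le \hat R(g^\star) + \eps/2 \le R(g^\star) + \eps,
\]
where the two outer inequalities use $\mathcal{E}$ and the middle inequality uses the optimality of $g'$ on the empirical objective. This is precisely the claimed agnostic-learning guarantee.

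There is no real obstacle here: the statement is the classical Occam-style sample-complexity bound for finite hypothesis classes, and both ingredients---a tail inequality for bounded-loss empirical means and a union bound---are immediate from the Chernoff inequality already recorded in \cref{section:preliminaries}. The only mild bookkeeping is that \cref{theorem:chernoff} is stated in multiplicative form, which produces the harmless constant $1/12$ (rather than $1/2$) in the exponent and is absorbed into the $O(\cdot)$ sample complexity.
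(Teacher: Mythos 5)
Your proof is correct and is exactly the standard argument the paper alludes to: the paper does not write out a proof of this proposition, instead citing \citep{soma2019spectral} and describing it as ``a well-known consequence of the Chernoff bound and the union bound.'' Your ERM decomposition, the per-hypothesis application of \cref{theorem:chernoff} with $a=1$ and $\mu=m$, the union bound over $\abs{\Ccal}$ hypotheses, and the final three-inequality ERM comparison all check out, with the resulting constant $1/12$ harmlessly absorbed into the $O(\cdot)$.
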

By using \cref{lemma:ours-hypernetwork-approx} and \cref{proposition:learnable}, we prove \cref{theorem:submodular} in the same manner as in \citep{soma2019spectral}.
\begin{proof}[Proof of \cref{theorem:submodular}]
    Let $M = O\left(\frac{n^2}{\eps^2} \log^3\frac{n}{\eps}\right)$ and $W = O(n^{5.5})$ denote the maximum hypergraph size and hyperarc weight, respectively, that appear in \cref{lemma:ours-hypernetwork-approx}.
    Let $\Ccal'$ be the class of cut functions of directed hypergraphs $H' = (V, F', z')$ satisfying the following conditions:
    $|F'| \le M$, $z'_f$ is a multiple of $\frac{\eps}{M}$ and $z'_f \le W$ for every $f \in F'$, and $0\leq \kappa_{H'}(X)\leq 1$ for every $X \subseteq V$.
    \cref{lemma:ours-hypernetwork-approx} ensures that for any $g \in \Ccal$ there exists $g' \in \Ccal'$ such that $|g(X) - g'(X)| \le \eps$
    for all $X\subseteq V$.
    Therefore, we only need to show that $\Ccal'$ is agnostically learnable with the desired sample size.
    Since the number of all possible hyperarcs is $4^n-1$ and there are $WM/\eps$ possible weights for each hyperarc, we have
    \begin{align}
        |\mathcal{C'}| \leq O\prn*{\sum_{i=0}^M \binom{4^n}{i}\left(\frac{WM}{\eps}\right)^i} \le O\prn*{M4^{nM} \prn*{\frac{WM}{\eps}}^M},
    \end{align}
    and hence
    \begin{align}
        \log |\mathcal{C'}|= O\left(nM + M \log \left(\frac{WM}{\eps}\right)\right)
        =O\left(\frac{n^3}{\eps^2}\log^3 \frac{n}{\eps}+\frac{n^2}{\eps^2}\log^4 \frac{n}{\eps}\right).
    \end{align}
    Combining this with \cref{proposition:learnable}, we obtain the desired sample complexity bound.
    %\rr{The concrete algorithm for constructing submodular functions follows from that of \citep{soma2019spectral} and \DHSparsify, although it takes an exponential time in $n$ in general.}
\end{proof}

\section{Spectral Sparsification of Undirected Hypergraphs}\label{section:UHS}
To demonstrate the power of the spanner-based sparsification by~Koutis~and~Xu~\citep{koutis2016simple}, we study a natural extension of their algorithm for undirected graphs to undirected hypergraphs.
We show that the resulting algorithm constructs an $\eps$-spectral sparsifier of size $O\prn*{\frac{nr^3}{\eps^2}\log^2 n}$ with high probability and that it also enjoys several useful extensions.

\subsection{Preliminaries}
To begin with, we present an additional background used in this section.

\paragraph*{Notation and Definitions}
We usually denote an ordinary undirected graph and an undirected hypergraph by $G=(V, E, w)$ and $H=(V, F, z)$, respectively, to avoid confusion.
Given an undirected hypergraph $H=(V,F,z)$, a \textit{clique} of $f \in F$ is defined as an edge set $C(f) = \Set*{\{u, v\}}{u, v \in f, u\neq v}$.
(Although a clique conventionally refers to a vertex set, we here regard it as an edge set for convenience.)
For any subset $F'\subseteq F$, we let $C(F') = \bigcup_{f \in F'}C(f)$.

\paragraph*{Spectral Properties of Graphs and Hypergraphs}
We briefly describe the spectral properties of graphs and hypergraphs.
We recommend~\citep{vishnoi2013Lxb,teng2016scalable,spielman2019spectral} for more information on the spectral graph theory and \citep{chan2018spectral,chan2019diffusion} for more details on spectral properties of hypergraphs.

Let $G = (V, E, w)$ be an ordinary graph and $w_e = w(e)$ denote the weight of $e \in E$.
The Laplacian operator $L_G$ of $G$ is written as an $n\times n$ positive semidefinite matrix, called the Laplacian matrix of $G$, such that each $(u, v)$ entry is given by $-w_e$ if $e = \{u, v\} \in E$ and $u \neq v$, $\sum_{e \in E: v \in e} w_e$ if $u = v$, and $0$ otherwise.
The Laplacian matrix $L_G$ can be written as a sum of edge-wise Laplacian matrices $L_e$, i.e., $L_G = \sum_{e \in E} L_e$, where for each $e = \{u, v\} \in E$, the $(u, u)$ and $(v, v)$ entries of $L_e$ are $w_e$, the $(u, v)$ and $(v, u)$ entries are $-w_e$, and the others are zero.
We can also write the quadratic form $x^\top L_G x$ for any $x \in \R^V$ as a sum of edge-wise quadratic forms, i.e.,
$x^\top L_G x = \sum_{e \in E} x^\top L_e x = \sum_{e = \{u, v\}\in E} w_e(x_u - x_v)^2$.

For an undirected hypergraph $H=(V,F,z)$, the Laplacian $L_H\colon \R^V \to \R^V$ is defined as a nonlinear operator that satisfies $x^\top L_H(x) = \sum_{f \in F}z_f\max_{u, v \in f} (x_u - x_v)^2$ for all $x \in \R^V$.
For each $f \in F$, the contribution of $f$ to $x^\top L_H(x)$ is $Q_H^x(f) = z_f\max_{u,v\in f}(x_u-x_v)^2$, which we call the \textit{energy} of $f$.
Note that we have $x^\top L_H(x) = \sum_{f \in F} Q_H^x(f)$.
For any subset $F' \subseteq F$, we let $Q_H^x(F') = \sum_{f \in F'} Q_H^x(f)$, i.e., the sum of energies over $F'$.

An important notion in spectral graph theory is the effective resistance.
Given an ordinary graph $G=(V,E,w)$, the {\em effective resistance} of a pair of vertices $\{u, v\}\ (u,v \in V)$ is given by
\[
    R_G(u,v) = \left(\ones_u - \ones_v\right)^\top L_G^+\left(\ones_u - \ones_v\right),
\]
where $L_G^+$ is the Moore--Penrose pseudo-inverse of $L_G$ and, for each $v \in V$, $\ones_v \in \set*{0,1}^V$ is a vector of all zeros but a single $1$ at the coordinate corresponding to $v$.
The following well-known fact provides another useful characterization of the effective resistance.
\begin{proposition}\label[proposition]{theorem:PRE-EffectiveResistance}
    The effective resistance $R_G(u,v)$ can be defined alternatively as
    \begin{align}\nonumber
        R_G(u,v) = \max_{x\in \R^V} \frac{(x_u-x_v)^2}{x^\top L_G x}.
    \end{align}
\end{proposition}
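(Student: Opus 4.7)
The plan is to prove this by recognizing the right-hand side as a Rayleigh-type quotient that is maximized via Cauchy--Schwarz in the inner product induced by $L_G$. First I would assume without loss of generality that $G$ is connected, so that $\Ker L_G = \spn(\ones)$ (the disconnected case reduces to this by restricting to the component containing $u$ and $v$, or both sides are infinite if $u,v$ lie in different components). Abbreviating $b = \ones_u - \ones_v$, I note that $b \perp \ones$, hence $b \in \Im L_G$, so there is a unique $y \in \ones^\perp$ with $L_G y = b$, and this $y$ equals $L_G^+ b$ by the defining property of the Moore--Penrose pseudo-inverse.

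Next I would observe that both numerator $(b^\top x)^2 = (x_u - x_v)^2$ and denominator $x^\top L_G x$ are invariant under shifting $x$ by a multiple of $\ones$, since $b \perp \ones$ and $\ones \in \Ker L_G$. Thus the maximization can be restricted to the subspace $\ones^\perp$, on which $L_G$ is positive definite and $\langle p, q \rangle_{L_G} := p^\top L_G q$ defines a genuine inner product. The key computation is that for any $x \in \ones^\perp$,
\begin{align}
b^\top x = (L_G y)^\top x = \langle y, x \rangle_{L_G},
\end{align}
so Cauchy--Schwarz in $\langle \cdot, \cdot \rangle_{L_G}$ yields $(b^\top x)^2 \le (y^\top L_G y)(x^\top L_G x)$, and $y^\top L_G y = y^\top b = b^\top L_G^+ b = R_G(u,v)$. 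This gives the upper bound $\frac{(x_u - x_v)^2}{x^\top L_G x} \le R_G(u,v)$ for every $x$ with $x^\top L_G x > 0$.

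To finish, I would check that equality is attained by choosing $x = L_G^+ b$: the numerator becomes $(b^\top L_G^+ b)^2 = R_G(u,v)^2$, the denominator becomes $(L_G^+ b)^\top L_G (L_G^+ b) = b^\top L_G^+ b = R_G(u,v)$ (using $L_G L_G^+ b = b$ since $b \in \Im L_G$), and the ratio is $R_G(u,v)$. Combined with the upper bound, this shows the maximum equals $R_G(u,v)$.

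I do not foresee a serious obstacle, since this is a standard consequence of the variational characterization of the pseudo-inverse together with Cauchy--Schwarz. The only mild subtlety is being careful about the kernel of $L_G$: one must justify that $b$ lies in $\Im L_G$ and that restricting the maximization to $\ones^\perp$ is valid, both of which are immediate once connectedness is assumed. If the paper cares about the disconnected case, an extra sentence handling components suffices.
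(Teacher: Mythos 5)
The paper states this proposition without proof, citing it as a well-known fact about effective resistance; there is no paper proof to compare against. Your argument is correct and is the standard one: restrict to $\ones^\perp$ where $L_G$ is positive definite, write $b^\top x = \langle L_G^+ b, x\rangle_{L_G}$, apply Cauchy--Schwarz to get the upper bound $b^\top L_G^+ b$, and verify equality at $x = L_G^+ b$. The handling of the kernel (noting $b \perp \ones$, so $b \in \Im L_G$ and the ratio is unaffected by shifts along $\ones$) is precisely the point that needs a sentence, and you supply it.
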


\subsection{Main Result}
Before proceeding to the main result, we mention the following fact due to \citep{bansal2019new}, which enables us to focus on the case where every hyperedge has a size between $\frac{r}{2}$ and $r$.

\begin{proposition}[\citep{bansal2019new}]\label[proposition]{theorem:UHS-limitr}
Let $H$ be an undirected hypergraph with $n$ vertices and the rank at most $r$, $\eps\in (0,1)$.
For each $i=1,2,\cdots,\lceil\log_2 r\rceil$, let $\tilde{H}_i$ be an  $\eps\sqrt{\frac{2^{i-1}}{r}}$-spectral sparsifier of a sub-hypergraph of $H$ consisting of hyperedges of size $(2^{i-1},\ 2^i]$.
Then, the union of $\tilde{H}_i$ for $i=1,2,\cdots,\lceil\log_2 r\rceil$ is an $\eps$-spectral sparsifier of $H$ with $O\prn*{\frac{nr^3}{\eps^2}\log^2 n}$ hyperedges.
Hence, if there is an algorithm that, given a hypergraph with hyperedges of size $({r}/{2}, r]$, constructs an $\eps$-spectral sparsifier with $O\prn*{\frac{nr^3}{\eps^2}\log^2 n}$ hyperedges, then there is an algorithm that returns an $\eps$-spectral sparsifier with $O\prn*{\frac{nr^3}{\eps^2}\log^2 n}$ hyperedges for any hypergraph of rank at most $r$.
\end{proposition}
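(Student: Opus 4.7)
The plan is to prove the two halves of Proposition~\ref{theorem:UHS-limitr}---the spectral approximation and the size bound---separately, and then note that the algorithmic corollary is immediate. Write $H_i=(V,F_i,z|_{F_i})$ for the sub-hypergraph of $H$ whose hyperedge set $F_i \subseteq F$ consists of the hyperedges of size in $(2^{i-1},2^i]$, and set $\eps_i := \eps\sqrt{2^{i-1}/r}$. Then $\{F_i\}_i$ partitions $F$, and since $i\le \lceil \log_2 r\rceil$ forces $2^{i-1}\le r$, we have $\eps_i\le \eps$ for every band.

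For the spectral guarantee, I would exploit the hyperedge-wise additivity of the hypergraph energy: $x^\top L_H(x)=\sum_{f\in F} z_f \max_{u,v\in f}(x_u-x_v)^2$ depends on each hyperedge separately, so $x^\top L_H(x)=\sum_i x^\top L_{H_i}(x)$, and likewise $x^\top L_{\tilde H}(x)=\sum_i x^\top L_{\tilde H_i}(x)$ for $\tilde H = \bigcup_i \tilde H_i$. Applying the hypothesis $|x^\top L_{\tilde H_i}(x)-x^\top L_{H_i}(x)|\le \eps_i\, x^\top L_{H_i}(x)$ band by band and combining it with $\eps_i\le \eps$ and the triangle inequality yields
\[
\abs*{x^\top L_{\tilde H}(x)-x^\top L_H(x)} \le \sum_i \eps_i\, x^\top L_{H_i}(x) \le \eps\, x^\top L_H(x),
\]
so $\tilde H$ is an $\eps$-spectral sparsifier of $H$.

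For the size bound, I would invoke the hypothesized algorithm $\mathcal{A}$---which on a hypergraph with hyperedges of size in $(r'/2,r']$ produces an $\eps'$-spectral sparsifier of size $O(n(r')^3/(\eps')^2\log^2 n)$---applied to $H_i$ with $r'=2^i$ and $\eps' = \eps_i$. This yields $\tilde H_i$ with $O(n(2^i)^3/\eps_i^2\log^2 n)$ hyperedges; substituting $\eps_i^2=\eps^2\cdot 2^{i-1}/r$ converts this to $O(nr\cdot 2^{2i}/\eps^2 \log^2 n)$ per band, and summing the geometric progression $\sum_{i=1}^{\lceil \log_2 r\rceil}2^{2i}=O(r^2)$ gives the announced $O(nr^3/\eps^2 \log^2 n)$ total. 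The algorithmic corollary then follows by running $\mathcal{A}$ independently on each of the $O(\log r)$ size-bands.

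I do not anticipate a genuine obstacle: the argument is essentially a dyadic decomposition by hyperedge size, with $\eps_i = \eps\sqrt{2^{i-1}/r}$ calibrated so that both the per-band error (linear in $\eps_i$) and the per-band size (of order $(2^i)^3/\eps_i^2$) combine to the desired global bounds. The two items meriting care are that the additivity $x^\top L_H(x)=\sum_i x^\top L_{H_i}(x)$ is being invoked for the \emph{nonlinear} undirected-hypergraph Laplacian---which is fine, since the energy is defined hyperedge-by-hyperedge---and that $\eps_i\le \eps$ holds throughout the range of $i$, which is exactly why the decomposition stops at $\lceil \log_2 r\rceil$.
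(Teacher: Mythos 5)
Your proposal is correct and is exactly the standard dyadic-decomposition argument underlying the cited result of Bansal et al.; the paper itself states this proposition as a citation without reproducing a proof, so there is no in-paper argument to compare against. The only point worth emphasizing is the one you already flagged: for the spectral half, only $\eps_i \le \eps$ is used (which the additivity of the hyperedge-wise energy and $2^{i-1}\le r$ give immediately), while the specific calibration $\eps_i=\eps\sqrt{2^{i-1}/r}$ is needed solely so that the per-band sizes $O\bigl(n(2^i)^3\log^2 n/\eps_i^2\bigr)=O\bigl(nr\cdot 2^{2i}\log^2 n/\eps^2\bigr)$ sum geometrically to $O\bigl(nr^3\log^2 n/\eps^2\bigr)$; note also that the first sentence of the proposition implicitly assumes the $\tilde H_i$ have the size produced by the hypothesized algorithm, which you correctly supply.
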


Building on this fact, we will prove the following theorem.
\begin{theorem}\label[theorem]{theorem:UHS-main}
    Let $H=(V,F,z)$ be an undirected hypergraph with $|V|=n$ vertices and hyperedges of size between $\frac{r}{2}$ and $r$, and let $\eps \in (0, 1)$.
    Then, {\rm \UHSparsify$(H,\eps)$} given in \cref{alg:UHS-iterative} returns an $\epsilon$-spectral sparsifier $\Htl=(V,\tilde{F},\tilde{z})$ of $H$ satisfying $|\tilde{F}| = O\left(\frac{nr^3}{\eps^2}\log^2n\right)$ with probability at least $1-O\left(\frac{1}{n}\right)$.
\end{theorem}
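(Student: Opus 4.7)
The plan is to mirror the iterative framework of \cref{theorem:DHS-main}, replacing the coreset of heaviest hyperarcs with a \emph{hyperspanner} of the current hypergraph. Specifically, \UHSparsify\ repeatedly calls a one-step routine \UHOnestep\ which (i) computes a hyperspanner $T_i$ of the current $H_i$, deterministically retaining every hyperedge of $T_i$, and (ii) samples each remaining hyperedge independently with probability $1/2$ while doubling its weight, thereby preserving the expected energy. Thanks to \cref{theorem:UHS-limitr}, it suffices to handle the case where every hyperedge has size in $(r/2,r]$; bundling the rank buckets at the end only costs a constant factor in the final sparsifier size.

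The heart of the argument is a one-step guarantee: if $T_i$ is a sufficiently redundant bundle of spanners, then the random sub-hypergraph is an $\eps_i$-spectral sparsifier of $H_i$ with probability at least $1-O(1/n^2)$. For fixed $x \in \mathbb{S}_{H_i} = \Set*{x \in \R^V}{x^\top L_{H_i}(x) = 1}$, I would bucket the non-spanner hyperedges by their energy $Q^x_{H_i}(f)$ into classes $F_j^x$ with $Q^x_{H_i}(f) \asymp 2^{-j}$, apply the Chernoff bound of \cref{theorem:chernoff} to each bucket, and obtain a pointwise tail of the form $\exp\prn*{-\Omega(\eps^2 2^j)}$. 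To promote this to a uniform bound over all $x$, I would use a chaining/discretization argument analogous to the one in \cref{section:DHS-JustificationForOneIteration}: the defining low-stretch property of the hyperspanner lets me control, for each $f\notin T_i$, a small set of witness edges in $C(T_i)$ whose $(x_u-x_v)^2$ values dominate the energy of $f$, and these witnesses parametrize the discretization. With the spanner bundling parameter $\lambda = \tilde{\Theta}(r^3/\eps^2)$, the number of discretized configurations in the $j$th bucket is at most $\prn*{\poly(n/\eps)}^{O(2^j/\lambda)}$, which is swallowed by the exponential tail above.

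The main obstacle, and the reason the final bound carries an $r^3$ factor rather than the $r$ appearing in the directed case, is that in the undirected setting the energy of $f$ is a maximum over the $\binom{|f|}{2} = O(r^2)$ pairs in its clique, and each individual spanner can certify this maximum only up to a stretch of $O(r)$; the product gives the $r^3$ factor once absorbed into $\lambda$, and this is also where the analysis of Bansal~et~al.~\citep{bansal2019new} is imported. Once the one-step lemma is established, the remainder is routine assembly: setting $\eps_i = \eps / \prn*{4\log_{4/3}^2(m_i/m^*)}$ with $m^* = \tilde{\Theta}(nr^3/\eps^2)$, the size of $H_i$ decays by a factor of $3/4$ per iteration (as in \cref{lemma:DH-threequarter}), at most $O(\log(m/m^*))$ iterations are performed, a union bound over iterations preserves the $1-O(1/n)$ success probability, and the telescoping product $\prod_i (1\pm\eps_i)$ yields the final $\eps$-spectral guarantee precisely as in the proof of \cref{theorem:DHS-main}.
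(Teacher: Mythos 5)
Your high-level skeleton (reduce to uniform-size hyperedges via \cref{theorem:UHS-limitr}, replace the coreset with a bundle of $\lambda$ hyperspanners, iterate and telescope errors with adaptive $\eps_i$) matches the paper. But the heart of your proposal---redoing the chaining/discretization argument of \cref{section:DHS-JustificationForOneIteration} with the spanner as the discretization device---is not what the paper does for the undirected case, and as sketched it does not work. The paper instead proves \cref{lemma:UHS-onestep} by a completely different route: it shows (\cref{lemma:UHS-onestep-EF}) that once a bundle of $\lambda$ disjoint $(\log n)$-hyperspanners is retained, every remaining hyperedge $f$ and $\{u,v\}\in C(f)$ satisfies $z_f R_G(u,v)\le \frac{4\log n}{r\lambda}$ in the associated graph $G$, and then invokes the effective-resistance sampling criterion of Bansal et al.\ (\cref{prop:bansal}) to conclude that sampling with probability $1/2$ already yields an $\eps$-spectral sparsifier. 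No energy bucketing, critical pairs, or discretization counting is performed in the undirected analysis.

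The concrete gap in your version is the assertion that the low-stretch property yields, ``for each $f\notin T_i$, a small set of witness edges in $C(T_i)$ whose $(x_u-x_v)^2$ values dominate the energy of $f$.'' A $k$-hyperspanner only guarantees a $u$--$v$ \emph{hyperpath} of small total length $\sum 1/z_{f'}$; no single edge along the path (nor any fixed small subset) is guaranteed to have $(x_a-x_b)^2$ comparable to $(x_u-x_v)^2$. What the spanner actually gives is the Cauchy--Schwarz inequality $\sum_{j} z_{f_j}(x_{v_{j-1}}-x_{v_j})^2 \ge \frac{z_f}{\log n}(x_u-x_v)^2$ over the whole path---an aggregate statement, which is exactly what effective resistance packages cleanly. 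Moreover, the counting step of the directed proof (\cref{lemma:DHS-onestep-MoreEnergy}) relies on the coreset sets $S^{uv}$ being pairwise disjoint and each containing $\lambda$ hyperarcs at least as heavy as the missed one; spanner paths for different pairs $(u,v)$ overlap freely and their constituent hyperedges can have wildly different energies, so the bound $|E_i^x| < 2^i$ has no direct analogue. To make the argument go through you would need a new combinatorial idea controlling this overlap, whereas the paper bypasses it entirely by appealing to \cref{prop:bansal}. Your account of the $r^3$ factor is also not quite the paper's: the $r^3$ in the choice $\lambda=\tilde\Theta(r^3/\eps^2)$ arises from the $r^4$ in the Bansal et al.\ criterion divided by the extra factor of $r$ gained in \cref{lemma:UHS-onestep-EF} from summing $(x_a-x_b)^2$ over the $\ge r/4$ clique edges of each hyperedge on the path.
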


The size of a sparsifier given in \cref{theorem:UHS-main} matches that of Bansal~et~al.~\citep{bansal2019new} up to a $\log n$ factor, which is the current best upper bound if $r$ is constant.
We present a comparison with existing results in \cref{table:Intro-previous-studies-undirected}.
Furthermore, our algorithm has advantages in the computation complexity, parallel computability, and fault tolerance, as shown in \cref{subsec:UHS_total_complexity,subsubsec:UHS_parallel,subsubsec:UHS_fault_tolerant}, respectively.

\begin{table}[t]\centering
    \caption{Bounds on sparsification of undirected hypergraphs. In the time complexity, additive $\poly(n, 1/\eps)$ terms are omitted.}
    \label{table:Intro-previous-studies-undirected}
    \begin{tabular}{cccc} \toprule
    Method & Cut/Spectral & Bound & Time complexity \\ \midrule
    Newman~and~Rabinovich~\citep{newman2013multiplicative} & Cut & $O(n^2/\eps^2)$ & -{\footnotemark} \\
    Kogan~and~Krauthgamer~\citep{kogan2015sketching} & Cut & $O(n(r+\log n)/\eps^2)$ & $O(mn^2)$  \\
    Chekuri~and~Xu~\citep{chekuri2017note} & Cut & $O(nr(r+\log n)/\eps^2)$ & $\tilde{O}(mr^2)$  \\
    Soma~and~Yoshida~\citep{soma2019spectral}& Spectral & $O(n^3\log n/\eps^2)$ & $O(mr^2)$  \\
    Bansal~et~al.~\citep{bansal2019new} & Spectral & $O(nr^3\log n/\eps^2)$ & $\tilde{O}(mr)${\footnotemark} \\
    Chen~et~al.~\citep{chen2020near} & Cut & $O(n\log n/\eps^2)$ & $\tilde{O}(mn)$  \\
    Kapralov~et~al.~\citep{kapralov2021towards} &  Spectral & $\tilde{O}(nr/\eps^{O(1)})$ & $O(mr^2)$  \\
    Kapralov~et~al.~\citep{kapralov2021spectral} & Spectral & $\tilde{O}(n/\eps^4)$ & $\tilde{O}(mr)$  \\
    Rafiey~and~Yoshida~\citep{rafiey2022sparsification} & Cut & $O(n^2r^2/\eps^2)$ & $O(m2^r)$  \\
    This paper  &  Spectral & $O(nr^3\log^2n/\eps^2)$ & $O(mr)$ \\
    Jambulapati et al.~\citep{Jambulapati2022-qh} and Lee~\citep{Lee2022-rp}  &  Spectral & $O(n\log n \log r/\eps^2)$ &  $\tilde{O}(mr)$ \\
    \bottomrule
    \end{tabular}
  \end{table}
  \addtocounter{footnote}{-1}
  \footnotetext[\thefootnote]{The paper only implicitly shows the existence of cut sparsifiers of size $O(n^2/\eps^2)$.}
  \addtocounter{footnote}{+1}
  \footnotetext[\thefootnote]{This is an improved time complexity bound by Kapralov~et~al.~\citep{kapralov2021spectral}. The original time complexity bound by Bansal~et~al.~\citep{bansal2019new} is $O(mr^2)$.}

Our sparsification algorithm for undirected hypergraphs is almost identical to that for directed hypergraphs given in \cref{section:DHS-AlgorithmDescription}.
The only difference is that, instead of constructing a $\lambda$-coreset (in \cref{line:dhone-findcoreset} in {\DHOnestep}), we construct a natural hypergraph counterpart of the spanner.
Since there is a fast algorithm that constructs a sparse spanner in an ordinary graph, it is natural to try to use it for hypergraphs.
We below define a natural hypergraph counterpart of the spanner and explain how to compute it based on the existing spanner construction algorithm for ordinary graphs.

\subsection{Hyperspanners}\label{subsec:UHS-hyperspanners}
Let $G = (V, E, w)$ be an ordinary graph.
For $k\ge1$, a subgraph $G'$ of $G$ is said to be a $k$-\textit{spanner} of $G$ if for any $\{u, v\} \in E$,
there is a $u$--$v$ path on $G'$ whose total length is at most $k$ times of the length of $\{u, v\}$.
For convenience, we below refer to an edge subset as a spanner.
In the context of spectral sparsification, it is convenient to use $1/w_e$ as the length of $e \in E$ when constructing spanners.
Hence, we say that $S \subseteq E$ is a $k$-spanner of $G$ if for any $e = \{u, v\} \in E$, there is a $u$--$v$ path $P \subseteq S$ such that
\begin{equation}\label{eq:spanner}
    \left(\sum_{e'\in P}\frac{1}{w_{e'}}\right) \le k \cdot \frac{1}{w_e}.
\end{equation}

To define a hypergraph counterpart of the spanner, we first define a \textit{hyperpath} and the distance between its endpoints~\citep{gallo1993directed, gao2015dynamic}; then, we define a \textit{hyperspanner}. %knuth1977ageneralization
\begin{definition}
    Let $H=(V, F, z)$ be an undirected hypergraph and $u, v \in V$ be a pair of vertices.
    We call a set of hyperedges $P = \set*{ f_1, f_2, \dots, f_\ell }$ a $u$--$v$ \textit{hyperpath} if the following conditions hold:
    $u \in f_1$, $v \in f_\ell$, and  $f_i \cap f_{i+1} \ne \emptyset$ for $i=1,\dots,\ell-1$.
    The distance between $u$ and $v$ along $P$ is defined by $\sum_{i=1}^\ell \frac{1}{z_{f_i}}$.
\end{definition}
\begin{definition}
    Let $H=(V, F, z)$ be an undirected hypergraph.
    We say that $S \subseteq F$ is a $k$-\textit{hyperspanner} of $H$ for some $k \ge 1$ if for any $f \in F$ and $\{u, v\} \in C(f)$, there is a $u$--$v$ hyperpath $P \subseteq S$ such that
    \begin{equation}\label{eq:hyper_spanner}
        \prn*{\sum_{f' \in P} \frac{1}{z_{f'}}} \le k \cdot \frac{1}{z_f}.
    \end{equation}
\end{definition}

As we will see shortly, we can easily construct a hyperspanner by looking at the associated graph introduced in~\citep{bansal2019new}.
For a hypergraph $H = (V,F,z)$, the \textit{associated graph} $\GH=(V,E,w)$ of $H$ is a multi-graph obtained from $H$ by replacing each hyperedge $f \in F$ with a clique $C(f)$ with $\binom{|f|}{2}$ edges.
By this definition, each edge $e$ in $\GH$ is associated with a hyperedge $f_e$ in $H$. We define the weight $w_e$ of $e\in E$ by $w_e=z_{f_e}$.

\begin{lemma}\label[lemma]{lem:hyperspanner}
Let  $S_G$ be a $k$-spanner of the associated graph $\GH$ of an undirected hypergraph $H=(V,F,z)$.
Let $S=\Set*{f_e}{e\in S_G}$ be the set of hyperedges associated with the edges in $S_G$.
Then $S$ is a $k$-hyperspanner of $H$.
\end{lemma}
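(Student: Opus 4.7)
The plan is to lift the spanner property from the associated graph $\GH$ to the hypergraph $H$ in the natural way: given a short $u$--$v$ path in $S_G$, the set of hyperedges corresponding to its edges is a $u$--$v$ hyperpath of comparable length.

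First I would fix an arbitrary $f \in F$ and $\{u,v\} \in C(f)$. By construction of $\GH$, the clique $C(f)$ contributes an edge $e = \{u,v\}$ to $\GH$ with associated hyperedge $f_e = f$ and weight $w_e = z_{f_e} = z_f$. Applying the $k$-spanner property of $S_G$ to this edge $e$, I obtain a $u$--$v$ path $P_G = (e_1,\dots,e_\ell) \subseteq S_G$ with
\[
\sum_{e'\in P_G} \frac{1}{w_{e'}} \;\le\; k\cdot \frac{1}{w_e} \;=\; \frac{k}{z_f}.
\]

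Next I would define $P = \{f_{e_1},\dots,f_{e_\ell}\} \subseteq S$ and verify that $P$ is a $u$--$v$ hyperpath in $H$. Write $e_i = \{v_{i-1}, v_i\}$ with $v_0 = u$ and $v_\ell = v$. Because $e_i$ arose by taking the clique of $f_{e_i}$, we have $\{v_{i-1}, v_i\} \subseteq f_{e_i}$; in particular $u \in f_{e_1}$, $v \in f_{e_\ell}$, and $v_i \in f_{e_i} \cap f_{e_{i+1}}$ for each $i = 1,\dots,\ell-1$. This is exactly the definition of a $u$--$v$ hyperpath.

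Finally I would bound the hyperpath length. Using $w_{e_i} = z_{f_{e_i}}$ and the fact that $P$ is a \emph{set} (so repeated hyperedges are counted once),
\[
\sum_{f'\in P} \frac{1}{z_{f'}} \;\le\; \sum_{i=1}^{\ell} \frac{1}{z_{f_{e_i}}} \;=\; \sum_{e'\in P_G} \frac{1}{w_{e'}} \;\le\; \frac{k}{z_f},
\]
which is precisely \cref{eq:hyper_spanner}. Since $f$ and $\{u,v\}\in C(f)$ were arbitrary, $S$ is a $k$-hyperspanner.

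There is no real obstacle here; the only mild subtlety is that the map $e' \mapsto f_{e'}$ need not be injective on $P_G$ (several edges of the graph path could come from the clique of the same hyperedge). This only helps: passing from a multiset sum to a set sum can strictly decrease the total length, so the hyperpath inequality follows \emph{a fortiori} from the spanner inequality on $\GH$.
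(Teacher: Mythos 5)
Your proposal is correct and takes essentially the same route as the paper's proof: pick the edge $e=\{u,v\}$ in $\GH$ associated with $f$, apply the $k$-spanner guarantee to obtain a short $u$--$v$ path, map its edges back to hyperedges, and observe that the induced set sum is no larger than the path sum (with the same observation about non-injectivity of $e'\mapsto f_{e'}$ that the paper makes). The only stylistic difference is that you spell out the hyperpath chain conditions explicitly, which the paper leaves implicit.
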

\begin{proof}
    Consider any $f\in F$ and $\{u,v\}\in C(f)$.
    By the definition of associated graphs, $\GH$ has an edge $e$ between $u$ and $v$ that is associated with $f$.
    Since $S_G$ is a $k$-spanner, $S_G$ contains a $u$--$v$ path $P$ in $\GH$ satisfying \cref{eq:spanner}.
    Let $P=\{e_1,\dots, e_{\ell}\}$.
    Although $f_{e_i}=f_{e_j}$ may occur for some $i, j$ with $i\neq j$, the {\em set} $\{f_{e_1}, \dots, f_{e_{\ell}}\}$ contains a $u$--$v$ hyperpath satisfying \cref{eq:hyper_spanner}.  Thus, $S$ is a $k$-hyperspanner.
\end{proof}

Our algorithm uses a bundle of $\lambda$ disjoint ($\log n$)-hyperspanners, which can be obtained by computing ($\log n$)-hyperspanners repeatedly.
Formally, a set $S$ of hyperedges in $H=(V,F,z)$ is called a {\em bundle of $\lambda$ disjoint $k$-hyperspanners} if
$S$ can be written as $S=\bigcup_{j=1}^{\lambda} S_j$,
where $S_j$ is a $k$-hyperspanner of a graph $(V,E\setminus\bigcup_{i=1}^{j-1} S_i,w)$ for each $j$ with $1\leq j\leq \lambda$.
By definition, if $S$ is a bundle of $\lambda$ disjoint $k$-hyperspanners, then $S$ contains $\lambda$ disjoint $u$--$v$ hyperpaths $P_1,\dots, P_{\lambda}$ satisfying \cref{eq:hyper_spanner} for any $f\in F\setminus S$ and $\{u,v\}\in C(f)$.

By \cref{lem:hyperspanner}, we can compute a bundle of $\lambda$ disjoint ($\log n$)-hyperspanners by calling the following algorithm for ordinary graphs repeatedly $\lambda$ times.
\begin{proposition}[\citep{roditty2011dynamic}]\label[proposition]{proposition:graphspanner}
    There is an algorithm that, given a multi-graph with $n$ vertices and $m$ edges,\footnote{Although graphs are assumed to be simple in \citep{roditty2011dynamic}, the result is valid for any multi-graph since only a shortest edge in each parallel class is needed for constructing a spanner. That is, if a graph contains parallel edges, we first choose a shortest edge in each parallel class and then use the algorithm in \citep{roditty2011dynamic}.} computes a ($\log n$)-spanner with $\UCa n$ edges in $\Otl(n^2+m)$ time, where $\UCa$ is an absolute constant.
\end{proposition}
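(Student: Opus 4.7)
The plan is to reduce the multi-graph to a simple graph in a preprocessing pass and then apply a greedy spanner construction combined with a sparsity-aware distance oracle. First I would scan $E$ once and, for each unordered pair $\{u,v\}$, keep only the heaviest edge in the corresponding parallel class, storing representatives in a hash table keyed on $\{u,v\}$; this costs $O(m)$ time and yields a simple graph $G'=(V,E',w)$ with $|E'|\le \binom{n}{2}$. Since the length $1/w_e$ of any discarded edge is at least that of its representative, any $k$-spanner of $G'$ is automatically a $k$-spanner of $G$ under our convention, which justifies the reduction.

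On $G'$ I would then run the classical Alth\"ofer--Das--Dobkin--Joseph--Soares greedy algorithm with stretch parameter $k=\Theta(\log n)$: sort edges in nondecreasing order of $1/w_e$; maintain a partial spanner $S$; for each edge $e=\{u,v\}$ processed in order, add $e$ to $S$ iff the current $u$--$v$ distance in $S$ exceeds $k\cdot(1/w_e)$. The spanner property holds by construction. A standard girth argument shows that $(V,S)$ has girth greater than $k$, and the Moore bound then gives $|S|\le n^{1+O(1/k)}=O(n)$ for $k=\Theta(\log n)$, since $n^{1/\log n}=O(1)$. This delivers the target size $\UCa n$.

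For the running time, a na\"ive implementation is too slow because each candidate edge may trigger a fresh shortest-path computation, costing $\Omega(mn)$ in total. The Roditty-style construction instead maintains, for every vertex, an incrementally updated short-distance data structure on the evolving spanner $S$; because $|S|=O(n)$ throughout the run, each query or rebuild touches only $\Otl(n)$ edges, and a careful amortized analysis bounds the number of rebuilds so that the total distance-query cost is $\Otl(n^2)$. Combined with the $O(m)$ preprocessing pass, the total time is $\Otl(n^2+m)$.

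The main obstacle, and the reason this result is nontrivial, is achieving the $O(n)$ edge count and the $\Otl(n^2+m)$ time bound \emph{simultaneously}. The offline greedy algorithm achieves the edge count but is prohibitively slow, while faster Baswana--Sen-type randomized constructions typically produce $O(n\log n)$ edges rather than $O(n)$. Closing this gap requires exploiting the sparsity of $S$ itself to make distance queries cheap, so that the amortized cost of the greedy acceptance test per processed edge collapses to nearly $O(1)$ once the $n^2$ global budget is distributed; the careful amortization of this dynamic data structure is where the bulk of the technical work in the cited algorithm lies.
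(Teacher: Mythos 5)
The paper cites this result from the literature and its only ``proof'' is the footnote: reduce to a simple graph by retaining the shortest (equivalently, heaviest) edge in each parallel class, then invoke the cited algorithm as a black box. Your first paragraph is precisely that reduction, with the correct verification that a $k$-spanner of the simple graph is still a $k$-spanner of the multi-graph because a discarded edge has length at least that of its surviving representative.

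The rest of your proposal attempts to re-derive the cited theorem itself, which the paper does not do. Your girth/Moore-bound argument for the $O(n)$ edge count when $k=\Theta(\log n)$ is sound. The $\Otl(n^2+m)$ running time, however, is asserted by gesturing at an ``incrementally updated short-distance data structure'' and an unspecified amortized analysis; as written that would not constitute a proof of the time bound. Since the paper relies entirely on the citation for this part, the omission does not affect the paper's argument --- on the only content the paper actually verifies (the multi-graph extension), your proposal follows the same route.
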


\subsection{Algorithm Description}

\begin{algorithm}[htb]
    \caption{\UHOnestep$(H, \lambda)$: sampling algorithm called in each iteration in \cref{alg:UHS-iterative}.}\label{alg:UHS-onestep}
    \begin{algorithmic}[1]
        \Require $H = (V, F ,z)$ and $\lambda > 0$
        \Ensure $\Htl = (V, \Ftl, \ztl)$
            \State Compute a bundle $S$ of $\lambda$ disjoint ($\log n$)-hyperspanners of $H$
            %\State $S \gets \bigcup_{i=1}^{\lambda} S_i$
            \State $\Ftl \gets S$ and $\ztl_f \gets z_f$ for $f \in S$
            \For{each $f \in F\setminus S$}
            \State{With probability $\frac12$, add $f$ to $\tilde{F}$ and set $\tilde{z}_f \gets 2z_f$}
            \EndFor
            \State \Return $\Htl = (V,\tilde{F}, \tilde{z})$
    \end{algorithmic}
\end{algorithm}
\begin{algorithm}[htb]
    \caption{\UHSparsify$(H, \eps)$: iterative algorithm that computes an $\eps$-spectral sparsifier.}\label{alg:UHS-iterative}
    \begin{algorithmic}[1]
        \Require $H = (V,F,z)$ with $|V|=n$ and $|F|=m$ and $\eps > 0$
        \Ensure $\Htl = (V,\Ftl, \ztl)$
        \State $m^* \gets \frac{nr^3}{\eps^2}\log^2n$ \Comment{This is the (asymptotic) target size of the resulting sparsifier.  }
        \State $T \gets \ceil*{\log_{4/3} \left(\frac{m}{m^*}\right)}$
        \State $i \gets 0$, $\Htl_0 = (V, \Ftl_0, \ztl_0) \gets H$, and $m_0 \gets |\Ftl_0|$
        \While{$i < T$ and $m_i \geq \UCc m^*$} \Comment{$\UCc$ is a constant explained in \cref{subsection:UHS_proof_iterative}.}
        \State $\eps_i \gets \frac{\eps}{4 \log_{4/3}^2\left(\frac{m_i}{m^*}\right)}$ and $\lambda_i \gets \ceil*{ \frac{8\UCb r^3\log^2 m_i}{\eps_i^2} }$ {\label[line]{line:UHSparsify-eps-lambda}}
        \Comment{$\eps_i$ is used in the analysis.}
        \State $\Htl_{i+1} = (V,\tilde{F}_{i+1},\tilde{z}_{i+1}) \gets \text{\UHOnestep}(\Htl_{i}, \lambda_i)$
        \State $m_{i+1} \gets |\tilde{F}_{i+1}|$
        \State $i\gets i+1$
        \EndWhile
        \State $\Tend\gets i$ and $\Htl\gets \Htl_{\Tend}$
        \State \Return $\Htl = (V,\tilde{F}, \tilde{z})$
    \end{algorithmic}
\end{algorithm}

We give a description of our algorithm for undirected hypergraphs in \cref{alg:UHS-onestep,alg:UHS-iterative}.
{\UHSparsify} (\cref{alg:UHS-iterative}) is identical to {\DHSparsify}  (\cref{alg:DHS-iterative}) for  directed hypergraphs except for the choice of $m^*$ and $\lambda^*$,
and {\UHOnestep} (\cref{alg:UHS-onestep}) is identical to  {\DHOnestep} (\cref{alg:DHS-onestep})
except for the first line, where {\UHOnestep} constructs a bundle of $\lambda$ hyperspanners instead of a $\lambda$-coreset.
Our goal is to prove that {\UHSparsify} returns an $\varepsilon$-sparsifier with $O\prn*{\frac{nr^3}{\varepsilon^2}\log^2 n}$ hyperedges.

{\UHSparsify} iteratively calls {\UHOnestep}, which first computes a bundle of $\lambda$ disjoint ($\log n$)-hyperspanners and then samples remaining hyperedges with probability ${1}/{2}$.
Since we can use the same argument as that in \cref{section:DHS-NearlyTightSparsification} to analyze {\UHSparsify}, the only nontrivial part is to show the correctness of {\UHOnestep} called in each iteration.
That is, our main goal is to prove the following lemma, for which we will use the fact that the output of {\UHOnestep} always contains a bundle of disjoint hyperspanners.
\begin{lemma}\label[lemma]{lemma:UHS-onestep}
    Let $H=(V,F,z)$ be an undirected hypergraph with $n = |V|$ vertices and $m = |F|$ hyperedges, where every hyperedge $f \in F$ satisfies $|f| \in (r/2, r]$.
    For any $\eps \in (0, 1)$ and $\lambda\geq \frac{8\UCb{r}^3 \log^2 n}{\eps^2}$, where $\UCb$ is a sufficiently large constant, {\UHOnestep}$(H, \lambda)$ returns an $\epsilon$-spectral sparsifier $\Htl=(V,\Ftl,\ztl)$ of $H$ satisfying $|\Ftl|\leq \frac{m}{2} + \prn*{3m\log n}^{\frac12} + \lambda \UCa n$ with probability at least $1-O\left(\frac{1}{n^2}\right)$.
\end{lemma}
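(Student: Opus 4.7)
The proof parallels the two-part structure of \cref{lemma:DHS-onestep}, with the $\lambda$-coreset replaced by the bundle $S$ of $\lambda$ disjoint $(\log n)$-hyperspanners. For the size bound, each hyperspanner contributes at most $\UCa n$ hyperedges by \cref{proposition:graphspanner}, so $|S| \le \lambda \UCa n$; the remaining $|F|-|S|$ hyperedges are independently sampled with probability $1/2$, and applying the Chernoff bound (\cref{theorem:chernoff}) exactly as in the proof of \cref{lemma:DHS-size} gives $|\Ftl| \le m/2 + \sqrt{3m\log n} + \lambda \UCa n$ with probability at least $1 - 2/n^2$.

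The heart of the argument is the following uniform importance lemma: for every $f \in F \setminus S$ and every $x \in \R^V$,
$$Q_H^x(f) \le \frac{r^2 \log n}{2\lambda}\, Q_H^x(F).$$
Let $\{u,v\}$ be a critical pair of $f$, so that $Q_H^x(f) = z_f(x_u - x_v)^2$. Because $S$ is a bundle of $\lambda$ disjoint $(\log n)$-hyperspanners, it contains $\lambda$ hyperedge-disjoint $u$--$v$ hyperpaths, each of total length at most $(\log n)/z_f$. Passing to the associated graph $\GH_S$ of $(V,S,z|_S)$, each hyperpath projects to an edge-disjoint $u$--$v$ path of total resistance at most $(\log n)/z_f$, and parallel composition of the $\lambda$ paths yields $R_{\GH_S}(u,v) \le (\log n)/(\lambda z_f)$. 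Combining \cref{theorem:PRE-EffectiveResistance} with the cliquewise bound $x^\top L_{\GH_S} x = \sum_{f' \in S}\sum_{\{u',v'\}\in C(f')} z_{f'}(x_{u'}-x_{v'})^2 \le \binom{r}{2}\,Q_H^x(S) \le \binom{r}{2}\,Q_H^x(F)$ (obtained by bounding each clique sum by $\binom{|f'|}{2} \le \binom{r}{2}$ copies of its maximum term) then gives the claim.

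Equipped with this uniform cap, the spectral guarantee follows by adapting the chaining-and-discretization scheme of \cref{section:DHS-JustificationForOneIteration}. After normalizing to $\mathbb{S}_H = \Set*{x \in \R^V}{x^\top L_H(x) = 1}$, we partition $F\setminus S$ into energy classes $F_i^x$; the importance lemma forces $F_i^x = \emptyset$ outside a bounded range of $i$ and also sets the per-term range in the pointwise Chernoff bound applied inside each class. A discretization of $(x_u - x_v)^2$ along critical pairs, combined with a union bound over energy classes, lifts the pointwise bound to a uniform bound over $\mathbb{S}_H$. Tuning $\lambda = \Theta(r^3 \log^2 n / \eps^2)$ with $\UCb$ sufficiently large absorbs both the discretization error and the net complexity, yielding sparsification error $\eps$ with failure probability $O(1/n^2)$.

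The main obstacle is carrying out this uniform-in-$x$ step. In \cref{section:DHS-JustificationForOneIteration} the $\lambda$-coreset directly bounded the number of critical pairs active in each energy class; here the hyperspanner bundle supplies only an effective-resistance-based importance cap, so the critical-pair count must instead be derived from the energy-class sizes together with that cap. This extra indirection is what forces $\lambda$ to carry the additional factor of $r$ beyond the $r^2 \log n$ already present in the structural lemma, and it is also where the symmetric structure of undirected hyperedges (each pair in $C(f)$ counts once regardless of orientation) is used to keep the net complexity from blowing up.
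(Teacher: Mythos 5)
Your size bound (via the Chernoff calculation from Lemma~\ref{lemma:DHS-size} with $\lambda n^2$ replaced by $\lambda\UCa n$) matches the paper's Lemma~\ref{lemma:UHS-boundedgesize}. Your ``uniform importance lemma'' is a valid effective-resistance estimate; the paper's Lemma~\ref{lemma:UHS-onestep-EF} is the analogous step, and in fact gets a sharper bound ($z_f R_G(u,v) \le 4\log n/(r\lambda)$, rather than your implicit $z_f R_{G_S}(u,v) \le \log n/\lambda$) by exploiting \cref{eq:UHS-onestep-oneclique}, i.e.\ that every $(a,b)\in C(f')$ is covered by $\binom{|f'|}{2}$ clique edges whose combined energy is $\ge \tfrac{r}{4}$ times any single term. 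That constant-factor discrepancy is harmless; the real divergence is in how the spectral guarantee is closed.

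The paper does \emph{not} redo the chaining argument for the undirected case. Once $z_f R_G(u,v)$ is small for all $f \in F\setminus S$, it simply invokes Bansal et al.'s sufficient condition on sampling probabilities (Proposition~\ref{prop:bansal}): any $p_f$ satisfying \cref{eq:BansalSamplingProbability} yields an $\eps$-spectral sparsifier with probability $1 - O(1/n^2)$, and the choice $\lambda \ge 8\UCb r^3\log^2 n/\eps^2$ makes $p_f = 1/2$ admissible. The chaining machinery is already encapsulated there. Your plan, by contrast, is to transplant the discretization-and-chaining analysis of \cref{section:DHS-JustificationForOneIteration} wholesale, and this is where your argument has a genuine gap. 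The coreset in the directed case is not merely a way to bound per-hyperarc energy; Lemma~\ref{lemma:DHS-onestep-MoreEnergy} uses the \emph{per-pair} structure of the $\lambda$-coreset --- $\lambda$ disjoint heavy hyperarcs through each surviving pair $(u,v)$ --- to prove $|E_i^x| < 2^i$, and that critical-pair bound is exactly what makes $|L_i| \le (\cdot)^{2^i}$ in Lemma~\ref{lemma:number-of-values} small enough to beat the Chernoff tail $\exp(-\Omega(2^i \eps^2\lambda/I^2))$. A bundle of hyperspanners supplies disjoint $u$--$v$ hyperpaths, but those hyperpaths overlap across different pairs $(u,v)$ and need not pass through the pair itself, so nothing like ``$\lambda$ heavy hyperedges per surviving pair'' is available. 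Your importance cap bounds $|F_i^x|$ (by roughly $2^i$ times the inverse of the cap), but does not bound $|E_i^x|$: in the worst case every hyperedge in $F_i^x$ has a distinct critical pair, so $|E_i^x|$ can be on the order of $|F_i^x| \gg 2^i$, and the union bound in the analog of Lemma~\ref{lemma:number-of-values} would no longer close without pushing $\lambda$ up to the order of $n^2$ rather than $r^3\,\mathrm{polylog}$. You acknowledge this obstacle (``the critical-pair count must instead be derived from the energy-class sizes together with that cap'') and gesture at an extra factor of $r$, but no mechanism is given, and I do not see one that works. The clean resolution is precisely what the paper does: delegate the uniformity-over-$x$ step to \cref{prop:bansal}.
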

The constant $\UCa$ is the one used as a factor of the spanner size in \cref{proposition:graphspanner}, and $\UCb$ is a constant coming from \cref{prop:bansal} (explained below).

We first prove \cref{lemma:UHS-onestep} in \cref{subsec:UHS_proof_onestep} and then \cref{theorem:UHS-main} in \cref{subsection:UHS_proof_iterative}.

\subsection{Proof of \texorpdfstring{\cref{lemma:UHS-onestep}}{Lemma~\ref{lemma:UHS-onestep}}}\label{subsec:UHS_proof_onestep}
In this section, let $H=(V,F,z)$, $\lambda$, and $\eps$ be as given in the statement of  \cref{lemma:UHS-onestep}, and let $\Htl=(V,\Ftl,\ztl)$ be the output of \UHOnestep$(H,\lambda)$.
We also use $G = (V,E,w)$ to denote the associated graph of $H$.

First, analogous to \cref{lemma:DHS-size}, we can bound the size of $\Htl$ with high probability.

\begin{lemma}\label[lemma]{lemma:UHS-boundedgesize}
    Let $H=(V,F,z)$ be an undirected hypergraph with $|V| = n$ and $|F| = m$ and $\lambda$ be a positive integer.
    If $m > 12\log n$,\footnote{Otherwise, a given graph is already sparsified and we do not run {\UHOnestep}.}
    {\UHOnestep}$(H,\lambda)$ outputs a sub-hypergraph $\Htl=(V,\Ftl,\ztl)$ of $H$ satisfying
    $|\Ftl|\leq \frac{m}{2} + \prn*{3m\log n}^{\frac12} + \lambda \UCa n$ with probability at least $1-\frac{2}{n^2}$.
\end{lemma}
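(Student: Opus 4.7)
The proof plan closely mirrors that of Lemma~\ref{lemma:DHS-size}, substituting the coreset size bound by a hyperspanner bundle size bound. The only structural quantity that differs from the directed case is the deterministic part $|S|$: in {\UHOnestep}, $S$ is a bundle of $\lambda$ disjoint $(\log n)$-hyperspanners, which by \cref{lem:hyperspanner} and \cref{proposition:graphspanner} satisfies $|S|\le \lambda\UCa n$, since each iterative application of the $(\log n)$-spanner algorithm to the (remaining) associated graph produces at most $\UCa n$ edges, and each such edge contributes at most one hyperedge to the current hyperspanner layer.

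For the random part, I plan to introduce, for each $f \in F\setminus S$, an indicator variable $X_f$ that equals $1$ if $f$ is sampled by {\UHOnestep} (probability $1/2$) and $0$ otherwise, so that
\begin{align}
|\Ftl \setminus S| = \sum_{f \in F\setminus S} X_f,
\qquad
\E\Bigl[\sum_{f \in F\setminus S} X_f\Bigr] = \tfrac{m - |S|}{2} \le \tfrac{m}{2}.
\end{align}
Then I will invoke the Chernoff bound (\cref{theorem:chernoff}) with $a=1$, $\mu = m/2$, and $\delta = \sqrt{12\log n/m}$, which lies in $(0,1)$ by the assumption $m > 12\log n$. This yields
\begin{align}
\mathbb{P}\Bigl[\,\sum_{f \in F\setminus S} X_f - \tfrac{m-|S|}{2} > \sqrt{3m\log n}\,\Bigr]
\le 2\exp\Bigl(-\tfrac{m \cdot (12\log n/m)}{6}\Bigr)
= \tfrac{2}{n^2}.
\end{align}
Hence with probability at least $1-2/n^2$, $|\Ftl\setminus S| \le m/2 + \sqrt{3m\log n}$, and combining with $|S|\le \lambda\UCa n$ gives the stated bound on $|\Ftl|$.

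The argument is entirely routine given the ingredients already assembled: there is no real obstacle, since the only change from the directed case is the replacement of the coreset size bound $\lambda n^2$ by the hyperspanner-bundle size bound $\lambda\UCa n$, which is a direct consequence of \cref{proposition:graphspanner} applied $\lambda$ times to the associated multi-graph (each time with the previously chosen edges removed). If one wished to be careful, one would note that when the algorithm removes the edges picked in earlier hyperspanners from the associated graph before the next call, the vertex count is unchanged, so the $\UCa n$ per-spanner bound persists across all $\lambda$ layers; this justifies summing to $\lambda \UCa n$.
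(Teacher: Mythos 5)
Your proposal is correct and follows exactly the route the paper takes: the paper's own proof is the one-liner ``replace $\lambda n^2$ in the proof of \Cref{lemma:DHS-size} with $\lambda\UCa n$,'' and your expanded argument (the $|S|\le\lambda\UCa n$ bound from \Cref{proposition:graphspanner} applied $\lambda$ times on the unchanged vertex set, the indicator variables, the Chernoff bound with $\mu=m/2$ and $\delta=\sqrt{12\log n/m}$) is precisely a careful unwinding of that reference. All the arithmetic checks out, including $\delta\mu=\sqrt{3m\log n}$ and $\delta^2\mu/3=2\log n$.
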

\begin{proof}
    Replacing $\lambda n^2$ in the proof of \cref{lemma:DHS-size} with $\lambda \UCa n$ yields the desired bound.
\end{proof}

We below bound the sparsification error of {\UHOnestep}.
We build on the proof strategy of Bansal~et~al.~\citep{bansal2019new} and show that the sampling probability in {\UHOnestep} satisfies a sufficient condition for a sampling algorithm to produce an $\eps$-spectral sparsifier (see \cref{prop:bansal}).

The following lemma ensures that once $\lambda$ hyperspanners are selected, the remaining hyperedges have small effective resistances in the associated graph.
\begin{lemma}\label[lemma]{lemma:UHS-onestep-EF}
    Let $H=(V,F,z)$ be an undirected hypergraph such that $|V| = n$ and $|f| \in (r/2, r]$ for every $f \in F$, $G = (V,E,w)$ be the associated graph of $H$, $\lambda$ be a positive integer, and $S$ be a bundle of $\lambda$ disjoint  ($\log n$)-hyperspanners.
    %$S$ be the output of {\HyperSpannerFinder}$(H,\lambda)$.
    For any hyperedge $f\in F\setminus S$ and $\{u,v\}\in C(f)$, we have $z_f R_G(u,v) \leq \frac{4\log n}{r\lambda}$.
%    \begin{align}
%        \label{eq:UHS-onestep-EFbound}
%        z_f R_G(u,v) \leq \frac{4\log n}{r\lambda}.
%    \end{align}
\end{lemma}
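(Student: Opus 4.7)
My plan is to exhibit $\lambda$ edge-disjoint low-resistance ``channels'' between $u$ and $v$ in the associated graph $G$, one from each of the $\lambda$ disjoint $(\log n)$-hyperspanners composing $S$, and then bound $R_G(u,v)$ by combining explicit low-energy unit $u$--$v$ electrical flows through these channels via Thomson's principle.

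First, I would fix $f\in F\setminus S$ and $\{u,v\}\in C(f)$, and write $S=\bigcup_{j=1}^{\lambda}S_j$. For each $j$, because $f$ is still present in the hypergraph on which $S_j$ was built (it belongs to no $S_i$ with $i<j$), the hyperspanner property yields a $u$--$v$ hyperpath $P_j=\{f_1^j,\dots,f_{\ell_j}^j\}\subseteq S_j$ satisfying $\sum_{i=1}^{\ell_j}1/z_{f_i^j}\le \log n/z_f$. These paths use disjoint hyperedges across different $j$ because the $S_j$ are themselves disjoint, and therefore correspond to edge-disjoint subgraphs $G_j\subseteq G$ obtained by taking the union of the cliques of the hyperedges in $P_j$.

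Second, I would bound the effective resistance $R_{G_j}(u,v)$ in each $G_j$. The key ingredient is the classical identity $R_{K_s}(a,b)=2/s$ for the effective resistance in a unit-weighted complete graph on $s$ vertices, which in our setting gives a resistance of at most $2/(|f_i^j|z_{f_i^j})<4/(r z_{f_i^j})$ between any two vertices inside the clique of $f_i^j$ (using $|f_i^j|>r/2$). Choosing intersection vertices $w_i\in f_i^j\cap f_{i+1}^j$ with $w_0=u$ and $w_{\ell_j}=v$, and concatenating optimal unit $w_{i-1}\to w_i$ electrical flows within the successive cliques, yields a unit $u$--$v$ flow on $G_j$ of Dirichlet energy at most $\sum_{i=1}^{\ell_j}4/(r z_{f_i^j})\le 4\log n/(rz_f)$. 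Thomson's principle then gives $R_{G_j}(u,v)\le 4\log n/(rz_f)$.

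Third, because distinct hyperedges yield distinct (parallel) edges in the multigraph $G$, the $G_j$'s are pairwise edge-disjoint inside $G$. Taking a unit $u$--$v$ flow $\phi_j$ in each $G_j$ that achieves $R_{G_j}(u,v)$, the averaged flow $\frac{1}{\lambda}\sum_j\phi_j$ is a unit $u$--$v$ flow on $G$ whose Dirichlet energy equals $\frac{1}{\lambda^2}\sum_j R_{G_j}(u,v)\le 4\log n/(r\lambda z_f)$. Applying Thomson's principle once more bounds $R_G(u,v)$ by this quantity, yielding $z_f R_G(u,v)\le 4\log n/(r\lambda)$ as required.

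The main obstacle is really the second step: one has to notice the clique-resistance formula $2/(sw)$ and view each hyperpath in $G$ as a \emph{series} concatenation of short flows across individual cliques---this is precisely where the $1/r$ improvement over a naive path-resistance bound comes from, and it crucially uses the assumption $|f|>r/2$. The disjointness of the bundle and the parallel-flow composition in the third step are then standard electrical-network bookkeeping.
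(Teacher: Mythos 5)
Your proof is correct, and it takes a genuinely different (dual) route from the paper's. The paper bounds $R_G(u,v)$ via the Dirichlet characterization $R_G(u,v)=\max_{x}\,(x_u-x_v)^2/(x^\top L_G x)$: it first proves a Dirichlet-form inequality inside each clique, $\sum_{\{a',b'\}\in C(f')} z_{f'}(x_{a'}-x_{b'})^2\ge \tfrac{r}{4}z_{f'}(x_a-x_b)^2$, then uses Cauchy--Schwarz along each hyperpath, and finally sums over the $\lambda$ disjoint hyperspanners and over the cliques within each hyperpath to lower-bound $x^\top L_G x$. You instead work with Thomson's principle: you use the exact clique-resistance identity $R_{K_s}=2/s$ (which is precisely the optimizer of the paper's clique inequality), compose flows in series along each hyperpath, and compose the $\lambda$ resulting flows in parallel using edge-disjointness in the multigraph. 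These two arguments are electrical-network duals of each other; the paper's route avoids invoking Thomson's principle and the clique-resistance formula explicitly, while yours replaces the Cauchy--Schwarz step by the standard series-resistance bound and makes the $r$-saving manifestly a consequence of the clique's spectral gap. Both give the same constants and the same bound. One small bookkeeping point you already handle correctly but worth emphasizing: the claim that the flows $\phi_j$ in the averaging step have pairwise disjoint supports relies on $G$ being treated as a \emph{multigraph} in which each hyperedge contributes its own parallel clique, so that cliques from different hyperedges (and hence different $S_j$'s, and also different hyperedges within one $P_j$) are edge-disjoint even when they connect the same vertex pairs; that is exactly the convention the paper adopts for the associated graph.
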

\begin{proof}
Fix any $f\in F\setminus S$ and $\{u,v\}\in C(f)$.
From \cref{theorem:PRE-EffectiveResistance},
%for any $\{u', v'\} \in C(F)$,
we have $R_G(u,v) = \max_{x \in \mathbb{S}_G} (x_{u}-x_{v})^2$,
where $\mathbb{S}_G = \Set*{x \in \R^V}{x^\top L_G x = 1}$.
Hence, it suffices to prove that $z_f (x_u-x_v)^2\leq \frac{4\log n}{r\lambda}$ holds for any $x \in \mathbb{S}_G$.

We first show that any $f' \in F$ and $a, b \in C(f')$ satisfy
\begin{align} \label{eq:UHS-onestep-oneclique}
    \sum_{\{a', b'\}\in C(f')} z_{f'} (x_{a'}-x_{b'})^2\geq  \frac{r}{4} z_{f'} (x_{a}- x_{b})^2.
\end{align}
This is trivial if $|f'| = 2$ since $C(f') = \set{a, b}$ and $r<4$ due to $|f'| \in (r/2, r]$.
Otherwise, $C(f')$ has $|f'|-2$ vertices other than $a$ and $b$.
Let $c$ be one such vertex, and consider the two edges $\{a,c\}, \{b,c\}\in C(f')$ in the associated graph.
Since $(x_a-x_c)^2 + (x_b-x_c)^2 \geq  \frac{1}{2} (x_{a}- x_b)^2$, summing both sides over all $c\in f'\setminus \{a,b\}$ yields
\begin{align}
    \sum_{c\in f'\setminus \{a,b\}} \prn*{(x_a-x_c)^2 + (x_b-x_c)^2} \geq  \frac{|f'|-2}{2}  (x_{a}- x_b)^2.
\end{align}
Since the size of each hyperedge is assumed to be in $(r/2, r]$, we have $|f'| \ge r/2$, hence
\begin{align}
    %\therefore
    \sum_{\{a' ,b'\}\in C(f')} (x_{a'} - x_{b'})^2
    & \geq
    (x_{a}- x_b)^2 + \sum_{c\in f'\setminus \{a,b\}} \prn*{(x_a-x_c)^2 + (x_b-x_c)^2}
    \\
    & \geq
    \frac{|f'|}{2}  (x_{a}- x_b)^2
    \\
    & \geq
    \frac{r}{4}  (x_{a}- x_b)^2.
\end{align}
Multiplying $z_{f'}$ to both sides yields \cref{eq:UHS-onestep-oneclique}.

Since $S$ is a bundle of $\lambda$ disjoint  ($\log n$)-hyperspanners,
$S$ contains $\lambda$ disjoint $u$--$v$ hyperpaths $P_i = \{f^i_{j}\}_{j=1}^{\ell_i}$ $(i=1,2,\cdots,\lambda)$ such that
\begin{align}\label{eq:UHS-onestep-pathcondition}
    z_f \leq \log n\left(\sum_{j=1}^{\ell_i} z_{f^i_{j}}^{-1}\right)^{-1}.
\end{align}
Take one such hyperpath $P_i$.
Let $v^i_0 = u$, $v^i_{\ell_i} = v$, and $v^i_j$ be a vertex contained in $f^i_{j} \cap f^i_{j+1}$ ($j=1,2,\cdots,\ell_i-1$).
The Cauchy--Schwarz inequality of two vectors,
\begin{align}
    &
    \left(
    z_{f^{i}_{1}}^{1/2} \left( x_{v_{0}^{i}} - x_{v_1^i} \right),
    z_{f^{i}_{2}}^{1/2} \left( x_{v_{1}^{i}} - x_{v_2^i} \right),
    \dots,
    z_{f^{i}_{\ell_i}}^{1/2} \left( x_{v^{i}_{\ell_i-1}} - x_{v^i_{\ell_i}} \right)
    \right)
    \quad
    \text{and}
    \\
    &
    \left(
    z_{f^i_1}^{-1/2},
    z_{f^i_2}^{-1/2},\cdots,
    z_{f^i_{\ell_i}}^{-1/2}
    \right),
\end{align}
implies
\begin{align}
    \left(
    \sum_{j=1}^{\ell_i}
        z_{f^i_j} \left(x_{v^{i}_{j-1}}-x_{v^i_j}\right)^2
    \right)
    \left(
    \sum_{j=1}^{\ell_i}
        z_{f^i_j}^{-1}
    \right)
    &\geq
    \left(
    \sum_{j=1}^{\ell_i}
        z_{f^i_j}^{1/2} \left(x_{v^{i}_{j-1}}-x_{v^i_j}\right) \cdot z_{f^i_j}^{-1/2}
    \right)^2
    \\
    &=
    \left(
    \sum_{j=1}^{\ell_i}
        \left(x_{v^{i}_{j-1}}-x_{v^i_j}\right)
    \right)^2
    \\
    &=
    \left(
        x_{v^{i}_{0}}-x_{v^i_{\ell_i}}
    \right)^2
    \\
    &=
    \left(
        x_u - x_v
    \right)^2.
\end{align}
By using this and \cref{eq:UHS-onestep-pathcondition}, we obtain
\begin{align}\label{eq:UHS-onestep-onespanner}
    \sum_{j=1}^{\ell_i}
        z_{f^i_j} \left(x_{v^{i}_{j-1}}-x_{v^i_j}\right)^2
    \geq
    \left(
    \sum_{j=1}^{\ell_i}
        z_{f^i_j}^{-1}
    \right)^{-1}
    \left(
        x_u - x_v
    \right)^2
    \geq
    \frac{z_f}{\log n}
    \cdot
    \left(
        x_u - x_v
    \right)^2
    .
\end{align}
Combining \cref{eq:UHS-onestep-oneclique,eq:UHS-onestep-onespanner}, for any $x \in \mathbb{S}_G$, we have
\begin{align}
    1 = x^\top L_G x =
    \sum_{e=\{a,b\}\in E} w_e (x_a-x_b)^2
    &
    \geq
    \sum_{i=1}^\lambda \sum_{j=1}^{\ell_i} \sum_{\{a,b\}\in C(f^i_j)} z_{f^i_j} (x_a-x_b)^2
    \\
    & \geq
    \sum_{i=1}^\lambda \sum_{j=1}^{\ell_i} \frac{rz_{f^i_j} \left(x_{v^i_{j-1}}-x_{v^i_j}\right)^2 }{4}
    \quad (\text{by \cref{eq:UHS-onestep-oneclique}})
    \\
    & \geq
    \sum_{i=1}^\lambda \frac{rz_f \left(x_{u}-x_{v}\right)^2   }{4\log n}
    \quad (\text{by \cref{eq:UHS-onestep-onespanner}})
    \\
    & =
    \frac{\lambda r }{4\log n}  z_f \left(x_{u}-x_{v}\right)^2.
\end{align}
From this inequality and \cref{theorem:PRE-EffectiveResistance}, we obtain $z_f R_G(u,v) \leq \frac{4\log n}{\lambda r}$ as desired.
%    \begin{align}
%        z_f R_G(u,v)
%        %=
%        %z_f \max_{x \in \mathbb{S}_G} (x_u-x_v)^2
%        \leq
%        \frac{4\log n}{\lambda r},
%    \end{align}
%    thus completing the proof.
\end{proof}

The following claim due to Bansal~et~al.~\citep{bansal2019new} provides a sufficient condition for a sampling probability to yield an $\eps$-spectral sparsifier with high probability.

\begin{proposition}[{\citep{bansal2019new}}]\label[proposition]{prop:bansal}
Let $H=(V,F,z)$ be an undirected hypergraph with $|V| = n$ and $|f| \in (r/2, r]$ for every $f \in F$ and $\GH$ be the associated graph of $H$.
For each $f \in F$, let $p_f$ be a number satisfying $p_f \in [0,1]$ and
\begin{align}
    \label{eq:BansalSamplingProbability}
    p_f \geq \min\left\{
        1, \max_{\{u,v\}\in C(f)} \frac{\UCb r^4 \log n}{\eps^2} \cdot z_f R_G(u,v)
    \right\},
\end{align}
where $\UCb$ is an absolute constant.
Then, by sampling each $f \in F$ independently with probability $p_f$ and setting its weight to $z_f / p_f$ if sampled, we can obtain an $\eps$-spectral sparsifier of $H$ with probability at least $1-O(1/n^2)$.
\end{proposition}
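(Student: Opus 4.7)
The plan is to combine per-vector Chernoff concentration with a uniform chaining argument to handle the nonlinearity of the hypergraph Laplacian. Since the sparsifier inequality \eqref{eq:Intro-Def-of-Sparsifier-hyper} is invariant under rescaling of $x$, it suffices to prove $|x^\top L_{\Htl}(x) - x^\top L_H(x)| \le \eps$ uniformly over the normalization sphere $\mathbb{S}_H = \Set*{x \in \R^V}{x^\top L_H(x) = 1}$.

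First I would establish a pointwise concentration bound for each fixed $x \in \mathbb{S}_H$. Writing $x^\top L_{\Htl}(x) = \sum_{f \in F} Y_f$ where $Y_f = (z_f/p_f) \max_{\{u,v\} \in C(f)} (x_u-x_v)^2$ when $f$ is sampled and $0$ otherwise, the expectation equals $x^\top L_H(x) = 1$. To apply \cref{theorem:chernoff} I need to bound $\max_f Y_f$. If $(u^\ast, v^\ast)$ attains the clique maximum for $f$, then by \cref{theorem:PRE-EffectiveResistance} we have $(x_{u^\ast}-x_{v^\ast})^2 \le R_G(u^\ast, v^\ast) \cdot x^\top L_G x$. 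Since each hyperedge $f$ contributes a clique of $\binom{|f|}{2} \le r^2/2$ edges of weight $z_f$ to $\GH$, the clique-expansion inequality yields $x^\top L_G x \le (r^2/2) \cdot x^\top L_H(x) = r^2/2$. The sampling-probability hypothesis then bounds each $Y_f$ by $O(\eps^2/(r^2 \log n))$, and \cref{theorem:chernoff} gives a pointwise tail bound of order $n^{-\Omega(r^2)}$ for the deviation exceeding $\eps$.

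The main obstacle is upgrading this pointwise estimate into a uniform bound over the continuum $\mathbb{S}_H$, since the hypergraph Laplacian is nonlinear and matrix Bernstein is unavailable. I would adopt a chaining-type strategy in the spirit of \cref{section:DHS-JustificationForOneIteration}: partition $F$ into $O(\log n)$ energy classes $F_i^x$ consisting of hyperedges with $Q_H^x(f)$ of order $2^{-i}$, observe that $|F_i^x| = O(2^i)$ by normalization, and for each class design a discretization of the squared differences $(x_u-x_v)^2$ fine enough that the discretization error is $O(\eps/\log n)$ yet coarse enough that the number of effective test-vector types at level $i$ is at most $\exp(C r^2 \cdot 2^i)$ for a suitable constant $C$. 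The extra factor $r^4$ in the sampling probability (rather than the $r^2$ that would suffice from the effective-resistance bound alone) is precisely what absorbs this combinatorial overhead of enumerating up to $O(r^2)$ candidate critical pairs per hyperedge across all hyperedges in a class.

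Once the per-class uniform bound is obtained via \cref{theorem:chernoff} plus a union bound over types, a final union bound over the $O(\log n)$ energy classes combines per-class deviations of $O(\eps/\log n)$ into total deviation $\eps$ with overall failure probability $O(1/n^2)$, establishing \cref{prop:bansal}.
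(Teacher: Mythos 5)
The paper does \emph{not} prove \cref{prop:bansal}. It is cited verbatim from Bansal~et~al.~\citep{bansal2019new} (and the inequality relaxation from Kapralov~et~al.~\citep{kapralov2021spectral}); the text immediately following the statement says exactly this. So your attempt should be compared against the argument in those references, not against anything in the present paper. That argument is a \emph{weight-assignment} technique: one distributes the energy of each hyperedge over the clique edges of the associated graph and then invokes matrix concentration for the linear graph Laplacian. It is fundamentally linear-algebraic, not a chaining argument. Your proposal takes a genuinely different route.

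Unfortunately, the route you propose has a gap that I do not see how to close. The chaining argument of \cref{section:DHS-JustificationForOneIteration} works because the $\lambda$-coreset caps the energy of every surviving hyperarc at level $i$ by $1/(2^{i-1}\lambda)$, so the Chernoff range parameter is $a=1/(2^{i-2}\lambda)$ and the per-level failure probability is $\exp\bigl(-\Omega(\eps_i^2\,2^i\lambda)\bigr)$, which grows with $i$ fast enough to beat the $(\mathrm{poly})^{2^i}$-sized discretization net counted in \cref{lemma:number-of-values}. In your setting there is no coreset: the only uniform bound you derive on a sampled term is $Y_f \le \eps^2/(2\UCb r^2\log n)$, coming from the effective-resistance lower bound on $p_f$ together with the clique-expansion inequality. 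That bound is \emph{independent of the level} $i$. Consequently the per-level Chernoff exponent is a fixed $\Theta(r^2\log n / I^2)$ rather than $\Theta(2^i)$, and the union bound over even an $\exp(\Omega(2^i))$ net fails once $i$ is moderately large (and $i$ ranges up to $I=\Theta(\log m)$, so $2^i$ can be on the order of $m$). Your asserted $\exp(C r^2 2^i)$ bound on the number of discretized test-vector types is also unsupported; a per-hyperedge enumeration of $O(r^2)$ candidate critical pairs times $\mathrm{poly}(n,m,1/\eps)$ discretization buckets would give a base that is $\mathrm{poly}(n,m,r,1/\eps)$ raised to the $2^i$, not $\exp(Cr^2 2^i)$, and in any case the exponential-in-$2^i$ net cannot be cancelled by a Chernoff exponent that does not itself grow with $2^i$. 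If you want to prove this proposition you will need either the weight-assignment machinery of the cited papers, or an additional structural device (playing the role of the coreset in the directed case) that forces the level-$i$ term sizes to shrink like $2^{-i}$.
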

%\begin{remark}
It should be noted that the original version in \citep{bansal2019new} deals with the case when
the equality holds in \cref{eq:BansalSamplingProbability}.
The above extended version is given in \citep{kapralov2021spectral}.

We below show that, if $\lambda$ is sufficiently large as assumed in the statement of \cref{lemma:UHS-onestep},  the sampling probability in
{\UHOnestep}$(H,\lambda)$ satisfies \cref{eq:BansalSamplingProbability}.
This fact, together with \cref{prop:bansal}, completes the analysis of the sparsification error of {\UHOnestep}.
\begin{proof}[Proof of \cref{lemma:UHS-onestep}]
Let $S$ be the bundle of the hyperspanners constructed in the first line in {\UHOnestep}.
Note that we can regard {\UHOnestep} as an algorithm that first assigns a sampling probability of $p_f = 1$ (resp.\ $1/2$) to each $f \in S$ (resp.\ $f \in F \setminus S$) and then samples each hyperedge independently with probability $p_f$ (and multiply $z_f$ by $1/p_f$).
Since $p_f = 1$ trivially satisfies \cref{eq:BansalSamplingProbability}, we below discuss the probability of $p_f = 1/2$ for $f \in F \setminus S$.

Let $f \in F\setminus S$ and $\{u,v\}\in C(f)$.
Since we have $\lambda \geq \frac{8 \UCb r^3 \log^2 n}{\eps^2}$, \cref{lemma:UHS-onestep-EF} implies $z_f R_G(u,v) \leq \frac{4\log n}{r\lambda} \leq \frac{\eps^2}{2 \UCb r^4 \log n}$.
%    \begin{align}
%        z_f R_G(u,v) \leq \frac{4\log n}{r\lambda} \leq \frac{\eps^2}{2 \UCb r^4 \log n}
%    .
%    \end{align}
Hence, for any $\{u,v\}\in C(f)$, it holds that
\begin{align}
\frac12
=
\frac{\UCb r^4 \log n}{\eps^2} \cdot \frac{\eps^2}{2 \UCb r^4 \log n}
\ge
\frac{\UCb r^4 \log n}{\eps^2} \cdot z_f R_G(u,v).
\end{align}

Therefore, for every hyperedge $f \in F$, the sampling probability $p_f$ in \UHOnestep$(H, \lambda)$ satisfies \cref{eq:BansalSamplingProbability}.
Consequently, by \cref{prop:bansal}, \UHOnestep$(H, \lambda)$ with $\lambda \geq \frac{8 \UCb r^3 \log^2 n}{\eps^2}$ returns an $\eps$-spectral sparsifier of $H$ with probability at least $1-O({1}/{n^2})$.
Combining this with the size bound (\cref{lemma:UHS-boundedgesize}) completes the proof of \cref{lemma:UHS-onestep}.
\end{proof}

\subsection{Proof of \texorpdfstring{\cref{theorem:UHS-main}}{Theorem~\ref{theorem:UHS-main}}}\label{subsection:UHS_proof_iterative}
In this section, let $H=(V,F,z)$ be an input hypergraph with $|V|=n$ and $|F|=m$, $\eps\in (0,1)$, and $\Htl=(V,\Ftl,\ztl)$ be the output of {\UHSparsify}$(H,\eps)$.
We show that $\Htl$ is an $\eps$-spectral sparsifier of $H$ with $|\Ftl|=O\prn*{\frac{nr^3}{\eps^2}\log^2n}$.
We define $m^*, T, i_{\rm end}, (\Htl_i = (V,\Ftl_i, \tilde{z_i}), \lambda_i),m_i$, and $\eps_i$ as given in the algorithm description of {\UHSparsify}$(H,\eps)$ (\cref{alg:UHS-iterative}).
That is,
$m^*=\frac{nr^3}{\eps^2}\log^2n$ is the target size,
$T = \ceil*{\log_{4/3} \left(\frac{m}{m^*}\right)}$ is the maximum number of iterations,
$\Tend$ is the number of iterations performed,
($\Htl_i = (V,\Ftl_i, \tilde{z_i})$, $\lambda_i$) is the input of {\UHOnestep} at the $i$th iteration, $m_i=|\tilde{F}_i|$,
and $\eps_i = \frac{\eps}{4 \log_{4/3}^2\left(\frac{m_i}{m^*}\right)}$ as in \cref{line:UHSparsify-eps-lambda} of \cref{alg:UHS-iterative}.

We first show that the number of hyperedges decreases geometrically over the iterations.
\begin{lemma}\label[lemma]{lemma:UH-threequarter}
    Let $m_i$ be the number of hyperedges in $\Htl_i$, and
    suppose that $m_i \geq \UCc m^*$ for a sufficiently large constant $\UCc$. Then, we have $\prn*{3m_i\log n}^{\frac12} + \lambda_i \UCa n \leq \frac14 m_i$.
    %\begin{align}
    %    \prn*{3m_i\log n}^{\frac12} + \lambda_i \UCa n \leq \frac14 m_i.
    %\end{align}
\end{lemma}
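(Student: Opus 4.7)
The plan is to mirror the strategy of \cref{lemma:DH-threequarter}, splitting the target inequality into the two independent bounds $(3m_i\log n)^{1/2}\le m_i/8$ and $\lambda_i\UCa n\le m_i/8$, and verifying each for $\UCc$ sufficiently large.

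For the first bound, I would observe that $m^*=\frac{nr^3}{\eps^2}\log^2 n\ge \log^2 n$, so $m_i\ge \UCc m^*\ge \UCc\log^2 n$, which exceeds $192\log n$ once $\UCc$ is a fixed sufficiently large absolute constant; then $(3m_i\log n)^{1/2}\le m_i/8$ is immediate. Hence everything reduces to checking $\lambda_i\UCa n\le m_i/8$.

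For the second bound, I would substitute $\eps_i=\frac{\eps}{4\log_{4/3}^2(m_i/m^*)}$ and $\lambda_i\le 1+\frac{8\UCb r^3\log^2 m_i}{\eps_i^2}$, giving
\[
\lambda_i\UCa n\le \UCa n + \frac{128\UCa\UCb\,nr^3\,\log^2 m_i\,\log_{4/3}^4(m_i/m^*)}{\eps^2}.
\]
Writing $\alpha=m_i/m^*$ and recalling $m^*/n=r^3\log^2 n/\eps^2$, the inequality $\lambda_i\UCa n\le \alpha m^*/8$, after dividing through by $m^*/n$, becomes
\[
\frac{\alpha}{8}\ge \frac{\UCa\eps^2}{r^3\log^2 n} + \frac{128\UCa\UCb\,\log^2(\alpha m^*)\,\log_{4/3}^4\alpha}{\log^2 n}.
\]
Using $\log^2(\alpha m^*)\le 2(\log^2\alpha+\log^2 m^*)$ and the crude bound $\log m^*=O(\log(nr/\eps))$, I would reduce to showing
\[
\alpha\log^2 n\;\ge\; C\bigl(\log^6\alpha + \log^2 m^*\cdot \log^4\alpha\bigr)
\]
for an absolute constant $C$ absorbing all the numerical factors. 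This inequality is of the same shape as the one in the directed case (cf.\ the definition of $g(\alpha)$ in the proof of \cref{lemma:DH-threequarter}): the right-hand side is polylogarithmic in $\alpha$ while the left-hand side is linear, so choosing $\UCc$ large enough makes $g(\alpha)\ge 0$ for every $\alpha\ge \UCc$.

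The main obstacle compared to the directed case is that $m^*$ here contains only a $\log^2 n$ factor rather than a $\log^3(n/\eps)$ factor, so the ratio $\log^2 m^*/\log^2 n$ is not a universal constant when $\eps$ is extremely small. However, this is harmless for the lemma: whenever $\eps$ is so small that $\log(1/\eps)$ dominates $\log n$, the target $m^*$ already exceeds any conceivable input size $m$, and the hypothesis $m_i\ge \UCc m^*$ never holds, making the claim vacuous. In the substantive regime $\eps\ge n^{-O(1)}$, $\log m^*$ is $O(\log n)$, and the final comparison between $\alpha\log^2 n$ and the polylogarithmic expression in $\alpha$ goes through exactly as in \cref{lemma:DH-threequarter}, yielding the claim for $\UCc$ a sufficiently large absolute constant.
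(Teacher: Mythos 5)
Your two-part split and the bookkeeping leading to the inequality
\[
\frac{\alpha}{8}\ \ge\ \frac{\UCa\eps^2}{r^3\log^2 n} + \frac{128\UCa\UCb\,\log^2(\alpha m^*)\,\log_{4/3}^4\alpha}{\log^2 n}
\]
are accurate, and the structural analogy to \cref{lemma:DH-threequarter} is the right one. The gap is in your dismissal of the small-$\eps$ regime. You claim that whenever $\log(1/\eps)$ dominates $\log n$, we have $m^*>m$ and the hypothesis $m_i\ge\UCc m^*$ is vacuous. But a hypergraph on $n$ vertices can have up to $2^n$ hyperedges, while $m^*=nr^3\eps^{-2}\log^2 n$ only exceeds $2^n$ once $\eps\lesssim 2^{-n/2}$. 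For $\eps$ anywhere between, say, $2^{-n/3}$ and $n^{-\omega(1)}$, the hypothesis $m_i\ge\UCc m^*$ can hold, yet $\log^2 m^*/\log^2 n=\Theta(n^2/\log^2 n)$ is unbounded. Tracing through your display, for $\alpha=\UCc$ fixed the right-hand side then grows with $n$ while the left-hand side does not, so no absolute constant $\UCc$ makes the inequality hold. The claim is genuinely false if $\lambda_i$ is defined through $\log^2 m_i$.

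The fix is that $\lambda_i$ should be defined through $\log^2 n$, not $\log^2 m_i$: the paper's own proof of this lemma writes $\lambda_i=\big\lceil 8\UCb r^3\log^2 n/\eps_i^2\big\rceil$, which is also what \cref{lemma:UHS-onestep} (the one-step lemma) actually requires, even though \cref{alg:UHS-iterative} writes $\log^2 m_i$. (This is an internal inconsistency in the paper; $\log^2 n$ is the version under which everything goes through.) With $\log^2 n$, dividing the target inequality $\lambda_i\UCa n\le m_i/8$ by $m^*/n=r^3\log^2 n/\eps^2$ gives
\[
\frac{\alpha}{8}\ \ge\ \frac{\UCa\eps^2}{r^3\log^2 n}+128\UCa\UCb\,\log_{4/3}^4\alpha,
\]
whose right-hand side is $O(1+\log^4\alpha)$ with purely absolute constants (since $\eps<1$, $r\ge1$, $n\ge2$). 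This is dominated by $\alpha/8$ for all $\alpha\ge\UCc$ once $\UCc$ is a sufficiently large absolute constant, so the lemma follows without any side-argument. The directed case avoids this pitfall because there $m^*$ carries a $\log^3(n/\eps)$ factor and $m^*\le(n/\eps)^5$, so $\log^3 m_i/\log^3(n/\eps)=O(\log^3\alpha)$ is controlled; the undirected $m^*$ has only $\log^2 n$ and cannot absorb $\log^2 m_i$ when $\eps$ is tiny.
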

\begin{proof}
    The proof is almost identical to that of \cref{lemma:DH-threequarter};
    the only difference is that we set $\lambda_i = \ceil*{\frac{8\UCb r^3\log^2n}{\eps_i^2}}$, where $\eps_i = \frac{\eps}{4\log_{4/3}^2\frac{m_i}{m^*}}$.
    We can prove $\prn*{3m_i\log n}^{\frac12} \le \frac{m_i}{8}$ and $\lambda_i \UCa n \le \frac{m_i}{8}$ if $m_i \geq \UCc m^*$ holds for a sufficiently large constant $\UCc$, as in the proof of \cref{lemma:DH-threequarter}.
    Thus we obtain the claim.
\end{proof}

\begin{proof}[Proof of \cref{theorem:UHS-main}]
    Again, the proof is almost the same as that of \cref{theorem:DHS-main}.
    From \cref{lemma:UHS-onestep,lemma:UH-threequarter}, with probability at least $1 - O(T/n^2) \gtrsim 1 - O(1/n)$, we can suppose that every iteration of {\UHSparsify}$(H, \eps)$ is successful, i.e., $\Htl_{i+1}$ is an $\eps_{i}$-sparsifier of $\Htl_i$ and $m_{i+1} \leq \frac34 m_i$ holds.
    If $m_i \leq \UCc m^* = \UCc \frac{nr^3}{\eps^2}\log^2 n$ for some $i\leq T-1$, the size of $\Htl = \Htl_{\Tend}$ is already small as desired.
    If not, since $m_{i+1} \le \frac34 m_i$,
    we have $m_T \le m\cdot (3/4)^T \le m \cdot (3/4)^{\log_{4/3}\frac{m}{m^*}} = m^*$.
    Thus, the size of $\Htl = \Htl_{\Tend}$ is $O\prn*{ \frac{nr^3}{\eps^2} \log^2 n}$ in any case.
    Also, by the same discussion as the proof of \cref{theorem:DHS-main}, $\Htl = \Htl_{\Tend}$ is an $\eps$-spectral sparsifier of $H$ since $\Htl_{i+1}$ is an $\eps_i$-spectral sparsifier of $\Htl_i$ for $i = 0,1,\dots,\Tend - 1$.
\end{proof}

\subsection{Total Time Complexity}\label{subsec:UHS_total_complexity}
We bound the time complexity of {\UHSparsify}$(H, \eps)$.
\begin{theorem}\label[theorem]{theorem:UHS-Complexity}
    For any undirected hypergraph $H=(V,F,z)$ with $n$ vertices and $m$ hyperedges, where every hyperedge $f \in F$ satisfies $|f| \in (r/2, r]$, and  $\eps \in (0,1)$, {\UHSparsify}$(H, \eps)$ runs in $O(rm+ \poly(n,\eps^{-1}))$ time with probability at least $1 - O(1/n)$.
\end{theorem}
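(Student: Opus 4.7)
The plan is to bound the work of a single call to \UHOnestep, then sum over the outer iterations of \UHSparsify using the geometric decrease of $m_i$. For \UHOnestep$(\Htl_i,\lambda_i)$ the dominant phase is the construction of a bundle of $\lambda_i$ disjoint $(\log n)$-hyperspanners; the sampling phase costs only $O(m_i)$. By \cref{lem:hyperspanner}, each hyperspanner is obtained by running the spanner algorithm of \cref{proposition:graphspanner} on the associated multigraph $G_{\Htl_i}$ and lifting the edges back to hyperedges. Using the parallel-edge reduction from the footnote of \cref{proposition:graphspanner}, only the heaviest hyperedge in each class $\{u,v\}$ needs to be fed to the spanner subroutine, so the graph seen by each invocation has at most $n^2$ simple edges and each call runs in $\Otl(n^2)$ time.

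The key step is to organise the $\lambda_i$ successive spanner computations so that the total per-hyperedge bookkeeping is $O(r)$ rather than $O(r^2)$: I would maintain, for every pair $(u,v)$, a single representative hyperedge together with an auxiliary structure (for example, a bucketed list keyed by $(u,v)$ for each vertex of a spanning star gadget attached to each hyperedge) that allows the ``next heaviest'' hyperedge to be produced when a representative is absorbed into a spanner. Setting up these structures costs $O(rm_i)$ (scanning each hyperedge once and touching $O(r)$ entries), each spanner call itself costs $\Otl(n^2)$, and the aggregate update cost over the $\lambda_i$ spanners is bounded by the total number of hyperedges ever removed, which is at most $\lambda_i \UCa n = \poly(n,1/\eps)$. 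Altogether, a single call to \UHOnestep$(\Htl_i,\lambda_i)$ runs in $O(rm_i) + \poly(n,1/\eps)$ time.

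To conclude, I would invoke \cref{lemma:UHS-onestep} and \cref{lemma:UH-threequarter}: with probability at least $1 - O(1/n^2)$ each iteration succeeds, so $m_{i+1}\le\tfrac34 m_i$, and a union bound over the at most $T = O(\log(m/m^\ast))$ outer iterations yields success of all calls with probability at least $1 - O(1/n)$. Under this event, $\sum_{i=0}^{\Tend-1}m_i\le 4m$, so $\sum_i rm_i = O(rm)$, while $\sum_i \poly(n,1/\eps) = T\cdot\poly(n,1/\eps) = \poly(n,1/\eps)$ since $\lambda_i \le \lambda_0 = \Otl(r^3/\eps^2)$ and $T = O(\log n)$. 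Summing gives the total running time $O(rm + \poly(n,1/\eps))$ with probability at least $1 - O(1/n)$, as claimed.

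The main obstacle is the sharp $O(rm_i)$ bound for the bundle of $\lambda_i$ spanners, not just a single spanner: a naive implementation pays $O(r^2 m_i)$ simply to enumerate the cliques of the associated graph, or $O(\lambda_i r m_i)$ if the representative structure is rebuilt from scratch at every spanner. Both overshoot the target, so the amortised maintenance of the ``shortest-edge-per-class'' representation across the $\lambda_i$ spanner constructions, together with charging the update cost to hyperedge removals rather than to spanner iterations, is the technically delicate part of the proof.
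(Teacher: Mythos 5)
Your proposal is correct and builds on the same two observations that drive the paper's proof: replacing the clique $C(f)$ by a star $C^*(f)$ so that each hyperedge contributes only $O(r)$ (rather than $O(r^2)$) edges to the associated graph, and noting that the spanner subroutine of \cref{proposition:graphspanner} only ever needs the heaviest surviving edge in each parallel class. The difference is in how the second observation is implemented. The paper performs a \emph{one-shot} linear-time selection: for each pair $\{u,v\}$ it retains only the $\lambda C_3 n$ heaviest hyperedges in $F^{uv}$, costing $O(\sum_{\{u,v\}}|F^{uv}|)=O(rm_i)$ in total; after this pre-filter the residual multigraph has $O(\lambda n^3)$ edges, so the $\lambda$ successive spanner calls together cost $\tilde O(\lambda^2 n^3)=\poly(n,\eps^{-1})$, and no dynamic bookkeeping is needed. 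Your plan instead maintains a per-class ``next-heaviest'' structure and amortizes its updates against hyperedge removals. This also works, but the pre-filtering route is cleaner: it sidesteps any need to argue about amortized heap costs and makes the $\poly(n,\eps^{-1})$ term transparent, because it uses the fact (which your proposal doesn't state) that over $\lambda$ iterations at most $\lambda C_3 n$ edges can leave any single parallel class.

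A few of your auxiliary claims are off, though none affects the conclusion. (i) The aggregate update cost is not ``the number of hyperedges ever removed, $\lambda_i\UCa n$'' --- each removed hyperedge touches $O(r)$ parallel classes via its star, so the true bound is $O(\lambda_i\UCa n r)$ (times a $\log m_i$ factor if you use heaps); this is still $\poly(n,\eps^{-1})$. (ii) $T=\ceil*{\log_{4/3}(m/m^*)}$ can be $\Theta(n)$ when $m$ is exponential in $n$, so ``$T=O(\log n)$'' is wrong; $T=O(n)$ still gives $T\cdot\poly(n,\eps^{-1})=\poly(n,\eps^{-1})$ and the union bound still yields failure probability $O(T/n^2)=O(1/n)$. (iii) $\lambda_0$ is not $\Otl(r^3/\eps^2)$ in the sense of absolute constants hidden by the tilde: it is $O(r^3\log^2 m_0\log^4(m_0/m^*)/\eps^2)$, which is $\poly(n,\eps^{-1})$ only because $\log m_0=O(n)$. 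Finally, you should note that replacing cliques by stars degrades the stretch from $\log n$ to $2\log n$, so $\lambda$ must be doubled to compensate; this is a constant-factor adjustment that the paper spells out and your argument implicitly relies on.
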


Recall that the trick by Bansal~et~al.~\citep{bansal2019new} (see \cref{theorem:UHS-limitr}) enables us to restrict input hypergraphs to those consisting of hyperedges of size between $r/2$ and $r$.
With this trick, we can extend \cref{theorem:UHS-Complexity} to any undirected hypergraph $H$ of rank $r$.
More precisely, for each $i = 1,\dots,\ceil*{\log_2 r}$, we apply {\UHSparsify} to the sub-hypergraph of $H$ consisting of hyperedges of size $\left(2^{i-1}, 2^i\right]$, whose hyperedge set is denoted by $F_i$,
and return the union of those outputs.
Each {\UHSparsify} computes a desired sparsifier in $O\prn*{|F_i|2^i+ \poly(n,\eps^{-1})}$ time with probability at least $1 - O(1/n)$ by \cref{theorem:UHS-Complexity}.
Hence, the total time complexity is $\sum_{i=1}^{\ceil*{\log_2 r}} O\prn*{|F_i|2^i+ \poly(n,\eps^{-1})} \lesssim O(rm+ \poly(n,\eps^{-1}))$.
By \cref{theorem:UHS-limitr}, the algorithm outputs an $\eps$-spectral sparsifier with probability at least $1 - O(\ceil*{\log_2 r}/n) \gtrsim 1 - \Otl(1/n)$.
%\end{remark}

\begin{proof}[Proof of \cref{theorem:UHS-Complexity}]
We show how to implement {\UHSparsify} so that it runs in  $O(rm + \poly(n,\eps^{-1}))$ time.
To this end, we focus on the computation of a bundle of $\lambda$ disjoint ($\log n$)-hyperspanners since how to implement the other parts in {\UHOnestep} and {\UHSparsify} is trivial.
A native implementation discussed in \cref{subsec:UHS-hyperspanners} is that we construct the associated graph $\GH$ of $H$ and call the spanner-construction algorithm in \cref{proposition:graphspanner}.
This requires $O(\lambda r^2 m)$ time.
We below improve this implementation to compute a bundle of $\lambda$ disjoint hyperspanners in $O(rm+\poly(n,\eps^{-1}))$ time.
We need two ideas to achieve this goal.

\medskip

\noindent \textbf{(1) Replacing cliques with stars.}
We borrow the following idea from \citep{kapralov2021spectral}.
When constructing the associated graph, instead of the clique $C(f)$, we use the edge set $C^*(f)$ of a star graph supported on $f$, choosing one vertex as its center.
Note that $C^*(f)$ is a $2$-spanner of $C(f)$, and thus there is no asymptotic loss in the stretch factor in the resulting spanner.
For any subset $F'\subseteq F$, we let $C^*(F')=\bigcup_{f\in F'}C^*(f)$ and $C^*_+(F') = \biguplus_{f \in F'}C(f)$, where the latter denotes the multiset of edges, for later use.

Let $G^*(H) = (V, E, w)$ be the multi-graph obtained from $H$ by replacing each hyperedge $f$ with $C^*(f)$ and setting $w_e = z_f$ for $e \in C^*(f)$.
Then, a ($\log n$)-spanner in $G^*(H)$ gives a ($2\log n$)-hyperspanner in $H$.
This only doubles the bound on the size of the resulting sparsifier since we just need to increase $\lambda$ to $2\lambda$ to keep the sparsification error within $\eps$.

\medskip

\noindent \textbf{(2)  Eliminating unnecessary hyperedges before constructing hyperspanners.}
Based on (1), our strategy goes as follows.
We first construct $G^*(H)$
and compute a  spanner $T_i$ of size $C_3 n$ in $G^*(H)$ by calling the algorithm in \cref{proposition:graphspanner},
where $C_3$ is the constant given in the statement of \cref{proposition:graphspanner}.
Next, we compute a hyperspanner $S_i$ consisting of hyperedges $f_e$ associated with edges $e \in T_i$ as shown in \cref{lem:hyperspanner}.
After computing $S_i$, we remove edges in $C^*_+(S_i)$ from $G^*(H)$, and repeat this process $\lambda$ times.
This still has room for improvement because $G^*(H)$ contains lots of unnecessary edges.
Specifically, $G^*(H)$ may contain parallel edges between two vertices $u$ and $v$, and we may always suppose that heavier edges are selected in the spanner $T_i$ in each parallel class since a heavier weight implies a shorter distance.
Hence, removing all parallel edges except for the $\lambda C_3 n$ heaviest ones in each parallel class does not change the resulting $\lambda$ spanners $T_1,\dots, T_{\lambda}$.
(Note that, for each $S_i$, $C^*_+(S_i)$ may contain up to $C_3 n$ parallel edges in a parallel class.
Those edges are removed and unavailable when constructing the next spanner $T_{i+1}$.)
For each $\{u,v\}\in C^*(F)$, we can compute $\lambda C_3 n$ heaviest parallel edges in the parallel class between $u$ and $v$ in $O(|F^{uv}|)$ time by using a linear-time selection algorithm,
where $F^{uv}$ denotes the set of hyperedges $f$ with $\{u,v\}\in C^*(f)$.
Since $\sum_{\{u,v\}\in C^*(F)} |F^{uv}|\leq rm$, the total time for this process is $O(rm)$.

The resulting multi-graph has $O(\lambda  n^3)$ edges, and thus a ($\log n$)-spanner in $G^*(H)$ can be computed in
$\tilde{O}(\lambda n^3)$ time by \cref{proposition:graphspanner}.
In total, we can compute a bundle of $\lambda$ disjoint ($2\log n$)-hyperspanners in $H$ in $O(rm+\poly(n,\eps^{-1}))$ time as required.

\vspace{\baselineskip}

With the above implementation, {\UHOnestep}$(H,\lambda)$ runs in $O(rm+\poly(n,\eps^{-1}))$ time.
We now turn to the analysis of the total time complexity.
Recall that {\UHSparsify}$(H,\eps)$ calls {\UHOnestep}$(\tilde{H}_i,\lambda_i)$ for $i$ with $1\leq i\leq i_{\rm end}-1\leq T=\ceil*{ \log_{4/3}\frac{m}{m^*} }$.
Let $m_i$ be the size of $\tilde{H}_i$.
Then, the total time complexity is
$\sum_{i=1}^{T} O(rm_i+\poly(n,\eps^{-1}))$.
We have shown in the proof of \cref{theorem:UHS-main} that
$m_{i+1}\leq \frac{3}{4} m_i$ holds for all $i$ with probability at least $1 - O(T/n^2)$.
Hence, by $m_0=m$, we have $\sum_{i=1}^{T} m_i=O(m)$.
Also, $T=O(n)$ holds.
To conclude, the total time complexity is $O(rm +\poly(n,\eps^{-1}))$ with probability at least $1 - O(1/n)$.
\end{proof}

\subsection{Advantages Inherited from Spanner-based Sparsification}\label{subsec:UHS_advantages_of_spanner}
Spanner-based sparsification algorithms for ordinary graphs are known to enjoy several advantages, such as parallel computability and fault tolerance \citep{koutis2016simple,zhu2019improved}.
We demonstrate that our extension to hypergraphs inherits those advantages.

\subsubsection{Parallel Computability}\label{subsubsec:UHS_parallel}
We discuss the implementation of our algorithm for undirected hypergraphs in the PRAM (parallel random-access machine) CRCW (concurrent read concurrent write) model, one of the basic models for parallel computation.
In this setting, multiple processors can read and write in the same place of shared memory at the same time \citep{jeje1992introduction,reif1993synthesis}.

We below use the fact that we can quickly compute spanners in parallel.
\begin{proposition}[\citep{baswana2007simple}]\label[proposition]{proposition:baswana-spanner}
    Given a multi-graph with $n$ vertices and $m$ edges, in the PRAM CRCW model with $m$ processors, we can compute a ($\log n$)-spanner with at most $\UCd n\log n$ edges in $O(\log^2 n + \log m)$ time with $\tilde{O}(m)$ operations.\footnote{Again, although the paper \citep{baswana2007simple} focuses on the case where graphs are simple, the result is valid for any multi-graph since we can convert it to a simple graph by choosing a shortest edge in each parallel class in $O(\log m)$ time with $m$ processors.}
\end{proposition}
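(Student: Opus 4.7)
The plan is to reduce the multi-graph statement to the corresponding simple-graph result of Baswana--Sen~\citep{baswana2007simple}, following the hint in the footnote. The key observation is that, since edge length is defined as $1/w_e$, for any two vertices $u,v$ joined by several parallel edges, only the one of maximum weight (equivalently, minimum length) can ever be useful inside a spanner: any walk using a heavier-length parallel edge between $u$ and $v$ can be shortened, or at worst left unchanged in stretch, by swapping in the maximum-weight parallel representative. Hence it suffices to compute a $(\log n)$-spanner of the simple graph $G'$ obtained from the input multi-graph by retaining, for each unordered pair $\{u,v\}$, a single edge of largest weight among those connecting $u$ and $v$.

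I would carry out the argument in two phases. First, a parallel deduplication phase: assign each input edge a key equal to its unordered endpoint pair, group edges with the same key using concurrent writes in the CRCW model (for instance, by hashing into $O(n^2)$ buckets or by a parallel integer sort on the keys), and within each bucket compute the maximum-weight representative via tournament-style reduction. With $m$ processors in CRCW PRAM, this phase finishes in $O(\log m)$ time and uses $\tilde{O}(m)$ operations. Second, invoke Baswana--Sen's parallel algorithm on the resulting simple graph $G'$, which has at most $\binom{n}{2}$ edges, to obtain a $(\log n)$-spanner of size at most $\UCd n \log n$ in $O(\log^2 n)$ time and $\tilde{O}(m)$ operations. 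Concatenating the two phases yields total time $O(\log^2 n + \log m)$ and total work $\tilde{O}(m)$, matching the claim.

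The main obstacle is ensuring that the deduplication step simultaneously respects the $O(\log m)$ time bound and the $\tilde{O}(m)$ work bound in CRCW. The tension is that naive bucket reduction could blow up the depth if a single parallel class contains $\Theta(m)$ edges, but standard CRCW primitives (for example, a parallel prefix-based maximum or a balanced binary reduction tree assigned across the $m$ processors) resolve this in $O(\log m)$ depth while keeping the work linear up to logarithmic factors. Once this preprocessing is in place, the rest is a black-box invocation of the known result for simple graphs.
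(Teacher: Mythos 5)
Your proposal is correct and follows the same route as the paper's footnote: reduce the multi-graph to a simple graph by keeping, for each parallel class, the single minimum-length (equivalently, maximum-weight, since length is $1/w_e$) representative, and then invoke the cited Baswana--Sen parallel algorithm on the resulting simple graph. The only substantive fact you need beyond the citation is that a $(\log n)$-spanner $S$ of the deduplicated simple graph $G'$ is automatically a $(\log n)$-spanner of the original multi-graph $G$: any removed parallel edge $e$ has a shorter surviving twin $e_1$ in $G'$, and the path witnessing the stretch bound for $e_1$ in $S$ satisfies the (weaker) stretch bound required by $e$. Your phrasing in terms of ``swapping in the maximum-weight representative'' gestures at this but would be cleaner if stated directly as the stretch-requirement implication; the implementation details of the CRCW deduplication pass (bucketing by unordered endpoint pair, balanced reduction to find the max-weight edge in $O(\log m)$ depth, $\tilde{O}(m)$ work) are a reasonable elaboration of what the paper compresses into ``$O(\log m)$ time with $m$ processors.''
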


We will also use the fact that we can quickly find the largest $k$ elements from a  set of size $m$ by using $m$ processors.
\begin{proposition}[\citep{dietz1994very}]\label[proposition]{proposition:APC-kth-informal}
    In the PRAM CRCW model with $m$ processors, we can compute the $k$ largest elements in a totally ordered set of $m$ items in $O(\log \log m + \log k)$ time and $O(m)$ operations.\footnote{The original time complexity is $O(\log \log m + \log k/\log \log m)$, but we here omit $1/\log \log m$ for simplicity.}
\end{proposition}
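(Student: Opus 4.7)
The target is to show that on a CRCW PRAM with $m$ processors one can extract the $k$ largest items of a totally ordered set of size $m$ in time $O(\log\log m + \log k)$ with $O(m)$ total operations. My plan is to decompose the argument into two phases matching the two summands, combining Shiloach and Vishkin's classical $O(\log\log n)$-time maximum algorithm on $n$ items (with $O(n)$ work) with a CRCW-aware sorting network on a reduced instance.

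First, I would reduce the problem to a much smaller one. I would partition the $m$ inputs into $\Theta(m/k)$ disjoint groups of size $\Theta(k)$ and, in parallel, compute the maximum of each group by accelerated cascading; this costs $O(\log\log k) \le O(\log\log m)$ time and $O(m)$ total work. Then I would run a constant number of further maximum-finding rounds on the $\Theta(m/k)$ group maxima to identify the $k$ groups whose maxima are eligible for the top $k$. Since any true top-$k$ item must lie in one of these $k$ groups, the candidate set shrinks to $O(k^2)$ survivors after a further $O(\log\log m)$ time, still within linear work.

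Second, I would extract the top $k$ from the $O(k^2)$ survivors. In the regime $k \le \sqrt{m}$, a CRCW sorting network of depth $O(\log k)$ can be run on the survivors within the remaining $O(m)$ work budget and returns the top $k$ items directly. In the regime $k > \sqrt{m}$, I would instead perform a direct parallel selection on all $m$ inputs: the time $O(\log m)$ is absorbed into $O(\log k)$ since $\log m = O(\log k)$ in this regime, and the work can be held to $O(m)$ using a CRCW bucket-based selection.

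The main obstacle is hitting the linear-work bound $O(m)$ and the tight time bound $O(\log\log m + \log k)$ simultaneously: a generic CRCW sort pays $\Theta(\log k)$ additional work per item and violates linear work, while a generic parallel-extraction loop pays $\Theta(\log m)$ time per item and violates the additive $O(\log k)$ time bound. Dietz's construction resolves this tension via a careful amortization of concurrent writes across batches of size $\Theta(\log\log m)$, and any clean reconstruction of the proof would have to reproduce that batching argument; the paper at hand, however, uses the statement only as a black box.
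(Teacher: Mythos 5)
The paper does not prove this proposition; it is quoted as a black box from Dietz (1994), so there is no internal proof to compare your sketch against. Your last paragraph already concedes this, and your write-up is best read as an attempted reconstruction rather than a proof.

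That said, the reconstruction has two concrete gaps that would need to be filled. First, the step ``run a constant number of further maximum-finding rounds on the $\Theta(m/k)$ group maxima to identify the $k$ groups whose maxima are eligible for the top $k$'' is circular: identifying the $k$ largest among $m/k$ group maxima is itself an instance of top-$k$ selection (on a smaller input), and it is not solvable in $O(1)$ or even $O(\log\log m)$ rounds of plain maximum-finding. If one instead peels off the maximum $k$ times, the time becomes $\Theta(k\log\log m)$, far from the $O(\log\log m + \log k)$ target. Second, the work bound in the final phase does not close: selecting or sorting the top $k$ out of $\Theta(k^2)$ survivors via any comparison network (AKS or a selection network) uses $\Omega(k^2\log k)$ comparisons, which exceeds $O(m)$ once $k$ is within a polylogarithmic factor of $\sqrt{m}$, so the split at $k=\sqrt{m}$ does not rescue linear work. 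These are exactly the tensions you flag at the end, and they are precisely what Dietz's batching/amortization argument is designed to resolve; without reproducing that mechanism, the sketch does not establish the stated bounds. Since the paper treats \cref{proposition:APC-kth-informal} purely as a citation, the appropriate move here is simply to cite \citep{dietz1994very} rather than to supply a proof.
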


Our goal in this section is to prove the following theorem.
\begin{theorem}\label[theorem]{theorem:UHS-CRCW}
    Let $H=(V,F,z)$ be an undirected hypergraph with $|V| = n$, $|F| = m$, and the rank $r$, and let $\eps\in(0,1)$.
    In the PRAM CREW model with $n^2m$ processors, we can compute an $\eps$-spectral sparsifier of $H$ with $O(r^3n \log^3 n/\eps^2)$ hyperedges with probability at least $1 - \Otl(1/n)$ in $O\prn*{\frac{r^3}{\eps^2}\log^7 m \cdot \prn*{\log^2 n + \log\frac{1}{\eps}}}$ time.
\end{theorem}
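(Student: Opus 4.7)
The plan is to parallelize {\UHSparsify} by implementing each call to {\UHOnestep}$(\Htl_i,\lambda_i)$ in the PRAM model. Since the random sampling of the remaining hyperedges is embarrassingly parallel (each $f\in F\setminus S$ is treated independently), the heart of the matter is parallelizing the construction of a bundle of $\lambda_i$ disjoint $(\log n)$-hyperspanners. I would directly adapt the two speed-up ideas used in the proof of \cref{theorem:UHS-Complexity}: first, replace each hyperedge's clique $C(f)$ by a star $C^*(f)$ when forming the associated multi-graph, which cuts its edge count to $O(rm_i)$; second, before any spanner is extracted, prune each parallel class $\{u,v\}$ down to the $\lambda_i\UCa n$ heaviest parallel edges, since the greedy spanner procedure never prefers a lighter parallel edge. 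The pruning can be carried out in parallel across all $O(n^2)$ parallel classes by invoking \cref{proposition:APC-kth-informal} on each class, which uses $O(rm_i)$ operations in total and $O(\log\log(rm_i)+\log(\lambda_i n))$ parallel time with $n^2m$ processors.

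Next I would compute the hyperspanners themselves. After pruning, the graph has $O(\lambda_i n^3)$ edges, and \cref{proposition:baswana-spanner} yields a $(\log n)$-spanner in $O(\log^2 n + \log(\lambda_i n))$ time using the available processors. The $\lambda_i$ hyperspanners must be extracted sequentially (each removes the edges chosen by its predecessors), which costs $\Otl\bigl(\lambda_i(\log^2 n + \log(\lambda_i n))\bigr)$ parallel time per outer iteration; the conversion of a graph spanner back to a hyperspanner via \cref{lem:hyperspanner} and the subsequent edge removal are $O(1)$-depth operations. As a sanity check, the correctness claim (an $\eps$-spectral sparsifier of the stated size with probability $1-\Otl(1/n)$) is inherited verbatim from \cref{theorem:UHS-main}, because the parallel implementation produces exactly the same output distribution as the sequential one; only the running time changes.

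The last step is to sum over the outer iterations $i=0,\dots,\Tend-1$. Using $\lambda_i = O\bigl(r^3\log^2 m_i\cdot \log_{4/3}^4(m_i/m^*)/\eps^2\bigr)$ and $\log_{4/3}(m_i/m^*) = O(\Tend-i)$, the same telescoping estimate as in the sequential case gives $\sum_i \lambda_i = O\bigl(r^3\log^7 m/\eps^2\bigr)$. Multiplying by the per-spanner cost $O(\log^2 n + \log(\lambda_i n))$ and observing that $\log(\lambda_i n) = O(\log n + \log r + \log(1/\eps))$ yields the claimed bound $O\bigl(\tfrac{r^3}{\eps^2}\log^7 m\cdot (\log^2 n + \log\tfrac{1}{\eps})\bigr)$.

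The main obstacle is the per-iteration time accounting rather than any new algorithmic idea: since hyperspanner extraction is inherently sequential within a single {\UHOnestep}, the $\lambda_i$ factor is unavoidable, and one must lean on the telescoping of $\log_{4/3}^4(m_i/m^*)$ across $\Tend$ outer iterations to reach $\log^7 m$ rather than a worse polylog bound. A secondary technical point is verifying that the parallel pruning preserves the spanner distribution; this reduces to the observation (already implicit in \cref{theorem:UHS-Complexity}) that greedy spanner algorithms in a multi-graph only ever select the shortest edge in each parallel class, so keeping the top $\lambda_i\UCa n$ heaviest edges per class is loss-free over $\lambda_i$ consecutive spanner extractions.
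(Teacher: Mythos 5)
Your overall plan---parallelize \UHOnestep{} by the star-graph trick and per-class pruning, then invoke a parallel spanner routine and sum over outer iterations---matches the paper's argument, and your telescoping estimate for $\sum_i \lambda_i$ is a valid (slightly sharper) alternative to the paper's cruder $T\cdot\max_i\lambda_i$ bound. However, there is one genuine gap and one omission.

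The gap is in the sentence claiming that the parallel implementation ``produces exactly the same output distribution as the sequential one,'' so that the size bound is ``inherited verbatim from \cref{theorem:UHS-main}.'' This is false: \cref{theorem:UHS-main} uses the sequential spanner routine of \cref{proposition:graphspanner}, which returns a $(\log n)$-spanner with $\UCa n$ edges, while the parallel routine of \cref{proposition:baswana-spanner} returns one with $\UCd n\log n$ edges---an extra $\log n$ factor. This inflates the size of the bundle kept in each iteration, which in turn forces the target size $m^*$ (and hence the final sparsifier size) up by a $\log n$ factor; that is precisely why \cref{theorem:UHS-CRCW} states $O(r^3 n\log^3 n/\eps^2)$ rather than the $O(r^3 n\log^2 n/\eps^2)$ of \cref{theorem:UHS-main}. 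Your reasoning would incorrectly produce the smaller bound, and it also affects the pruning threshold, which should be $\lambda_i\UCd n\log n$ heaviest edges per parallel class rather than the $\lambda_i\UCa n$ you wrote (since each parallel hyperspanner $S_j$ can contribute up to $\UCd n\log n$ parallel edges to $C^*_+(S_j)$).

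The omission is that \cref{theorem:UHS-CRCW} is stated for an arbitrary hypergraph of rank $r$, whereas \UHSparsify{} assumes all hyperedges have size in $(r/2,r]$. You need to apply the decomposition of \cref{theorem:UHS-limitr}: partition $F$ into $\lceil\log_2 r\rceil$ buckets by hyperedge size, run the parallel sparsifier on each bucket with the rescaled accuracy $\eps\sqrt{2^{i-1}/r}$ using proportionally many processors, and take the union. The paper carries this out explicitly at the end of its proof; without it, the theorem as stated is not established. A related small point: you should also say how $G^*(H)$ and the lists $F^{uv}$ are built in parallel---the paper does this with a parallel prefix-sum over $n^2$ pairs and $m$ hyperedges, which is where the $n^2m$ processors are actually consumed.
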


This is an NC algorithm if we regard $r$ and $\eps$ as constants.
Developing an NC algorithm for general cases remains open.

We also mention that the sparsifier size in \cref{theorem:UHS-CRCW} has an extra $\log n$ factor compared with the bound in \cref{theorem:UHS-main}.
This extra factor comes from the $\log n$ increase in the size of a ($\log n$)-spanner in the existing parallel algorithm (cf. \cref{proposition:graphspanner,proposition:baswana-spanner}).

\begin{proof}[Proof of \cref{theorem:UHS-CRCW}]
We first discuss the time complexity, assuming that an input hypergraph consists of hyperedges of size in $(r/2, r]$.
At the end of the proof, we show that we can use the trick in \cref{theorem:UHS-limitr} without increasing the asymptotic time complexity.
Since {\UHSparsify}$(H,\eps)$ sequentially calls {\UHOnestep}, we below focus on how to implement {\UHOnestep} in parallel.

Let us consider {\UHOnestep}$(H,\lambda)$.
Since it is easy to sample each hyperedge in parallel, the only nontrivial part is the construction of a bundle of disjoint hyperspanners in the first line in {\UHOnestep}$(H,\lambda)$.
We show that this can be done in $O\prn*{\log m+\lambda \prn*{\log^2 n + \log\frac{1}{\eps}}}$ time using $O(n^2 m)$ processors.

Recall that our original sequential algorithm in \cref{subsec:UHS_total_complexity} consists of the following three steps:
(1) construct an undirected multi-graph $G^*(H)$ by replacing each hyperedge $f$ with a star $C^*(f)$,
(2) remove edges in $G^*(H)$ except for the $\lambda C_3 n$ heaviest ones in each parallel class, and
(3) greedily compute $\lambda$ ($\log n$)-spanners in the resulting multi-graph.

Regarding step (1), we first compute $F^{uv}=\Set*{f}{\{u,v\}\in C^*(f)} \subseteq F$ for each $\{u,v\}$.
This can be done with $n^2m$ processors in $O(\log m)$ time by using the parallel prefix sum computation~\citep{ladner1980parallel,cole1989faster}.
(Specifically, for each $\{u,v\}$, the algorithm first prepares an array of size $m$ that represents whether $u$ and $v$ are contained in $C^*(f)$ for each $f\in F$,
next computes the order of the hyperedges containing $\{u,v\}$ by the prefix sum computation in the array,
and finally computes an array of size $|F^{uv}|$ that stores only the elements in $F^{uv}$ based on the order.)
Each element in $F^{uv}$ can be identified with an edge $e \in G^*(H)$ associated with a hyperedge $f_e$,
and hence the collection of $F^{uv}$ for $\{u,v\}\in C^*(F)$ represents $G^*(H)$.

Step (2) can be done by applying \cref{proposition:APC-kth-informal} to each $\{u,v\}\in C^*(F)$ in parallel.
The resulting multi-graph has at most $\lambda\UCd n\log n$ parallel edges between each pair of $u$ and $v$.
This takes $O(\log \log m+\log (\lambda n) )$ time and $O(rm)$ operations using $rm$ processors since $\sum_{\{u,v\} \in C^*(F)} |F^{uv}| \leq  rm$.

After (2), the resulting multi-graph has at most $\lambda\UCd n^3\log n$ edges (which is also bounded by $\sum_{\{u,v\} \in C^*(F)} |F^{uv}| \leq rm$).
Thus, step (3) can be done by calling the parallel spanner-construction algorithm $\lambda$ times, where each call takes $O(\log^2 n + \log\lambda)$ time using $rm$ processors by \cref{proposition:baswana-spanner}.
Therefore, step (3) takes $O(\lambda (\log^2 n + \log \lambda)) \lesssim O\prn*{\lambda \prn*{\log^2 n + \log\frac{1}{\eps}}}$ since $\lambda=O\prn*{\frac{r^3}{\eps^2}\log^6 m}$.

Summarizing the above discussion, steps (1)--(3) take $O\prn*{\log m+\lambda \prn*{\log^2 n + \log\frac{1}{\eps}}}$ time with $n^2m$ processors, which completes the analysis of  {\UHOnestep}$(H,\lambda)$.

We then bound the time complexity of {\UHSparsify}$(H,\eps)$. Our algorithm sequentially calls
{\UHOnestep}$(\tilde{H}_i,\lambda_i)$ for $i$ with $0\leq i\leq i_{\rm end}-1\leq T=\ceil*{ \log_{4/3}\frac{m}{m^*} }$.
Hence, the time complexity of our parallel algorithm is
$O\prn*{\sum_{i=0}^T \left( \log m_i+\lambda_i \prn*{\log^2 n + \log\frac{1}{\eps_i}} \right)}$,
which is $O\prn*{\frac{r^3}{\eps^2}\log^7 m \cdot \prn*{\log^2 n + \log\frac{1}{\eps}}}$ by $\lambda_i=O\prn*{\frac{r^3}{\eps^2}\log^6 m}$ and $T=O(\log m)$.

Finally, we use \cref{theorem:UHS-limitr} to deal with a general input hypergraph $H = (V, F, z)$ of rank $r$.
Suppose $F$ to be partitioned into $\set*{F_i}_{i=1}^{\ceil*{\log_2 r}}$, where each $f \in F_i$ satisfies $|f| \in  (2^{i-1}, 2^i]$ for $i = 1,\dots,\ceil*{\log_2 r}$.
We assign $|F_i|n^2$ processors to each hypergraph with the hyperedge set $F_i$, and compute $\eps\sqrt{\frac{2^{i-1}}{r}}$-spectral sparsifiers for $i = 1,\dots,\ceil*{\log_2 r}$ in parallel.
From the above discussion, each $\eps\sqrt{\frac{2^{i-1}}{r}}$-spectral sparsifier can be computed in
$O\prn*{\frac{2^{2i}r}{\eps^2}\log^7 |F_i| \cdot \prn*{\log^2 n + \log\frac{\sqrt{r}}{\eps}}} \lesssim O\prn*{\frac{r^3}{\eps^2}\log^7 m \cdot \prn*{\log^2 n + \log\frac{1}{\eps}}}$ time.
Therefore, the same asymptotic time complexity bound holds for general input hypergraphs of rank $r$.
\end{proof}

The above parallel algorithm uses $n^2 m$ processors only in the construction of $F^{uv}$, and the remaining operations use only $rm$ processors.
Moreover, we can readily construct $F^{uv}$ in each call to {\UHOnestep}$(\tilde{H}_i,\lambda_i)$ if we know $F^{uv}$ in the previous step, because $\tilde{H}_i$ is obtained from $\tilde{H}_{i-1}$ by removing unsampled hyperedges.
Using $rm_{i-1}$ processors, this can be done in $O(\log m_{i-1})$ time and with $O(rm_{i-1})$ operations by \cref{proposition:APC-kth-informal}.
This implies that we need $n^2 m$ processors only in the construction $F^{uv}$ at the beginning, and the number of processors can be reduced to $rm$ if the lists of $F^{uv}$ are given as input.
Under this assumption, the total number of operations, $O(rm)$, is work optimal.

\subsubsection{Fault Tolerance}\label{subsubsec:UHS_fault_tolerant}

Another advantage of the spanner-based sparsification is the fault-tolerance \citep{zhu2019improved}.
For constructing spanners, many fault-tolerant algorithms have been studied \citep{levcopoulos1998efficient,levcopoulos2002improved,chechik2010fault,dinitz2011fault,bodwin2018optimal}, and Zhu~et~al.~\citep{zhu2019improved} has recently applied the idea to graph sparsification.
Below are the definition of fault-tolerant graph sparsification in \citep{zhu2019improved} and its natural extension to hypergraphs.

\begin{definition}
    Let $G=(V,E,w)$ be a graph and $k$ be a positive integer.
    We say a graph-sparsification algorithm is \textit{weakly $k$-fault-tolerant} if for any $\eps \in (0,1)$, the algorithm outputs a re-weighted subgraph $\Gtl=(V,\Etl,\wtl)$ such that, for any edge deletion $J \subseteq E$ with $|J|\leq k$, $\Gtl'=(V,\Etl\setminus J,\wtl)$ is an $\eps$-spectral sparsifier of $G'=(V,E\setminus J,w)$ with high probability.
\end{definition}

\begin{definition}\label[definition]{definition:k-ft-hyper}
    Let $H=(V,F,z)$ be a hypergraph and $k$ be a positive integer.
    We say a hypergraph-sparsification algorithm is \textit{weakly $k$-fault-tolerant} if for any $\eps \in (0,1)$, the algorithm outputs a re-weighted sub-hypergraph $\Htl=(V,\Ftl,\ztl)$ such that, for any hyperedge deletion $J \subseteq F$ with $|J|\leq k$, $\Htl'=(V,\Ftl\setminus J,\ztl)$ is an $\eps$-spectral sparsifier of $H'=(V,F\setminus J,z)$ with high probability.
\end{definition}

In the above definitions, we use the term ``weakly'' to emphasize that the high-probability bounds are not uniform over all possible edge/hyperedge deletions; neither the result of \citep{zhu2019improved} nor ours offer such uniform bounds.
By contrast, fault-tolerant algorithms for constructing spanners usually enjoy uniform high-probability bounds over all possible edge deletions.
Indeed, combining our result and the analysis of~\citep{bansal2019new}, it would not difficult to obtain a uniformly $k$-fault-tolerant algorithm that returns an $\eps$-spectral sparsifier of size $O(k nr^3\log^2n/\eps^2)$, but the size bound is worse than that of \cref{theorem:FT}.
Improving this uniform bound would be an interesting future direction.

By extending the fault-tolerant spanner-based graph sparsification algorithm~\citep{zhu2019improved} to hypergraphs, we can obtain an algorithm with the following guarantee.
\begin{theorem}\label[theorem]{theorem:FT}
    There exists a weakly $k$-fault-tolerant algorithm for hypergraph sparsification such that for any hypergraph with $|V| = n$ the rank $r$ and $\eps\in (0,1)$, the output hypergraph has $O(kn + nr^3\log^2n/\eps^2)$ hyperedges with probability at least $1-O(1/n)$.
\end{theorem}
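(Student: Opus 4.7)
The plan is to extend the spanner-based iterative algorithm \UHSparsify{} to the fault-tolerant setting by inflating the reserved bundle of hyperspanners at each iteration, in the spirit of Zhu~et~al.'s fault-tolerant graph sparsification. Concretely, I would modify \UHOnestep{} so that on input $(H,\lambda)$ it deterministically keeps a bundle of $\lambda + k$ pairwise hyperedge-disjoint $(\log n)$-hyperspanners of $H$ (produced by the peel-and-repeat construction of \Cref{subsec:UHS-hyperspanners} iterated $\lambda+k$ times instead of $\lambda$), and samples each remaining hyperedge with probability $1/2$. Call the modified routine FT-Onestep and the resulting iterative procedure FT-Sparsify, with its outer loop identical to \cref{alg:UHS-iterative}.

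For the one-step analysis, I would establish the following analog of \cref{lemma:UHS-onestep}: for every fixed $J\subseteq F$ with $|J|\le k$, the output $\Htl$ of FT-Onestep satisfies that $(V,\Ftl\setminus J,\ztl)$ is an $\eps$-spectral sparsifier of $(V,F\setminus J,z)$ with probability at least $1-O(1/n^2)$. The key combinatorial observation is that since the bundle is hyperedge-disjoint, deleting $k$ hyperedges can touch at most $k$ of its members, so at least $\lambda$ of the $\lambda+k$ hyperspanners survive intact as $(\log n)$-hyperspanners of $(V,F\setminus J,z)$. Consequently, the argument of \cref{lemma:UHS-onestep-EF} goes through verbatim on the associated graph of the reduced hypergraph and yields $z_f R_G(u,v) \le \frac{4\log n}{r\lambda}$ for every surviving non-bundle hyperedge $f$ and $\{u,v\}\in C(f)$. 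The sampling probability $1/2$ then meets the sufficient condition of \cref{prop:bansal}, giving the sparsification guarantee, while the size bound comes from replacing $\lambda\UCa n$ by $(\lambda+k)\UCa n$ in \cref{lemma:UHS-boundedgesize}.

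To conclude, I would compose the one-step guarantee across the iterations exactly as in the proof of \cref{theorem:UHS-main}. Fix any $J\subseteq F$ with $|J|\le k$; at iteration $i$ we apply the above guarantee with the fault set restricted to $J\cap\Ftl_i$, so by a union bound over $T=O(\log m)$ iterations the one-step guarantee holds for all iterations simultaneously with probability $1-O(1/n)$. The multiplicative telescoping of \cref{eq:accumulatederrors-1,eq:accumulatederrors-3} then yields that $(V,\Ftl\setminus J,\ztl)$ is an $\eps$-spectral sparsifier of $(V,F\setminus J,z)$. The geometric-decrease argument of \cref{lemma:UH-threequarter}, with the deterministic term enlarged to $(\lambda_i+k)\UCa n$, terminates once $m_i = O\prn*{(\lambda_{\Tend}+k)n} = O\prn*{kn + nr^3\log^2 n/\eps^2}$, matching the claimed bound. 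The main obstacle is ensuring that the peeling procedure really does produce pairwise hyperedge-disjoint hyperspanners when iterated $\lambda+k$ times; this is immediate from the construction, since each new hyperspanner is extracted from the associated graph after removing the hyperedges selected so far, guaranteeing that a single deleted hyperedge can belong to at most one hyperspanner in the bundle.
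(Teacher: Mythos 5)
Your proposal is correct and takes essentially the same approach as the paper's own proof: inflate the reserved bundle from $\lambda_i$ to $\lambda_i+k$ disjoint $(\log n)$-hyperspanners so that after any deletion of $|J|\le k$ hyperedges at least $\lambda_i$ hyperspanners in the bundle remain untouched (by disjointness), run the argument of \cref{lemma:UHS-onestep-EF} and \cref{prop:bansal} on the post-deletion hypergraph, compose across iterations as in \cref{theorem:UHS-main}, and solve the recurrence $m_{i+1}\le \tfrac34 m_i + (\lambda_i+k)\UCa n$ to get the $O(kn + nr^3\log^2 n/\eps^2)$ size bound. One small imprecision you could tighten: a surviving $S_j$ is a $(\log n)$-hyperspanner of $(V, F\setminus\bigcup_{i<j}S_i, z)$, not literally of $(V,F\setminus J, z)$, but since any $f\in(F\setminus J)\setminus S$ lies in $F\setminus\bigcup_{i<j}S_i$ for every $j$, the $J$-avoiding $S_j$ still supply $\lambda$ disjoint low-stretch $u$--$v$ hyperpaths inside $F\setminus J$, which is all the effective-resistance bound needs.
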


\begin{proof}
    Let $H = (V, F, z)$ be an input hypergraph with $|V| = n$ and $|F| = m$.
    Let $J \subseteq F$ be a hyperedge-deletion with $|J| \le k$.
    Following the notation used in {\UHSparsify}$(H, \eps)$, we let
    $(\Htl_i, \lambda_i) = ((V, \Ftl_{i}, \ztl_{i}), \lambda_i)$ and $\Htl_{i+1} = (V, \Ftl_{i+1}, \ztl_{i+1})$ be the input and output of {\UHOnestep}, respectively, in the $i$th iteration.
    We also let $m^* = nr^3\log^2n/\eps^2$.

    To achieve the fault tolerance, in each $i$th iteration, we select $\lambda_i + k$ disjoint hyperspanners, instead of $\lambda_i$ disjoint ones.
    Then, even if hyperedges in $J$ are deleted, there remain $\lambda_i$ hyperspanners.
    Therefore, as in \cref{subsec:UHS_proof_onestep}, we can show that $\Htl_{i+1}' = (V, \Ftl_{i+1} \setminus J, \ztl_{i+1})$ is an $\eps_i$-spectral sparsifier of $\Htl_{i}' = (V, \Ftl_{i} \setminus J, \ztl_{i})$.
    Consequently, the resulting $\Htl_{\Tend}'$ is an $\eps$-spectral sparsifier of $H' = (V, F\setminus J, z)$, as in the proof of \cref{theorem:UHS-main} in \cref{subsection:UHS_proof_iterative}.

    Below we analyze the increase in the size caused by selecting additional $k$ hyperspanners in each iteration.
    If $m_i = |\Htl_i| \le \UCc\frac{nr^3}{\eps^2}\log^2 n$ holds for some $i < T$, the output is already small as desired.
    If not, due to the presence of additional $k$ hyperspanners,
    $m_i$ decreases as $m_{i+1} \le \frac34 m_i + \UCa k n$ for $i = 0,\dots, T- 1$, as implied by \cref{lemma:UH-threequarter}.
    By solving this, we obtain
    \[
        m_T \le 4\UCa k n + \prn*{\frac34}^T \prn*{m - 4\UCa k n}
        \le
        4\UCa k n + \prn*{\frac34}^{\log_{4/3}\frac{m}{m^*}} m
        =
        O\prn*{kn + \frac{nr^3}{\eps^2}\log^2n}.
    \]
    Thus, the sparsifier size is bounded as stated in the theorem.
\end{proof}

Setting $r=2$, we obtain a size bound for ordinary graphs, which improves that of~\citep{zhu2019improved} by ${\rm poly}(\log n)$ factors.
This improvement is due to our adaptive choice of $\eps_i$ over the iterations and sharper analysis, whereas the $\eps$ value is fixed in the iterative algorithm of~\citep{zhu2019improved}.

\section*{Acknowledgements}
This work was supported by JST PRESTO Grant Number JPMJPR2126, JST ERATO Grant Number JPMJER1903, and JSPS KAKENHI Grant Number 20H05961.

\printbibliography[heading=bibintoc]

\end{document}